\documentclass[11pt, cm, authoryear]{elegantpaper}

\usepackage{amsmath}
\usepackage{bm}       
\usepackage{array}    
\usepackage{makecell} 
\usepackage{multirow} 
\usepackage{colortbl} 
\usepackage{diagbox}  
\usepackage{xfrac}    



\newcommand{\Ebb}{\mathbb{E}}

\newcommand{\Rbb}{\mathbb{R}}

\newcommand{\Zbb}{\mathbb{Z}}

\newcommand{\Ecal}{\mathcal{E}}
\newcommand{\Fcal}{\mathcal{F}}

\newcommand{\Ical}{\mathcal{I}}

\newcommand{\Mcal}{\mathcal{M}}

\newcommand{\Pcal}{\mathcal{P}}

\newcommand{\Tcal}{\mathcal{T}}

\newcommand{\st}{\mathrm{s.t.}}


\DeclareMathOperator{\diff}{d\!} 
\DeclareMathOperator{\argmax}{argmax}



\usepackage{tikz}
\usepackage{tikz-cd}
\usetikzlibrary{
	calc,
	decorations.pathmorphing,
	matrix,arrows,
	positioning,
	shapes.geometric
}
\usepackage{pgfplots}
\pgfplotsset{compat=newest}





\newcommand{\fun}{f \colon 2^\Omega \to \Rbb_+}
\newcommand{\gfun}{g \colon 2^\Omega \to \Rbb_+}
\newcommand{\lfun}{\ell \colon 2^\Omega \to \Rbb_+}

\newcommand{\xvec}{\boldsymbol{x}}
\newcommand{\yvec}{\boldsymbol{y}}
\newcommand{\zvec}{\boldsymbol{z}}
\newcommand{\lvec}{\boldsymbol{\ell}}
\newcommand{\vvec}{\boldsymbol{v}}
\newcommand{\wvec}{\boldsymbol{w}}

\newcommand{\gset}{[0,1]^\Omega}
\newcommand{\ind}{\mathbf{1}}
\newcommand{\mcg}{\textsc{Measured Continuous Greedy}}
\newcommand{\drg}{\textsc{Distorted Random Greedy}}
\newcommand{\drsg}{\textsc{Distorted Random Sampling Greedy}}
\newcommand{\udrg}{\textsc{Unconstrained Distorted Random Greedy}}

\newcommand{\innprod}[2]{\langle #1,#2 \rangle}

\newcommand{\ai}{ \left( 1- \frac1k \right)^{k-i} }
\newcommand{\bi}{ \left( 1- \frac1k \right)^{k-(i+1)} }
\newcommand{\ain}{ \left( 1- \frac1n \right)^{n-i} }
\newcommand{\bin}{ \left( 1- \frac1n \right)^{n-(i+1)} }

\DeclareMathOperator{\pr}{Pr}

\usepackage[ruled,linesnumbered]{algorithm2e}


\title{Regularized Non-monotone Submodular Maximization}



\author{Cheng Lu \and Wenguo Yang\thanks{Email: yangwg@ucas.ac.cn} \and Suixiang Gao }
\institute{School of Mathematical Sciences, University of Chinese Academy of Sciences}

\date{\today}


\begin{document}

\maketitle

\begin{abstract}
In this paper, we present a thorough study of maximizing a regularized non-monotone submodular function subject to various constraints, i.e., $\max \{ g(A) - \ell(A) : A \in \Fcal \}$, where $\gfun$ is a non-monotone submodular function, $\lfun$ is a normalized modular function and $\Fcal$ is the constraint set. Though the objective function $f := g - \ell$ is still submodular, the fact that $f$ could potentially take on negative values prevents the existing methods for submodular maximization from providing a constant approximation ratio for the regularized submodular maximization problem. To overcome the obstacle, we propose several algorithms which can provide a relatively weak approximation guarantee for maximizing regularized non-monotone submodular functions. More specifically, we propose a continuous greedy algorithm for the relaxation of maximizing $g - \ell$ subject to a matroid constraint. Then, the pipage rounding procedure (\cite{vondrak2013symmetry}) can produce an integral solution $S$ such that $\Ebb [g(S) - \ell(S)] \geq e^{-1}g(OPT) - \ell(OPT) - O(\epsilon)$. Moreover, we present a much faster algorithm for maximizing $g - \ell$ subject to a cardinality constraint, which can output a solution $S$ with $\Ebb [g(S) - \ell(S)] \geq (e^{-1} - \epsilon) g(OPT) - \ell(OPT)$ using $O(\frac{n}{\epsilon^2} \ln \frac 1\epsilon)$ value oracle queries. We also consider the unconstrained maximization problem and give an algorithm which can return a solution $S$ with $\Ebb [g(S) - \ell(S)] \geq e^{-1} g(OPT) - \ell(OPT)$ using $O(n)$ value oracle queries.

\keywords{Submodular Maximization; Regularized; Continous Greedy; Random Greedy; Sampling.}
\end{abstract}

\section{Introduction}
Submodular functions arise naturally from combinatorial optimization as several combinatorial functions turn out to be submodular. A few example of such functions include rank functions of matroids, cut functions of graphs and di-graphs, entropy functions and covering functions. Thus, combinatorial optimization problems with a submodular objective funtion are essentially submodular optimization problems. Unlike minimization of submodular functions which can be done in polynomial time (\cite{grotschel1981ellipsoid,schrijver2000combinatorial,iwata2001combinatorial}), submodular maximization problems are usually NP-hard, for many classes of submodular functions, such as weighted coverage (\cite{feige1998threshold}) or mutual information (\cite{krause2012near}).

In general, submodular maximization problem can be formulated as follows.
\begin{equation} \label{submodular maximization problem}
	\begin{alignedat}{1}
		\max \quad & g(A) \\ 
		\st \quad & A \in \Fcal \subseteq 2^\Omega,
	\end{alignedat}
\end{equation}
where $\gfun$ is a submodular funtion, and $\Fcal$ is the family of subsets of $\Omega$ obeying the constraint. This problem not only captures many combinatorial optimization problems including Max-$k$-Coverage (\cite{khuller1999budgeted}), Max-Bisection (\cite{frieze1997improved}) and Max-Cut (\cite{ageev20010}), but also has wide applications in viral marketing (\cite{kempe2003maximizing}), sensor placement (\cite{krause2008near}), and machine learning (\cite{krause2007near,lin2010multi}).

However, there are many problems intending to maximize a combination of a submodular function and a modular function, which can be formulated as follows.
\begin{equation} \label{regularized submodular maximization}
	\begin{alignedat}{1}
		\max \quad & g(A) - \ell(A) \\ 
		\st \quad & A \in \Fcal \subseteq 2^\Omega,
	\end{alignedat}
\end{equation}
where $\gfun$ is a submodular funtion, $\lfun$ is a modular function and $\Fcal$ is the constraint set. Problem~(\ref{regularized submodular maximization}) has various interpretations. For example, the modular function $\ell$ can represent  a penalty or a regularizer term to alleviate over-fitting in machine learning. Another typical scenario is when $g$ models the revenue associated to a particular feasible set $A$ and $\ell$ represents a cost associated to $A$. Then, problem~\ref{regularized submodular maximization} corresponds to maximizing profits.

\subsection{Related works}
Various studies have been devoted to submodular maximization problems. \cite{nemhauser1978analysis} proposed a greedy algorithm for solving the monotone submodular maximization problem with a matroid constraint, i.e.,
\begin{equation} \label{mono sub max matroid}
	\begin{alignedat}{1}
		\max \quad & g(A) \\ 
		\st \quad & A \in \Ical \subseteq 2^\Omega,
	\end{alignedat}
\end{equation}
where $\gfun$ is a monotone non-decreasing submodular funtion, and $\Ical$ is the family of independent sets of a matroid $\Mcal = (\Omega, \Ical)$. When $\Mcal$ is a uniform matroid (i.e., cardinality constraint), \cite{nemhauser1978analysis} showed that the greedy algorithm they proposed can achieve an $(1 - \sfrac 1e)$ approximation ratio. When $\Mcal$ is a general matroid, they proved the greedy algorithm can return an $\frac 12$-approximate solution. Later, \cite{nemhauser1978best} proved that the $(1 - \sfrac 1e)$ approximation ratio is tight for maximizing a monotone submodular function subject to a cardinality constraint.

Nevertheless, it remains an open problem for about three decades that whether there exists a polynomial-time algorithm which can achieve an $(1 - \sfrac 1e)$ approximation ratio for problem~(\ref{mono sub max matroid}) with an arbitrary matroid constraint. Finally, \cite{calinescu2011maximizing} gave an affirmative answer. They presented the \emph{continuous greedy algorithm} which can achieve the $(1 - \sfrac 1e)$ approximation guarantee for the more general matroid constraint. Specifically, their algorithm consists of two main components, i.e., a relaxation solver and a rounding procedure. Firstly, they approximately solve a relaxation of problem~(\ref{mono sub max matroid}), i.e.,
\begin{equation*} 
	\begin{alignedat}{1}
		\max \quad & G(\xvec) \\ 
		\st \quad & \xvec \in \Pcal \subseteq \gset ,
	\end{alignedat}
\end{equation*}
where $G$ is the multilinear extension of submodular function $\gfun$, and $\Pcal$ is the matroid polytope corresponding to matroid $\Mcal = (\Omega, \Ical)$. Then, they utilize a rounding procedure called \emph{pipage rounding} to obtain an integral solution, which do not lose anything in the objective compared with the fractional solution. 

The results mentioned above are all about submodular maximization with a monotone objective. However, when one considers submodular objectives which are not monotone, less is known. \cite{buchbinder2014submodular} proposed an $e^{-1}$-approximation algorithm called \emph{random greedy} for maximizing a non-monotone submodular function subject to a cardinality constraint. Furthermore, they also presented a more involved algorithm which can achieve an improved approximation ratio of $e^{-1}+0.004$ for the same problem. On the hardness side, \cite{gharan2011submodular} proved that there is no algorithm which can achieve an approximation ratio better than $0.491$ for the problem using a polynomial number of value oracle queries. By modifying the continuous greedy algorithm proposed by \cite{calinescu2011maximizing}, \cite{feldman2011unified} proposed an algorithm called \emph{measured continuous greedy} for maximizing a general submodular function subject to an arbitrary matroid. They showed that the measured continuous greedy can ahieve an approximation ratio of $e^{-1} - o(1)$ for the non-monotone case. On the inapproximability side, \cite{gharan2011submodular} also proved that there is no algorithm which can achieve an approximation ratio better than $0.478$ when the constraint is a partition matroid using a polynomial number of value oracle queries.

Another improtant non-monotone submodular maximization problem is the unconstrained submodular maximization, i.e.,
\begin{equation*}
	\max_{A \subseteq \Omega} \quad g(A),
\end{equation*}
where $\gfun$ is a non-monotone submodular funtion. \cite{buchbinder2015tight} proposed a randomized algorithm called \emph{double greedy} for solving this problem. They proved that the algorithm achieves an approximation ratio of $\sfrac 12$. On the hardness side, \cite{feige2011maximizing}  proved that no polynomial time algorithm for the
unconstrained non-monotone submodular maximization can have an approximation ratio of $\sfrac 12 + \epsilon$ for any
constant $\epsilon > 0$ in the value oracle model.

Very recently, much attention has been attracted to the regularized submodular maximization, i.e., problem~(\ref{regularized submodular maximization}). We notice that by the definition of submodularity, function $f := g - \ell$ is still a submodular function. The only difference between problem~(\ref{submodular maximization problem}) and problem~(\ref{regularized submodular maximization}) is that the objective in problem~(\ref{regularized submodular maximization}), i.e., $f = g - \ell$, could possibly take on negative values, while the objective in problem~(\ref{submodular maximization problem}) is non-negative. Actually, \cite{feige2011maximizing} have shown that for a  submodular function $f$ without any restrictions, verifying whether the maximum of function $f$ is greater than zero is NP-hard and requires exponentially many queries in the value oracle model. Thus, it is impossible to design an algorithm which can achieve an multiplicative approximation factor\footnote{Namely, the algorithm can produce a solution $S \in \Fcal$ such that $g(S) - \ell(S) \geq \alpha \cdot \max \{ g(A) - \ell(A) : A \in \Fcal \}$, where $\alpha$ is a constant in $(0,1)$.} for problem~(\ref{regularized submodular maximization}). A line of research has shown that in this case we should consider a weaker notion of approximation, i.e., finding a solution $S \in \Fcal$ such that $g(S) - \ell(S) \geq \alpha \cdot g(OPT) - \ell(OPT)$, where $OPT \in \argmax \{ g(A) - \ell(A) : A \in \Fcal \}$.

\cite{sviridenko2017optimal} are the first to study the submodular maxization with an objective which could possibly take on negative values, i.e.,
\begin{equation} \label{sub-mod matroid}
	\begin{alignedat}{1}
		\max \quad & g(A) + \ell(A) \\ 
		\st \quad & A \in \Ical ,
	\end{alignedat}
\end{equation} 
where $\gfun$ is a monotone submodular
function, $\ell \colon 2^\Omega \to \Rbb$ is a normalized modular function, and $\Ical$ is the family of independent sets of matroid $\Mcal = (\Omega, \Ical)$. They proposed a randomized polynomial-time algorithm which can return a feasible set $S$ such that $\Ebb_{S} [ g(S) + \ell(S) ] \geq (1 - 1/e) g(OPT) + \ell(OPT)$, where $OPT$ is the optimal solution. Since \cite{sviridenko2017optimal} utilized a sophisticated continuous greedy algorithm together with a time-consuming guessing step, their results are of mainly theoretical interest. Later, \cite{feldman2020guess} reconsided this problem and designed a subtle distorted continuous greedy algorithm which can bypass the guessing step. But the optimization of multilinear extension prevents its practical application.

\cite{harshaw2019submodular} studied a regularized maximization problem, i.e.,
\begin{equation} \label{weakly sub-mod cardinality}
	\begin{alignedat}{1}
		\max_{A \subseteq \Omega} \quad & g(A) - \ell(A) \\ 
		\st \quad & |A| \leq k ,
	\end{alignedat}
\end{equation}  
where $\gfun$ is a monotone $\gamma$-weakly submodular function and $\lfun$ is a normalized modular function. They introduced a subtle distorted objective function and showed that the distorted greedy algorithm they proposed can return a feasible solution $S$ with $g(S) - \ell(S) \geq (1 - e^{-\gamma}) g(OPT) - \ell(OPT)$,  where $OPT$ is the optimal solution. Moreover, they extended their results to the unconstrained setting, i.e.,
\begin{equation*} \label{unconstained weakly sub-mod}
	\max_{A \subseteq \Omega}\ g(A) - \ell(A),
\end{equation*}
where $\gfun$ is a monotone $\gamma$-weakly submodular function, and $\lfun$ is a non-negative modular function. In this unconstrained setting, they presented a much more efficient algorithm which can output a set $S$ such that $g(S) - \ell(S) \geq (1 - e^{-\gamma}) g(OPT) - \ell(OPT)$,  where $OPT$ is the optimal solution.

Recently, \cite{kazemi2020regularized} considered a special case of problem~(\ref{weakly sub-mod cardinality}), in which $\gfun$ is a monotone submodular function. They proposed an one-pass streaming algorithm which can return a solution $S$ such that $g(S) - \ell(S) \geq (\phi^{-2} - \epsilon) \cdot g(OPT) - \ell(OPT)$, where $\phi$ is the golden ratio (i.e., $\phi = \frac 12 (1 + \sqrt{5})$). Moreover, they proposed a distributed algorithm which can produce a solution $S$ with $\Ebb_{S}[ g(S) - \ell(S) ] \geq (1 - \epsilon) [ g(OPT) - \ell(OPT) ]$.

\subsection{Our contributions}
Our contributions are presented as follows.
\begin{itemize}
	\item We study the regularized non-monotone submodular maximization problem with an arbitrary matroid constraint, i.e.,
	\begin{equation*} 
		\begin{alignedat}{1}
			\max_{A \subseteq \Omega} \quad & g(A) - \ell(A) \\ 
			\st \quad & A \in \Ical ,
		\end{alignedat}
	\end{equation*}
	where $\gfun$ is a non-monotone submodular function, $\ell \colon 2^\Omega \to \Rbb_+$ is a normalized modular function, and $\Ical$ is the family of independent sets of matroid $\Mcal = (\Omega, \Ical)$.
	
	Firstly, we study the relaxation of the above optimization problem, i.e.,
	\begin{equation*} 
		\begin{alignedat}{1}
			\max \quad & G(\xvec) - L(\xvec) \\ 
			\st \quad & \xvec \in \Pcal \subseteq \gset ,
		\end{alignedat}
	\end{equation*}
	where $G$ is the multilinear extension of non-monotone submodular function $\gfun$, $L$ is the multilinear extension of modular function $\ell \colon 2^\Omega \to \Rbb_+$, and $\Pcal$ is a down-monotone and solvable polytope. We propose an algorithm, which is based on the measured continuous greedy (\cite{feldman2011unified}) and the distorted objective (\cite{feldman2020guess}), for solving this problem. We summarize the result in the following theorem.
	
	\begin{theorem}
		Let $\gfun$ be a non-monotone submodular function and $\lfun$ be a normalized modular function. Let $G$ and $L$ be the multilinear extension of $g$ and $\ell$, respectively. Let $\Pcal$ be a down-monotone and solvable polytope. There exists an algorithm that given a parameter $\epsilon \in (0,1)$ can produce a fractional solution $\xvec \in \Pcal$ such that with high probability 
		\begin{equation*}
			G( \xvec ) - L( \xvec ) \geq e^{-1} g(OPT) - \ell (OPT) - 5 \epsilon M,
		\end{equation*}
		where $M = \max \left\{ \max_{e\in \Omega} g(e|\varnothing), -\min_{e\in \Omega} g(e|\Omega - e) \right\} > 0$ and $OPT \in \argmax \{ g(A) - \ell(A) : A \subseteq 2^{\Omega}, \ind_A \in \Pcal \}$. Moreover, the algorithm performs $O ( \frac{n^5}{\epsilon^3} \ln \frac{n^3}{\epsilon^2} )$ value oracle queries.
	\end{theorem}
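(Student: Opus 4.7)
The plan is to design a \emph{distorted measured continuous greedy} that blends the measured continuous greedy framework of \cite{feldman2011unified} with the distorted-objective idea of \cite{feldman2020guess}, instantiated for the non-monotone setting. On a uniform time grid of step size $\epsilon$ over $[0,1]$, the algorithm maintains $\xvec(t) \in \Pcal$ starting from $\xvec(0)=\mathbf{0}$. At each time $t$, it estimates $\nabla G(\xvec(t))$ by Monte Carlo sampling, selects $\yvec(t)\in\Pcal$ that maximizes $\innprod{e^{t-1}\widehat{\nabla G}(\xvec(t)) - \nabla L}{\yvec \odot (\mathbf{1}-\xvec(t))}$ over $\yvec \in \Pcal$ (solvable by assumption), and performs the \emph{measured} update $x_e(t+\epsilon) = x_e(t) + \epsilon\,y_e(t)(1-x_e(t))$. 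The measured update secures the invariant $x_e(t) \leq 1 - e^{-t}$ throughout the run, which is essential for controlling the non-monotonicity of $G$.

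The key analytic object is the distorted potential $\Phi_t(\xvec) \defeq e^{t-1} G(\xvec) - L(\xvec)$. A chain-rule computation along the continuous trajectory yields
\begin{equation*}
    \tfrac{d}{dt} \Phi_t(\xvec(t)) = e^{t-1} G(\xvec(t)) + \innprod{e^{t-1}\nabla G(\xvec(t)) - \nabla L}{\yvec(t)\odot (\mathbf{1}-\xvec(t))}.
\end{equation*}
Because $\ind_{OPT} \in \Pcal$ is a feasible competitor in the linear-maximization step, the inner product is at least its value when $\yvec(t)$ is replaced by $\ind_{OPT}$. Two standard facts then close the argument: (i) the multilinear-extension inequality $\innprod{\nabla G(\xvec)}{\ind_S\odot(\mathbf{1}-\xvec)} \geq G(\xvec\vee \ind_S) - G(\xvec)$ together with the Feldman--Naor--Schwartz lemma $G(\xvec \vee \ind_{OPT}) \geq (1 - \|\xvec\|_\infty) g(OPT) \geq e^{-t} g(OPT)$; and (ii) for the modular term, $\innprod{\nabla L}{\ind_{OPT}\odot(\mathbf{1}-\xvec(t))} \leq \ell(OPT)$ because $\ell \geq 0$. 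Substituting produces a clean cancellation:
\begin{equation*}
    \tfrac{d}{dt} \Phi_t(\xvec(t)) \geq e^{t-1} G(\xvec(t)) + e^{-1} g(OPT) - e^{t-1} G(\xvec(t)) - \ell(OPT) = e^{-1} g(OPT) - \ell(OPT).
\end{equation*}

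Integrating from $0$ to $1$ and using $\Phi_0(\mathbf{0}) \geq 0$ together with $\Phi_1(\xvec(1)) = G(\xvec(1)) - L(\xvec(1))$ delivers the continuous-time bound $G(\xvec(1)) - L(\xvec(1)) \geq e^{-1} g(OPT) - \ell(OPT)$. The remaining work is to absorb three error sources into the $5\epsilon M$ slack: (a) the discretization gap between the discrete and continuous updates, controlled via the smoothness of the multilinear extension, with per-coordinate curvature bounded in terms of $M$; (b) the sampling error in $\widehat{\nabla G}$, handled by drawing $\widetilde{O}(n^3/\epsilon^2)$ samples per coordinate per iteration and applying a Hoeffding inequality with a union bound across the $O(n/\epsilon)$ relevant events to obtain a high-probability additive error of $O(\epsilon M)$; and (c) the downstream propagation of the sampling error through the linear-maximization step. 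Summing the per-step errors over the $1/\epsilon$ iterations yields the stated $5\epsilon M$ slack and the total query count $O(n^5 \epsilon^{-3} \ln(n^3 \epsilon^{-2}))$.

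The hardest part will be calibrating the distortion: the exponent $e^{t-1}$ in $\Phi_t$ is chosen precisely so that the $e^{t-1} G(\xvec(t))$ term coming from $\partial \Phi_t / \partial t$ cancels exactly against the $-e^{t-1} G(\xvec(t))$ produced when the Feldman--Naor--Schwartz bound is applied, while simultaneously turning the $e^{-t} g(OPT)$ factor into the target coefficient $e^{-1}$ and leaving $L$ un-distorted so that exactly $\ell(OPT)$ is subtracted; any other exponential rate spoils either the $e^{-1}$ coefficient or the clean $\ell(OPT)$ penalty. Equally delicate is the discretization analysis: since $G$ can take negative values, monotonicity-based per-step error bounds are unavailable, and the magnitude $M$ of both the most positive and the most negative marginal enters naturally, both as the scale for the second-derivative-type bound on the multilinear extension and as the scale for the Hoeffding concentration, which is why $M$ governs the final error term.
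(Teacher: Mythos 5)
Your plan is the same algorithm and the same analysis as the paper's: a measured continuous greedy with the distorted potential $\Phi_t = e^{t-1}G - L$ (the paper uses the discrete analogue $(1-\delta)^{(1-t)/\delta}$ so that the telescoping is exact), the invariant $x_e(t)\leq 1-e^{-t}$, the lower bound $G(\xvec\vee\ind_{OPT})\geq(1-\|\xvec\|_\infty)\,g(OPT)$ (which the paper derives via the Lov\'asz extension), the observation $\innprod{\nabla L}{\ind_{OPT}\circ(\ind_\Omega-\xvec)}\leq\ell(OPT)$, a second-order Taylor bound with $|\partial^2 G/\partial x_e\partial x_{e'}|\leq 2M$ for the discretization, and Chernoff plus a union bound for the gradient estimates. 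The cancellation you highlight is exactly the paper's Lemma on the increase of $\Phi$.

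There is, however, one concrete miscalibration that breaks the error accounting as you stated it: you take the time grid to have step size $\epsilon$, i.e.\ $1/\epsilon$ iterations. The second-order Taylor remainder per step is $n^2\delta^2 M$ for step size $\delta$, so with $\delta=\epsilon$ the total discretization error is $n^2\epsilon M$, not $O(\epsilon M)$; the claimed $5\epsilon M$ slack does not follow. The paper sets $\delta=\lceil 2+n^2/\epsilon\rceil^{-1}=O(\epsilon/n^2)$ precisely so that $n^2\delta^2 M\leq\epsilon\delta M$ per step, summing to $\epsilon M$ over $1/\delta$ steps. This also explains why your query count does not reproduce the stated bound: with $1/\epsilon$ iterations and $\widetilde{O}(n^3/\epsilon^2)$ samples per coordinate you get $\widetilde{O}(n^4/\epsilon^3)$ queries, whereas the paper's $O(n^2/\epsilon)$ iterations times $n$ coordinates times $O(n^2\epsilon^{-2}\ln(n^3/\epsilon^2))$ samples gives the stated $O(n^5\epsilon^{-3}\ln(n^3/\epsilon^2))$. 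The fix is purely parametric (shrink the step to $\Theta(\epsilon/n^2)$ and rebalance the sample count), but as written the proposal's step (a) would fail.
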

	
	When the polytope $\Pcal$ is a matroid polytope, we can utilize the pipage rounding procedure (\cite{vondrak2013symmetry}) to obtain an integral solution without lossing anything in the objective. We summarize the result in the following theorem. 
	
	\begin{theorem}
		There exists a polynomial time algorithm that given a non-monotone submodular function $\gfun$, a normalized modular function $\lfun$, a matroid $\Mcal = (\Omega, \Ical)$ and a parameter $\epsilon \in (0,1)$, with high probability returns a set $S \in \Ical$ obeying
		\begin{equation*}
			\Ebb [ g(S) - \ell (S) ] \geq e^{-1} g(OPT) - \ell (OPT) - 5 \epsilon M,
		\end{equation*}
		where $M = \max \left\{ \max_{e\in \Omega} g(e|\varnothing), -\min_{e\in \Omega} g(e|\Omega - e) \right\} > 0$ and $OPT \in \argmax \{ g(A) - \ell(A) : A \in \Ical \}$.
	\end{theorem}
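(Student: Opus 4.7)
The plan is to couple the fractional algorithm of the previous theorem with the pipage rounding procedure of~\cite{vondrak2013symmetry}. The matroid polytope $P(\Mcal)$ is down-monotone and solvable (a linear function can be maximized over $P(\Mcal)$ in polynomial time via the standard matroid greedy algorithm), so the hypothesis of the first theorem is met. Hence, running the algorithm of the first theorem with $\Pcal := P(\Mcal)$ produces, in polynomial time and with high probability, a fractional point $\xvec \in P(\Mcal)$ obeying
\[
G(\xvec) - L(\xvec) \;\geq\; e^{-1} g(OPT) - \ell(OPT) - 5\epsilon M.
\]

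Next, I would feed $\xvec$ into the pipage rounding procedure, which outputs a random integral point $\ind_S$ that is a vertex of $P(\Mcal)$; equivalently, $S \in \Ical$. The two properties of pipage rounding I would invoke are: (i)~it preserves marginals, i.e., $\pr[e \in S] = x_e$ for every $e \in \Omega$, and (ii)~it does not decrease the multilinear extension of any submodular function in expectation, i.e., $\Ebb[g(S)] \geq G(\xvec)$, a guarantee that remains valid even when $g$ is non-monotone, as established in~\cite{vondrak2013symmetry}. Property~(i) together with the modularity of $\ell$ yields
\[
\Ebb[\ell(S)] \;=\; \sum_{e \in \Omega} \ell(e)\, \pr[e \in S] \;=\; \sum_{e \in \Omega} \ell(e)\, x_e \;=\; L(\xvec).
\]

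Combining these observations by linearity of expectation then gives
\[
\Ebb[g(S) - \ell(S)] \;=\; \Ebb[g(S)] - \Ebb[\ell(S)] \;\geq\; G(\xvec) - L(\xvec) \;\geq\; e^{-1} g(OPT) - \ell(OPT) - 5\epsilon M,
\]
which is the claimed inequality.

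The only delicate point is property~(ii) in the non-monotone regime, but this is precisely the content of the pipage rounding analysis in~\cite{vondrak2013symmetry}, so no new argument is needed; the modular part is handled for free by marginal preservation. Since pipage rounding itself runs in polynomial time, the overall procedure inherits the polynomial value-oracle complexity of the fractional algorithm. The main obstacle, in other words, is not in the proof but in certifying that the fractional guarantee transfers cleanly through rounding, and the two properties above achieve exactly that.
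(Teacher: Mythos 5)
Your proposal is correct and follows essentially the same route as the paper: run the fractional algorithm with $\Pcal$ set to the matroid polytope, then apply pipage rounding so that $\Ebb[g(S)-\ell(S)] \geq G(\xvec)-L(\xvec)$. The paper invokes pipage rounding on the submodular function $g-\ell$ as a whole, whereas you split the guarantee into marginal preservation for $\ell$ and the submartingale property for $G$; the two are equivalent here and your version simply makes the details explicit.
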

	
	\item We study the regularized non-monotone submodular maximization problem with a cardinality constraint, i.e.,
	\begin{equation*} 
		\begin{alignedat}{1}
			\max_{A \subseteq \Omega} \quad & g(A) - \ell(A) \\ 
			\st \quad & |A| \leq k ,
		\end{alignedat}
	\end{equation*}
	where $\gfun$ is a non-monotone submodular function, $\ell \colon 2^\Omega \to \Rbb_+$ is a normalized modular function, and $k$ is the cardinality constraint.
	
	Since cardinality constraint is essentially a uniform matroid constraint, the maximization problem with a cardinality constraint is a special case of its counterpart with a matroid constraint. We propose a randomized algorithm using $O(nk)$ value oracle queries for solving this problem. We summarize the result in the following theorem.
	
	\begin{theorem}
		There exists a randomized algorithm that given a non-monotone submodular function $\gfun$, a normalized modular function $\lfun$, and a cardinality $k$, returns a feasible set $S \subseteq \Omega$ with
		\begin{equation*}
			\Ebb [ g(S) - \ell (S) ] \geq e^{-1} g(OPT) - \ell (OPT) ,
		\end{equation*}
		where $OPT \in \argmax \{ g(A) - \ell(A) : A \subseteq \Omega, |A| \leq k \}$. And, the algorithm performs $O(nk)$ value oracle queries.
	\end{theorem}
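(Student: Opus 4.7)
The plan is to combine the distorted-objective technique of \cite{harshaw2019submodular} and \cite{feldman2020guess} with the Random Greedy scheme of \cite{buchbinder2014submodular}. Introduce the auxiliary potential $\Phi_i(S) := \ai g(S) - \ell(S)$, which satisfies $\Phi_0(\varnothing) = 0$ and $\Phi_k(S) = g(S) - \ell(S)$. The algorithm sets $S_0 = \varnothing$ and, for $i = 0, \dots, k-1$, lets $M_i$ be the $k$ elements of $\Omega \setminus S_i$ that maximize $\sum_{e \in M_i}\big[\bi g(e \mid S_i) - \ell(e)\big]$, padding $M_i$ with zero-contribution dummy elements whenever fewer than $k$ genuine elements supply a non-negative distorted marginal, and then draws $e_{i+1}$ uniformly from $M_i$ (interpreting a dummy draw as leaving $S_i$ unchanged). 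Each iteration scans $\Omega$ once, for $O(nk)$ value oracle queries overall.

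The technical core is a per-step lower bound on the expected growth of $\Phi$. Expanding the definition gives
\begin{equation*}
\Ebb\big[\Phi_{i+1}(S_{i+1}) - \Phi_i(S_i) \mid S_i\big] = \bi\, \Ebb\big[g(e_{i+1} \mid S_i) \mid S_i\big] - \Ebb\big[\ell(e_{i+1}) \mid S_i\big] + \tfrac{1}{k} \bi g(S_i),
\end{equation*}
where the residual $\tfrac{1}{k}\bi g(S_i)$ arises from $\ai - \bi = -\tfrac{1}{k}\bi$. The optimality of $M_i$, combined with $\ell(e)\geq 0$ for $e \in OPT \cap S_i$ and the convention $g(o\mid S_i) = 0$ for $o \in S_i$, lets me compare the sum over $M_i$ against the sum over $OPT$; applying submodularity in the form $\sum_{o \in OPT} g(o \mid S_i) \geq g(OPT \cup S_i) - g(S_i)$ together with modularity of $\ell$ then cancels the two occurrences of $g(S_i)$, leaving
\begin{equation*}
\Ebb\big[\Phi_{i+1}(S_{i+1}) - \Phi_i(S_i) \mid S_i\big] \geq \tfrac{1}{k}\Big[\bi g(OPT \cup S_i) - \ell(OPT)\Big].
\end{equation*}

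To discharge the $g(OPT \cup S_i)$ term I invoke the coupling lemma of \cite{buchbinder2014submodular}, which only uses that each fixed element is absent from $S_i$ with probability at least $(1 - 1/k)^i$; this property comes solely from uniform sampling over a size-$k$ candidate set and therefore survives the distortion. The lemma yields $\Ebb[g(OPT \cup S_i)] \geq (1 - 1/k)^i g(OPT)$, and the factor $(1-1/k)^i$ combines with $\bi$ so that
\begin{equation*}
\Ebb[\Phi_{i+1}(S_{i+1})] - \Ebb[\Phi_i(S_i)] \geq \tfrac{1}{k}\Big[\left(1 - \tfrac{1}{k}\right)^{k-1} g(OPT) - \ell(OPT)\Big].
\end{equation*}
Telescoping over $i = 0, \dots, k-1$ and using $(1 - 1/k)^{k-1} \geq e^{-1}$ produces $\Ebb[g(S_k) - \ell(S_k)] \geq e^{-1} g(OPT) - \ell(OPT)$.

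The main difficulty I anticipate is the non-monotone bookkeeping: verifying that (i) padding both $M_i$ and $OPT$ to exactly $k$ elements with zero-contribution dummies preserves the inequality $\sum_{e \in M_i} d_i(e) \geq \sum_{o \in OPT} d_i(o)$ for the distorted marginal $d_i(e) := \bi g(e\mid S_i) - \ell(e)$, and (ii) the Buchbinder coupling lemma applies unchanged despite the distortion-dependent selection rule. Both are routine once dummies are defined to contribute zero to $g$, $\ell$, and $d_i$.
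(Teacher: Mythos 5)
Your proposal is correct and follows essentially the same route as the paper's \drg\ algorithm and its analysis: the potential $\Phi_i$, the per-step bound via optimality of $M_i$ plus submodularity, Lemma~2.2 of \cite{buchbinder2014submodular} to get $\Ebb[g(OPT\cup S_i)]\geq(1-1/k)^i g(OPT)$, and telescoping; your dummy-padding of $M_i$ is just a reformulation of the paper's ``with probability $1-|M_i|/k$ do nothing'' step. The only nit is that $\Phi_0(\varnothing)=(1-1/k)^k g(\varnothing)$ need not equal $0$ since $g$ is not assumed normalized, but it is non-negative, which is all the telescoping requires.
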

	
	In addition, by using random sampling (\cite{buchbinder2017comparing}), we propose a randomized algorithm using $O(\frac{n}{\epsilon^2} \ln \frac 1\epsilon)$ value oracle queries for solving the same problem. We summarize the result in the following theorem.
	
	\begin{theorem}
		There exists a randomized algorithm that given a non-monotone submodular funtion $\gfun$, a normalized modular function $\ell \colon 2^\Omega \to \Rbb_+$, and parameters $k$ and $\epsilon \in (0, e^{-1})$, returns a feasible solution $S \subseteq \Omega$ with
		\begin{equation*}
			\Ebb_S [g(S) - \ell(S)] \geq (e^{-1} - \epsilon) g(OPT) - \ell(OPT),
		\end{equation*}
		where $OPT \in \argmax \{ g(A) - \ell(A) : A \subseteq \Omega, |A| \leq k \}$. And, the algorithm performs $O( \frac{n}{\epsilon^2} \ln \frac 1\epsilon )$ value oracle queries.
	\end{theorem}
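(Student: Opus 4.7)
The plan is to accelerate the \drg{} algorithm underlying the previous theorem by replacing each full scan of $\Omega$ with a random sample, in the spirit of \cite{buchbinder2017comparing}. I would define \drsg{} to run for $k$ iterations, maintaining a current solution $S_i$. At iteration $i$, draw a uniformly random set $Q_i \subseteq \Omega$ of size $s = \Theta\!\bigl( \tfrac{n}{k\epsilon} \ln \tfrac{1}{\epsilon} \bigr)$, compute the distorted marginal $d_i(e) := \bi\, g(e \mid S_i) - \ell(e)$ only for $e \in Q_i$, form $M_i$ as the top-$k$ elements of $Q_i$ ranked by $d_i$ (padded with dummy elements of score $0$ if $|Q_i| < k$), and set $S_{i+1} := S_i \cup \{e_i\}$ where $e_i$ is drawn uniformly at random from $M_i$ (a no-op if $e_i$ is a dummy). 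The total number of value oracle queries is $O(ks) = O\!\bigl( \tfrac{n}{\epsilon^2} \ln \tfrac{1}{\epsilon} \bigr)$.

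The analysis mirrors that of \drg{} with the potential $\Phi_i(S) := \ai\, g(S) - \ell(S)$. Without sampling, the per-step bound reads
\begin{equation*}
\Ebb\bigl[\Phi_{i+1}(S_{i+1}) - \Phi_i(S_i) \mid S_i\bigr] \geq \tfrac{1}{k}\bigl[\bi\, g(OPT \cup S_i) - \ell(OPT)\bigr].
\end{equation*}
I would show that sampling multiplies only the $g$-side of this inequality by a factor of $(1-\epsilon)$. The key estimate is that, for $s$ chosen as above, $\pr[Q_i \cap (OPT \setminus S_i) = \varnothing] \leq (1 - |OPT \setminus S_i|/n)^s \leq \epsilon$ whenever $|OPT \setminus S_i| = \Theta(k)$; conditional on the complementary event, the expected $d_i$-score of an element drawn uniformly from $M_i$ dominates the expected $d_i$-score of a uniformly random element of $OPT \setminus S_i$, which by submodularity of $g$ and linearity of $\ell$ in turn lower-bounds $\tfrac{1}{k}[\bi\, g(OPT \cup S_i) - \ell(OPT)]$. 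Telescoping the modified inequality over $i = 0, 1, \ldots, k-1$, using $\Phi_0(\varnothing) = 0$, $\Phi_k(S_k) = g(S_k) - \ell(S_k)$, and $(1-1/k)^k \leq e^{-1}$, yields $\Ebb[g(S_k) - \ell(S_k)] \geq (e^{-1} - \epsilon)\, g(OPT) - \ell(OPT)$ after a mild reparameterization of $\epsilon$.

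I expect the main obstacle to be the case $Q_i \cap (OPT \setminus S_i) = \varnothing$. In this event the algorithm may still add an element with strictly negative distorted marginal (because the $-\ell(e)$ term can outweigh $\bi\, g(e \mid S_i)$), and I must argue that the corresponding loss does not inflate the error beyond $\epsilon\, g(OPT)$. Two ingredients should make this work: first, padding $M_i$ with zero-score dummies guarantees that any real element actually added has non-negative $d_i$-score, so negative $\ell$-contributions accumulate only into the benign term $-\ell(S_k) \geq -\ell(OPT)$; second, the probability of missing $OPT \setminus S_i$ is driven below $\epsilon$ by the choice of $s$, so the total expected shortfall relative to the ideal telescoping sum is at most $\epsilon\, g(OPT)$. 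Routine verifications of sampling with versus without replacement, the edge case $|OPT| < k$, and the tuning of constants are handled as in \cite{buchbinder2017comparing}.
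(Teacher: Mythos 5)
Your overall strategy (accelerate \drg\ by evaluating distorted marginals only on a random sample, keep the potential $\Phi_i$, telescope, and use the $(1-1/k)^i$ bound on $\Pr[e\in S_i]$ to control $\Ebb[g(OPT\cup S_i)]$) is the same as the paper's, but your selection rule and the key sampling lemma do not work as stated. If $e_i$ is drawn \emph{uniformly from the top-$k$ elements of the sample} $Q_i$, then for any fixed element $v$ we have $\pr[e_i=v]\le \frac{1}{k}\cdot\pr[v\in Q_i]=\frac{|Q_i|}{kn}$, so the expected distorted gain per step is deflated by a factor of roughly $|Q_i|/n\ll 1$ rather than $(1-\epsilon)$. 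Concretely, if exactly the $k$ elements of $OPT$ have distorted score $1$ and all others have score $0$, then $\Ebb[\Psi_i(S_i,e_i)\mid S_i]=\frac{|Q_i|}{kn}\sum_{e\in OPT}\Psi_i(S_i,e)$, which is far below the required $\frac{1-\epsilon}{k}\sum_{e\in OPT}\Psi_i(S_i,e)$. The same example refutes your domination claim: conditional on $Q_i\cap(OPT\setminus S_i)\neq\varnothing$, the sample typically contains only $O(|Q_i|k/n)$ elements of $OPT$, so the average score over the sample's top $k$ (mostly zero-score fillers) is much smaller than the average score over $OPT\setminus S_i$. Merely ensuring $\pr[Q_i\cap(OPT\setminus S_i)=\varnothing]\le\epsilon$ is not enough; you need the sample to capture a $(1-\epsilon)$ fraction of the \emph{total score mass} of the global top-$k$.

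The paper (following \cite{buchbinder2017comparing}) fixes exactly this by selecting the element of $M_i$ with the $\lceil d_i\rceil$-th largest distorted marginal, where $d_i$ is uniform on $(0,s]$ with $s=\frac{k}{n}\lceil pn\rceil$ --- i.e., it selects from the top-$s$ of the sample, not the top-$k$, which boosts the conditional selection probability to $1/s$ so that the unconditional probability of picking any fixed element is exactly $1/k$. The performance then rests on Lemmas~4.3 and 4.4 of \cite{buchbinder2017comparing} ($\Ebb[\sum_{j=1}^k X_j\mid S_i]\ge 1-\epsilon$ and monotonicity of $\Ebb[X_j\mid S_i]$ in the global rank $j$) combined with Chebyshev's sum inequality, and this is where the sample size $\lceil pn\rceil$ with $p=\Theta\bigl(\frac{1}{k\epsilon^2}\ln\frac{1}{\epsilon}\bigr)$ is needed --- a factor $1/\epsilon$ larger than your choice. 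Finally, you omit the case split that makes the query bound legitimate: when $k\le \frac{8}{\epsilon^2}\ln\frac{2}{\epsilon}$ the sampling parameters degenerate and one simply runs \drg, whose $O(nk)$ queries are already $O(\frac{n}{\epsilon^2}\ln\frac{1}{\epsilon})$. Your observations about adding only elements with positive distorted marginal and about the benign accumulation of the $-\ell$ terms are correct and match the paper.
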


	\item We study the regularized non-monotone submodular maximization problem with no constraint, i.e.,
	\begin{equation*} 
			\max_{A \subseteq \Omega} \quad g(A) - \ell(A) 
	\end{equation*}
	where $\gfun$ is a non-monotone submodular function, $\ell \colon 2^\Omega \to \Rbb_+$ is a normalized modular function.
	
	As a special case of the cardinality constraint (i.e., $k=n$), we propose a randomized algorithm, which has the same performance guarantee in expectation but only requires $O(n)$ value oracle queries, for solving the unconstrained problem. We summarize the result in the following theorem.
	
	\begin{theorem}
		There exists a randomized algorithm that given a non-monotone submodular function $\gfun$ and a normalized modular function $\lfun$, returns a set $S \subseteq \Omega$ obeying
		\begin{equation*}
			\Ebb [ g(S) - \ell (S) ] \geq e^{-1} g(OPT) - \ell (OPT) ,
		\end{equation*}
		where $OPT \in \argmax \{ g(A) - \ell(A) : A \subseteq \Omega \}$. And, the algorithm performs $O(n)$ value oracle queries.
	\end{theorem}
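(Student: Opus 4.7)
The plan is to compress the cardinality-constrained algorithm from the previous theorem into a single pass over $\Omega$, reducing the query count from $O(n^2)$ (the naive $k=n$ substitution) to $O(n)$. The proposed algorithm would draw a uniformly random permutation $e_1,\dots,e_n$ of $\Omega$, initialize $S_0=\varnothing$, and for each $i=1,\dots,n$ compute the single distorted marginal
\begin{equation*}
\Delta_i \;=\; \left(1-\tfrac{1}{n}\right)^{n-i} g(e_i\mid S_{i-1}) - \ell(e_i),
\end{equation*}
adding $e_i$ to $S_{i-1}$ iff $\Delta_i>0$ and returning $S_n$. Each iteration makes a single oracle call, which gives the required $O(n)$ bound, and $S_n$ is automatically feasible since the problem is unconstrained.

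For the guarantee I would run a distorted-potential argument in the spirit of the random greedy of \cite{buchbinder2014submodular} combined with the distorted objectives of \cite{harshaw2019submodular} and \cite{feldman2020guess}. The candidate potential is
\begin{equation*}
\Phi_i \;=\; \left(1-\tfrac{1}{n}\right)^{n-i}\Ebb\bigl[g(S_i\cup OPT)\bigr] + \Ebb\bigl[g(S_i)-\ell(S_i)\bigr],
\end{equation*}
so that $\Phi_0=(1-1/n)^n g(OPT)\ge e^{-1}g(OPT)$ and $\Phi_n\ge \Ebb[g(S_n)-\ell(S_n)]$ using $g\ge 0$. The core step is to prove the monotonicity $\Phi_i\ge\Phi_{i-1}$ for every $i$, from which telescoping yields $\Ebb[g(S_n)-\ell(S_n)]\ge \Phi_0-\Ebb[g(S_n\cup OPT)\setminus g(S_n)]$, and a companion bound controlling the residual $g(S_n\cup OPT)-g(S_n)$ against $\ell(OPT)$ via submodularity and the modularity of $\ell$ would then deliver the stated inequality.

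To verify monotonicity I would condition on $S_{i-1}$ and on the identity of $e_i$, then average over the uniform position of $e_i$ in the permutation. On the ``add" branch the contribution $\Delta_i\ge 0$ enters the $g(S_i)-\ell(S_i)$ term verbatim, while submodularity bounds the simultaneous decrease of $g(S_i\cup OPT)$ relative to $g(S_{i-1}\cup OPT)$. On the ``skip" branch, $-\Delta_i\ge 0$ supplies the slack needed to absorb the change of the distortion coefficient from $(1-1/n)^{n-i+1}$ to $(1-1/n)^{n-i}$. Because each $o\in OPT$ is seen at a uniformly random time, the expected drop of $g(S\cup OPT)$ per iteration is of order $g(S_{i-1}\cup OPT)/n$, which exactly matches the $(1-1/n)$-decay of the distortion factor.

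The hardest part will be handling the non-monotonicity of $g$: adding an element can strictly decrease $g(S\cup OPT)$, and the distorted potential must absorb this loss. The random permutation is precisely what is needed to spread these losses uniformly across the $n$ iterations and align them with the geometric decay of the distortion coefficient. Compared with the $O(nk)$ algorithm of the preceding theorem, the absence of a cardinality constraint lets the algorithm commit to include or skip each element after a single oracle query, eliminating the inner scan and cutting the complexity to $O(n)$.
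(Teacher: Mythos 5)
Your query count and the overall distorted-potential strategy are in the right spirit, but your algorithm is not the paper's algorithm, and the difference is fatal to the analysis you sketch. The paper's Algorithm~\ref{unconstrained drg} draws $e_i$ uniformly from all of $\Omega$ \emph{independently in each of the $n$ rounds} (sampling with replacement), whereas you draw a uniformly random permutation. The entire $e^{-1}$ factor rests on the bound $\Ebb_{S_i}[g(OPT\cup S_i)]\ge (1-1/n)^i\,g(OPT)$, obtained from Lemma~\ref{buch lemma} because with-replacement sampling gives $\pr[e\in S_i]\le 1-(1-1/n)^i$ for every $e$. Under a random permutation, after $i$ steps each element has been \emph{considered} with probability exactly $i/n$, so the best available bound is $\pr[e\in S_i]\le i/n$, hence $\Ebb[g(OPT\cup S_i)]\ge(1-i/n)\,g(OPT)$, which is weaker since $1-i/n\le(1-1/n)^i$. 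Your claim that the per-step decay of $g(S\cup OPT)$ ``exactly matches the $(1-1/n)$-decay'' is therefore wrong: conditioned on the history, the $i$-th element is uniform over the $n-i+1$ remaining elements, giving a per-step factor $1-\frac{1}{n-i+1}$, not $1-\frac{1}{n}$. Plugging $(1-i/n)$ into the telescoping sum yields roughly $\frac{g(OPT)}{n^2}\sum_{j=0}^{n-1}(j+1)(1-1/n)^j\approx(1-2e^{-1})\,g(OPT)\approx 0.264\,g(OPT)$, which falls short of $e^{-1}\approx 0.368$.

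There are two further problems. First, $(1-1/n)^n\ge e^{-1}$ is false for every finite $n$, so your claim $\Phi_0=(1-1/n)^n g(OPT)\ge e^{-1}g(OPT)$ does not hold; the paper gets $e^{-1}$ because its telescoping produces the factor $(1-1/n)^{n-1}$, and $\ln(1-1/n)\ge-\frac{1}{n-1}$ gives $(1-1/n)^{n-1}\ge e^{-1}$. Your exponent bookkeeping loses exactly this. Second, your potential attaches the distortion to $\Ebb[g(S_i\cup OPT)]$ and defers the appearance of $-\ell(OPT)$ to an unspecified ``companion bound''; in the paper's argument the distortion sits on $g(S_i)$ itself, i.e.\ $\Phi_i(S_i)=(1-1/n)^{n-i}g(S_i)-\ell(S_i)$, and the $-\ell(OPT)$ term arises directly from comparing the uniform choice of $e_i$ against $OPT$ via $\frac{1}{n}\sum_{e\in OPT}\bigl[(1-1/n)^{n-i-1}g(e|S_i)-\ell_e\bigr]\ge\frac{1}{n}(1-1/n)^{n-i-1}[g(OPT\cup S_i)-g(S_i)]-\frac{1}{n}\ell(OPT)$. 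The fix is simple: keep your single-query-per-round acceptance rule, but sample $e_i$ i.i.d.\ uniformly from $\Omega$ and use the paper's potential; then the three lemmas of Section~\ref{sec: unconstrained} go through verbatim.
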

\end{itemize}

\subsection{Organization}
The rest of this paper is organized as follows. In section~\ref{sec: preliminaries}, we give the preliminary definitions and lemmas which will be used throughout the paper. We study the regularized non-monotone maximization problem with a matroid constraint in section~\ref{sec: matroid constraint} and propose a continuous greedy algorithm for solving it. In section~\ref{sec: cardinality constraint}, we present two fast algorithms for the regularized optimization problem with a cardinality constraint. Then, in section~\ref{sec: unconstrained}, we study the unconstrained problem and propose a fast randomized algorithm for solving it. Finally, we conclude this paper in section~\ref{sec:conclusion}.

\section{Preliminaries} \label{sec: preliminaries}
In this section, we describe the notations, definitions and lemmas which we will use in this paper.

\subsection{Set Functions}
Given a set $A$ and an element $e$, we use $A+e$ and $A-e$ as shorthands for the expression $A \cup \{e\}$ and $A \backslash \{ e \}$, respectively.

Let $\Omega$ be a ground set of size $n$. A set function $f \colon 2^\Omega \to \Rbb$ is \emph{submodular} if and only if $f(A) + f(B) \geq f(A\cup B) + f(A \cap B)$ for any $A,B \subseteq \Omega$. Submodularity can equivalently be characterized in terms of \emph{marginal gains}, defined by $f(e|A) := f(A+e) - f(A)$. Then, $f$ is submodular if and only if $f(e|A) \geq f(e|B)$ for any $A \subseteq B \subset \Omega$ and any $e \in \Omega \backslash B$. We denote the marginal gain of a set $B$ to a set $A$ with respect to function $f$ by $f(B|A) := f(A \cup B) - f(A)$. 

We say that a set function $f$ is \emph{monotone non-decreasing} if and only if $f(A) \leq f(B)$ for any $A \subseteq B \subseteq \Omega$. An equivalent definition is that $f(e|A) \geq 0$ for any $A \subseteq \Omega$ and any $e \in \Omega$. Similarly, We say that a set function $f$ is \emph{monotone non-increasing} if and only if $f(A) \geq f(B)$ for any $A \subseteq B \subseteq \Omega$.

A set function $f \colon 2^\Omega \to \Rbb$ is said to be \emph{normalized} if $f(\varnothing) = 0$. As a special case of submodular functions, a set function $f \colon 2^\Omega \to \Rbb$ is \emph{modular} if and only if $f(A) + f(B) = f(A\cup B) + f(A \cap B)$ for any $A,B \subseteq \Omega$. Based on the definitions above, one can easily check that a normalized set function $\ell \colon 2^\Omega \to \Rbb$ is modular if and only if there exists a vector $d \in \Rbb^{\Omega}$ such that $\ell(A) = \sum_{e \in A} d_e$ for every $A \subseteq \Omega$. Thus, in this paper, we abuse notation and identify the normalized modular function $\ell$ with the vector it corresponds to. Namely, $\ell(A) = \sum_{e \in A} \ell_e$ for every $A \subseteq \Omega$.

\subsection{Operations on Vectors}
Given two vectors $\xvec, \yvec \in \gset$, we write $\xvec \leq \yvec$ if $x_e \leq y_e$ for every element $e \in \Omega$. We use $\xvec \vee \yvec$, $\xvec \wedge \yvec$ and $\xvec \circ \yvec$ to denote the coordinate-wise maximum, minimum and multiplication of vectors $\xvec$ and $\yvec$, respectively. Specifically, $(\xvec \vee \yvec)_e := \max \{ x_e, y_e \}$, $(\xvec \wedge \yvec)_e := \min \{ x_e, y_e \}$ and $(\xvec \circ \yvec)_e := x_e \cdot y_e$, for every element $e \in \Omega$.

\subsection{Multilinear Extension}
Given a vector $\xvec \in \gset$, let $R_{\xvec}$ denote a random subset of $\Omega$ containing every element $e \in \Omega$ independently with probability $x_e$. Then, the \emph{multilinear extension} $F \colon \gset \to \Rbb$ of a set function $f \colon 2^\Omega \to \Rbb$ is defined as 
\begin{equation*}
	F(\xvec) := \Ebb_{ R_{\xvec} } [ f( R_{\xvec} ) ] = \sum_{S \subseteq \Omega} f(S) \prod_{e \in S} x_e \prod_{e \in \Omega - S} (1 - x_e). 
\end{equation*}
If we denote by $\ind_S$ the characteristic vector of set $S \subseteq \Omega$, then it holds that $F(\ind_S) = f(S)$. Thus, function $F$ is indeed an extension of function $f$. 

By the definition of multilinear extension, one can get that for every $e \in \Omega$ and every $\xvec \in \gset$,
\begin{equation*}
	(1 - x_e) \frac{\partial F(\xvec)}{\partial x_e} = F(\xvec \vee \ind_{ \{e\} }) - F(\xvec) = \Ebb_{ R_{\xvec} } [ f(e| R_{\xvec} ) ],
\end{equation*}
and,
\begin{equation*}
	\frac{\partial F(\xvec)}{\partial x_e} = F(\xvec \vee \ind_{ \{e\} }) - F(\xvec \wedge \ind_{\Omega - e}) = \Ebb_{ R_{\xvec} } [ f(e| R_{\xvec} - e ) ].
\end{equation*}
In addition, for every $e,e' \in \Omega$ and every $\xvec \in \gset$, it holds that
\begin{align*}
	\frac{\partial^2 F( \xvec )}{\partial x_e \partial x_e' } &= \Ebb_{ R_{\xvec} } [ f(e' | R_{\xvec} + e - e' ) - f(e' | R_{\xvec} - e - e' ) ].
\end{align*}

We now consider the multilinear extension of normalized modular functions. Suppose that $\ell \colon 2^\Omega \to \Rbb$ is a normalized modular function, then its multilinear extension is $L(\xvec) = \Ebb_{ R_{\xvec} } [ \ell(R_{\xvec}) ] = \sum_{e \in \Omega} \ell_e \cdot x_e = \innprod{\lvec}{\xvec}$ for every $\xvec \in \gset$.

\subsection{Lov\'{a}sz Extension}
Given a vector $\xvec \in \gset$ and a scalar $\lambda \in [0,1]$, let $T_{\lambda} (\xvec) := \{ e \in \Omega : x_e \geq \lambda \}$ be the set of elements in ground set $\Omega$ whose coordinate in $\xvec$ is at least $\lambda$. Then, the \emph{Lov\'{a}sz extension} $\hat{f} \colon \gset \to \Rbb$ of a submodular function $f \colon 2^\Omega \to \Rbb$ is defined as 
\begin{equation*}
	\hat{f}(\xvec) := \Ebb_{\lambda \sim U[0,1]} [ f( T_{\lambda} (\xvec) ) ] = \int_{0}^{1} f( T_{\lambda} (\xvec) ) \diff \lambda.
\end{equation*}

In this paper, we make use of the Lov\'{a}sz extension to lower bound the multilinear extension via the following lemma.

\begin{lemma}[Lemma A.4 in \cite{vondrak2013symmetry}] \label{lb multi}
	Let $F(\xvec)$ and $\hat{f}(\xvec)$ be the multilinear and Lov\'{a}sz extensions, respectively, of a submodular function $f \colon 2^\Omega \to \Rbb$. Then, it holds that $F(\xvec) \geq \hat{f}(\xvec)$ for every $\xvec \in \gset$.
\end{lemma}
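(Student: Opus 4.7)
The plan is to induct on $n = |\Omega|$, exploiting the fact that both extensions admit a clean one-step recurrence after reordering the coordinates in decreasing order. After relabeling so that $x_1 \geq x_2 \geq \cdots \geq x_n$, I set $S_i := \{1, \dots, i\}$ with $S_0 := \varnothing$ and the convention $x_{n+1} := 0$, and let $\xvec' := \xvec \wedge \ind_{\Omega - n}$ denote the vector obtained by zeroing the smallest coordinate $x_n$.

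For the Lov\'asz extension, slicing the integral $\int_0^1 f(T_\lambda(\xvec))\,\diff \lambda$ along the thresholds $x_1, \dots, x_n$ yields the explicit expression
\[
\hat{f}(\xvec) \;=\; (1 - x_1) f(\varnothing) + \sum_{i=1}^{n} (x_i - x_{i+1}) f(S_i).
\]
A one-line telescoping then gives the one-step identity $\hat{f}(\xvec) - \hat{f}(\xvec') = x_n \cdot f(n | S_{n-1})$, since zeroing $x_n$ kills the two consecutive terms indexed by $i = n-1$ and $i = n$ in a way that collapses to exactly this marginal.

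For the multilinear side, I would use that $F$ is linear in the $n$-th coordinate and condition on whether $n \in R_{\xvec}$, which gives
\[
F(\xvec) - F(\xvec') \;=\; x_n \cdot \Ebb_{R_{\xvec'}}\!\bigl[f(n | R_{\xvec'})\bigr],
\]
where $R_{\xvec'}$ is supported on subsets of $S_{n-1}$. Since $R_{\xvec'} \subseteq S_{n-1}$ with probability one and $f$ is submodular, diminishing returns gives $f(n | R_{\xvec'}) \geq f(n | S_{n-1})$ pointwise; taking expectations yields $F(\xvec) - F(\xvec') \geq \hat{f}(\xvec) - \hat{f}(\xvec')$.

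Applying the inductive hypothesis to $\xvec'$, viewed as living on the ground set $\Omega - n$ of size $n-1$, yields $F(\xvec') \geq \hat{f}(\xvec')$; adding the two inequalities closes the induction. The base case $n = 1$ is immediate since both extensions collapse to $x_1 f(\{1\}) + (1-x_1) f(\varnothing)$. The only bookkeeping point I want to verify carefully is that zeroing $x_n$ preserves the decreasing order of the coordinates, which it does automatically because $x_n$ was already the minimum, so the sliced representation of $\hat{f}(\xvec')$ invoked by the inductive hypothesis coincides with the one obtained above by setting $x_n = 0$. I do not expect any serious obstacle beyond this — the entire argument rests on a single application of submodularity together with the multilinearity of $F$.
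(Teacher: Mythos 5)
Your proof is correct. Note that the paper itself offers no proof of this lemma --- it is imported verbatim as Lemma~A.4 of \cite{vondrak2013symmetry} --- so what you have produced is a self-contained derivation of a cited black-box result rather than an alternative to an argument in the text. Your route is the standard coordinate-peeling induction and it holds up: the sorted-slicing formula $\hat{f}(\xvec) = (1-x_1)f(\varnothing) + \sum_{i=1}^{n}(x_i - x_{i+1})f(S_i)$ is exactly what the definition $\hat{f}(\xvec)=\int_0^1 f(T_\lambda(\xvec))\,\diff\lambda$ gives (ties among coordinates only produce degenerate intervals, which contribute nothing to the integral, so the formula survives without a strictness assumption); the telescoping identity $\hat{f}(\xvec)-\hat{f}(\xvec') = x_n\, f(n\mid S_{n-1})$ checks out; and on the multilinear side, conditioning on whether $n\in R_{\xvec}$ indeed yields $F(\xvec)-F(\xvec') = x_n\,\Ebb_{R_{\xvec'}}[f(n\mid R_{\xvec'})]$, after which diminishing returns applied to $R_{\xvec'}\subseteq S_{n-1}$ closes the step. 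The two bookkeeping points worth making explicit if you wrote this up formally are (i) the one you already flagged --- zeroing the minimal coordinate preserves the sort order, so the two representations of $\hat{f}(\xvec')$ agree --- and (ii) that applying the inductive hypothesis on the ground set $\Omega - n$ is legitimate because the restriction of $f$ to $2^{\Omega - n}$ is still submodular and both extensions of the restriction, evaluated at $\xvec'$ with its zero coordinate dropped, coincide with $F(\xvec')$ and $\hat{f}(\xvec')$ (for $F$ because element $n$ is sampled with probability $0$, for $\hat{f}$ because $T_\lambda(\xvec')$ excludes $n$ for every $\lambda>0$). Neither point is an obstacle, and no step of the argument fails.
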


\subsection{Polytopes}
A polytope $\Pcal \subseteq \gset$ is said to be \emph{down-monotone} if $\xvec \in \Pcal$ and $\mathbf{0} \leq \yvec \leq \xvec$ imply that $\yvec \in \Pcal$. A polytope $\Pcal$ is \emph{solvable} if there is an oracle for optimizing normalized modular functions over $\Pcal$, i.e., for solving $\max_{\xvec \in \Pcal} \innprod{\lvec}{\xvec}$ for any vector $\lvec$.

Given a matroid $\Mcal = (\Omega, \Ical)$, then its \emph{matroid polytope} $\Pcal(\Mcal)$ is
\begin{equation*}
	\Pcal(\Mcal) := \mathrm{conv} \{ \ind_I : I \in \Ical \} = \left\{ \xvec \geq \mathbf{0} : \sum_{e \in S} x_e \leq r_{\Mcal} (S), \forall S \subseteq \Omega \right\},
\end{equation*}
where $r_{\Mcal}$ is the rank function of matroid $\Mcal$. According to the definition, matroid polytope $\Pcal(\Mcal)$ is down-monotone and solvable. 

\subsection{Extra Assumptions}
In this paper, we study the regularized non-monotone submodular maximization problem, i.e.,
\begin{equation*} 
	\begin{alignedat}{1}
		\max \quad & g(A) - \ell(A) \\ 
		\st \quad & A \in \Ical \subseteq 2^\Omega ,
	\end{alignedat}
\end{equation*}
where $\gfun$ is a non-monotone submodular function, $\ell \colon 2^\Omega \to \Rbb_+$ is a normalized modular function, and $\Ical$ is the family of independent sets of a matroid $\Mcal = (\Omega, \Ical)$.

Here, $g$ is non-monotone means that function $g$ is neither monotone non-decreasing nor monotone non-increasing. When $g$ is monotone non-decreasing, the corresponding regularized submodular maximization problem has been studyed by \cite{sviridenko2017optimal}, \cite{feldman2020guess}, \cite{harshaw2019submodular} and \cite{kazemi2020regularized}. When $g$ is monotone non-increasing, the corresponding regularized submodular maximization problem is trivial, since $\varnothing$ is the optimal solution.

Because function $g$ is neither monotone non-decreasing nor monotone non-increasing, it holds that $\max_{e\in \Omega} g(e|\varnothing) > 0$ and $\min_{e\in \Omega} g(e|\Omega - e) < 0$. Hence, we have
\begin{equation*}
	M := \max \left\{ \max_{e\in \Omega} g(e|\varnothing), -\min_{e\in \Omega} g(e|\Omega - e) \right\} > 0.
\end{equation*}
Note that given a non-monotone submodular function $g \colon 2^\Omega \to \Rbb$, parameter $M$ is linear-time computable.

\section{Matroid Constraint} \label{sec: matroid constraint}
In this section, we propose an algorithm, which is presented as Algorithm~\ref{mcg distorted}, for solving the relaxed problem, i.e.,
\begin{equation*} 
	\begin{alignedat}{1}
		\max \quad & G(\xvec) - L(\xvec) \\ 
		\st \quad & \xvec \in \Pcal \subseteq \gset ,
	\end{alignedat}
\end{equation*}
where $G$ is the multilinear extension of non-monotone submodular function $\gfun$, $L$ is the multilinear extension of modular function $\ell \colon 2^\Omega \to \Rbb_+$, and $\Pcal$ is a down-monotone and solvable polytope.

The key techniques we use are the measured continuous greedy algorithm proposed by \cite{feldman2011unified}, and the distorted objective introduced by \cite{feldman2020guess}. Note that while \cite{feldman2020guess} chose $(1+\delta)^{\frac{t-1}{\delta}}$ as the coefficient of their distorted objective, we set the coefficient as $(1-\delta)^{\frac{t-1}{\delta} - 1}$, which is more appropriate for the problem we study. 

\begin{algorithm}[htbp]
	\caption{\mcg\ with distorted objective}
	\label{mcg distorted}
	\SetAlgoLined
	\KwIn{function $g$ and $\ell$, polytope $\Pcal$, error $\epsilon \in (0,1)$}
	\KwOut{approximate fractional solution $\yvec (1)$}
	Initialize $\yvec(0) \leftarrow \ind_{\varnothing}$, $t \leftarrow 0$, $\delta \leftarrow \lceil 2 + n^2 / \epsilon \rceil ^{-1}$\;
	\While{ $t < 1$ }{
		\ForEach{ $e \in \Omega$ }{
			Let $w_e(t)$ be an estimate for $\Ebb [ g(e|R_{\yvec(t)}) ]$ obtained by averaging the value of the expression within this expectation for $ r = \lceil 2n^2 \epsilon^{-2} \ln(2n \epsilon^{-1} \delta^{-1}) \rceil $ independent samples of $R_{\yvec(t)}$.
		}	
		$\zvec(t) \leftarrow \argmax \{ \innprod{\vvec}{ (1 - \delta)^{ \frac{1-t}{\delta} - 1 } \wvec(t) - \lvec } : \vvec \in \Pcal \}$ \;
		$ \yvec(t+\delta) \leftarrow \yvec(t) + \delta \cdot \zvec(t) \circ ( \ind_{\Omega} - \yvec(t) ) $ \;
		$t \leftarrow t + \delta$ \;
	}
	\Return $\yvec(1)$ \;
\end{algorithm}

According to the settings of Algorithm~\ref{mcg distorted}, we have $\delta \leq \min \{ \sfrac 12, \epsilon n^{-2} \} $.

\subsection{Feasibility of Solution}
Let $\Tcal$ be the set of times considered by Algorithm~\ref{mcg distorted}, i.e., $\Tcal := \{ i\delta : 0 \leq i < \delta^{-1}, i \in \Zbb \}$. Firstly, we prove that each coordinate of $\yvec(t)$ during the iteration has a non-trivial upper bound.

\begin{lemma} \label{y(t) bound}
	For every time $t \in \Tcal \cup \{1\}$ and every element $e \in \Omega$, it holds that $0 \leq y_e (t) \leq 1 - (1-\delta)^{\frac t\delta}$.
\end{lemma}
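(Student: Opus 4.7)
The plan is to prove both bounds by induction on the time index, with the base case being immediate from the initialization $\yvec(0) = \ind_{\varnothing}$. Since $\Tcal \cup \{1\}$ is a discrete arithmetic progression with step size $\delta$, induction on the number of completed iterations is the natural structure.

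For the lower bound $y_e(t) \geq 0$, I would argue by induction that the update preserves nonnegativity. Writing the coordinate-wise recursion
\begin{equation*}
y_e(t+\delta) = y_e(t) + \delta \cdot z_e(t) \cdot (1 - y_e(t)),
\end{equation*}
I first note that $\zvec(t) \in \Pcal \subseteq \gset$, so $z_e(t) \in [0,1]$. If $y_e(t) \in [0,1]$, then the increment is a product of nonnegative factors, which is nonnegative, so $y_e(t+\delta) \geq y_e(t) \geq 0$.

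For the upper bound, I would again proceed by induction, assuming $y_e(t) \leq 1 - (1-\delta)^{t/\delta}$. Using $z_e(t) \leq 1$ and the fact that $1 - y_e(t) \geq 0$ (from the lower bound already established), the recursion gives
\begin{equation*}
y_e(t+\delta) \leq y_e(t) + \delta(1 - y_e(t)) = (1-\delta) y_e(t) + \delta.
\end{equation*}
Applying the inductive hypothesis and simplifying,
\begin{equation*}
y_e(t+\delta) \leq (1-\delta)\bigl(1 - (1-\delta)^{t/\delta}\bigr) + \delta = 1 - (1-\delta)^{(t+\delta)/\delta},
\end{equation*}
which is exactly the claimed bound at time $t+\delta$. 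The case $t=1$ follows since $1 \in \Tcal \cup \{1\}$ is just the final update of the loop.

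I do not expect any real obstacle here: the proof is a one-line algebraic manipulation after noting $z_e(t) \leq 1$. The only subtle point worth stating clearly is that $\Pcal \subseteq \gset$ guarantees $z_e(t) \in [0,1]$, which is what makes both the lower bound (nonnegativity of the increment) and the upper bound (the coefficient $(1-\delta)$ appearing in front of $y_e(t)$) work out.
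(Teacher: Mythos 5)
Your proof is correct and follows essentially the same inductive argument as the paper: base case from $\yvec(0)=\ind_{\varnothing}$, nonnegativity of the increment for the lower bound, and the bound $z_e(t)\leq 1$ followed by the algebraic identity $(1-\delta)\bigl(1-(1-\delta)^{t/\delta}\bigr)+\delta = 1-(1-\delta)^{t/\delta+1}$ for the upper bound. No gaps.
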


\begin{proof}
	We prove this lemma by induction on $t$.
	
	\noindent \textbf{Basis step:} Since $\yvec(0) = \ind_{\varnothing}$, the inequalities hold for $t = 0$.
	
	\noindent \textbf{Induction step:} Assume that $0 \leq y_e (t) \leq 1 - (1-\delta)^{\frac t\delta}$ is valid for time $t \in \Tcal$ and element $e \in \Omega$. We now prove that $0 \leq y_e (t+\delta) \leq 1 - (1-\delta)^{\frac t\delta + 1}$.
	
	According to Algorithm~\ref{mcg distorted}, we have $y_e (t+\delta) = y_e(t) + \delta z_e(t) ( 1 - y_e(t) )$. Moreover, $\zvec(t) \in \Pcal \subseteq \gset$ implies that $z_e(t) \in [0,1]$ for every $e \in \Omega$. Thus, it follows that $y_e (t+\delta) \geq 0$.
	
	Meanwhile, by assumption, we have
	\begin{align*}
		y_e (t+\delta) &\leq y_e(t) + \delta ( 1 - y_e(t) ) \\
		&\leq \delta + (1-\delta) \left[ 1 - (1-\delta)^{\frac t\delta} \right] \\
		&= 1 - (1-\delta)^{\frac t\delta + 1}.
	\end{align*}

	Hence, by the principle of induction, this lemma holds.
\end{proof}

Next, we prove that the solution which Algorithm~\ref{mcg distorted} produces is a feasible solution.

\begin{corollary}
	$\yvec(1) \in \Pcal$.
\end{corollary}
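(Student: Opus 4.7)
The plan is to show that $\yvec(1)$ is coordinatewise dominated by a convex combination of the vectors $\zvec(s)\in\Pcal$, and then invoke the down-monotonicity of $\Pcal$. The key observation is that the update rule can be rewritten multiplicatively in the complement: from $y_e(t+\delta)=y_e(t)+\delta z_e(t)(1-y_e(t))$ one immediately gets $1-y_e(t+\delta)=(1-y_e(t))(1-\delta z_e(t))$. Iterating from $\yvec(0)=\ind_\varnothing$ over all $s\in\Tcal$, I obtain the closed form
\[
1-y_e(1)=\prod_{s\in\Tcal}\bigl(1-\delta z_e(s)\bigr).
\]

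Since every $\zvec(s)\in\Pcal\subseteq\gset$ and $\delta\le \tfrac12$, each factor $\delta z_e(s)$ lies in $[0,1]$, so the elementary inequality $\prod_i(1-a_i)\ge 1-\sum_i a_i$ for $a_i\in[0,1]$ (a one-line induction, using that the statement is trivial whenever the right-hand side is non-positive) yields
\[
y_e(1)\le \delta\sum_{s\in\Tcal} z_e(s).
\]

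Now the crucial arithmetic point: by construction $\delta^{-1}=\lceil 2+n^2/\epsilon\rceil$ is a positive integer, so $|\Tcal|=\delta^{-1}$ and therefore $\delta\sum_{s\in\Tcal}\zvec(s)$ is exactly the average $|\Tcal|^{-1}\sum_{s\in\Tcal}\zvec(s)$, a genuine convex combination of points of $\Pcal$. Convexity of the polytope places this average in $\Pcal$, and combining $\mathbf{0}\le\yvec(1)\le \delta\sum_{s\in\Tcal}\zvec(s)$ (the lower bound comes from the preceding lemma) with down-monotonicity of $\Pcal$ immediately gives $\yvec(1)\in\Pcal$.

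I do not anticipate a real obstacle here. The only things to be careful about are (i) checking that $\delta^{-1}\in\Zbb$ so that the upper-bounding vector is a \emph{convex} combination rather than just a positive combination of the $\zvec(s)$, and (ii) noting that $\delta z_e(s)\in[0,1]$ so that the product-versus-sum inequality applies; both follow directly from the choice of $\delta$ in Algorithm~\ref{mcg distorted} and from $\zvec(s)\in\Pcal\subseteq\gset$.
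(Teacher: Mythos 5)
Your proof is correct, but it follows a different route from the paper's. The paper uses the exact telescoping identity $\yvec(1)=\sum_{t\in\Tcal}\delta\cdot\zvec(t)\circ(\ind_{\Omega}-\yvec(t))$, observes that each vector $\zvec(t)\circ(\ind_{\Omega}-\yvec(t))$ lies between $\ind_{\varnothing}$ and $\zvec(t)$ and hence belongs to $\Pcal$ by down-monotonicity, and then concludes by convexity that the resulting convex combination is in $\Pcal$. You instead pass to the multiplicative form $1-y_e(1)=\prod_{s\in\Tcal}(1-\delta z_e(s))$, use the Weierstrass product inequality to dominate $\yvec(1)$ by the average $|\Tcal|^{-1}\sum_{s}\zvec(s)$, apply convexity to place that average in $\Pcal$, and only then invoke down-monotonicity once at the end. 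Both arguments use exactly the same two properties of $\Pcal$ (convexity and down-monotonicity) but in opposite order and with different decompositions: the paper's is an exact identity requiring no inequality beyond $0\le\ind_{\Omega}-\yvec(t)\le\ind_{\Omega}$, while yours trades that for a domination inequality but avoids having to argue membership in $\Pcal$ for the $\delta^{-1}$ modified vectors individually. Your care in checking that $\delta^{-1}$ is an integer (so the upper bound is a genuine convex combination) is warranted and correct; note that the paper's version needs the same fact, since the weights $\delta$ must sum to exactly $1$.
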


\begin{proof}
	According to Algorithm~\ref{mcg distorted}, we have $\yvec(1) = \sum_{t \in \Tcal} \delta \cdot \zvec(t) \circ ( \ind_{\Omega} - \yvec(t) )$. By Lemma~\ref{y(t) bound}, it holds that $\ind_{\varnothing} \leq \ind_{\Omega} - \yvec(t) \leq \ind_{\Omega}$ for every time $t \in \Tcal$, which indicates that $\ind_{\varnothing} \leq \zvec(t) \circ ( \ind_{\Omega} - \yvec(t) ) \leq \zvec(t)$ is valid for every time $t \in \Tcal$. Since $\Pcal$ is a down-monotone polytope and $\zvec \in \Pcal$, we get that $\zvec(t) \circ ( \ind_{\Omega} - \yvec(t) ) \in \Pcal$ for every time $t \in \Tcal$. 
	
	Thus, $\yvec(1) = \sum_{t \in \Tcal} \delta \cdot \zvec(t) \circ ( \ind_{\Omega} - \yvec(t) )$ is a convex combination of vectors in $\Pcal$. The convexity of $\Pcal$ implies that $\yvec(1) \in \Pcal$.
\end{proof}

\subsection{Good Estimator}
In this subsection, we prove that it's a low probability event that any of the estimates for $\Ebb [ g(e|R_{\yvec(t)}) ]$ made by Algorithm~\ref{mcg distorted} has a significant error. We will need the following lemma to assist our proof.

\begin{lemma}[Chernoff bound in  \cite{alon2004probabilistic}] \label{chernoff bound}
	Let $X_i (i = 1,2,\ldots,k)$ be mutually independent random variables with $\Ebb[X_i] = 0$ and $|X_i| \leq 1$ for every $i = 1,2,\ldots,k$. Set $S = X_1 + X_2 + \cdots + X_k$ and let $a$ be a positive number. Then, it holds that $\pr [ |S| > a ] \leq 2 e^{-a^2 / 2k}$.
\end{lemma}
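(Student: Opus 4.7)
The plan is to prove this two-sided tail bound via the classical exponential moment (Chernoff) method, applied symmetrically to $S$ and $-S$. The core ingredient is a uniform Hoeffding-type estimate $\Ebb[e^{\lambda X_i}] \leq e^{\lambda^2/2}$ valid for every mean-zero random variable with $|X_i| \leq 1$; once this is in hand, mutual independence lets the moment generating function of $S$ factor, and a single scalar optimization over $\lambda$ yields the Gaussian-type decay.

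Concretely, I would first fix $\lambda > 0$ and apply Markov's inequality to the non-negative random variable $e^{\lambda S}$ to obtain
\[
\pr[S > a] \leq e^{-\lambda a}\,\Ebb[e^{\lambda S}] = e^{-\lambda a} \prod_{i=1}^{k} \Ebb[e^{\lambda X_i}],
\]
where the last equality uses independence. To bound each factor, I would invoke convexity of $x \mapsto e^{\lambda x}$ on $[-1,1]$, writing $X_i$ as the convex combination $\tfrac{1+X_i}{2}\cdot 1 + \tfrac{1-X_i}{2}\cdot(-1)$, so that $e^{\lambda X_i} \leq \tfrac{1+X_i}{2} e^{\lambda} + \tfrac{1-X_i}{2} e^{-\lambda}$. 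Taking expectations and invoking $\Ebb[X_i]=0$ yields $\Ebb[e^{\lambda X_i}] \leq \cosh(\lambda) \leq e^{\lambda^2/2}$, the last comparison following from a term-by-term check of Taylor coefficients (since $(2m)! \geq 2^m m!$). Substituting gives $\pr[S>a] \leq e^{-\lambda a + k\lambda^2/2}$, which is minimized at $\lambda = a/k$, producing $\pr[S>a] \leq e^{-a^2/(2k)}$. Applying the identical argument to $-S$ (whose summands $-X_i$ still satisfy the hypotheses) bounds $\pr[S<-a]$ by the same quantity, and a union bound delivers
\[
\pr[|S|>a] \leq 2 e^{-a^2/(2k)}.
\]

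The only step deserving nontrivial care is the MGF estimate $\Ebb[e^{\lambda X_i}] \leq e^{\lambda^2/2}$: one cannot use $\Ebb[X_i]=0$ directly inside the exponential, so the convexity/secant device is essential to linearize $e^{\lambda X_i}$ in $X_i$, after which the mean-zero hypothesis applies cleanly. The remaining analytic comparison $\cosh(\lambda) \leq e^{\lambda^2/2}$ and the optimization over $\lambda$ are routine.
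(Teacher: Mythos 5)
Your proof is correct and complete; the paper itself offers no proof of this lemma, merely citing it from Alon and Spencer, and your argument (Markov applied to $e^{\lambda S}$, the convexity/secant bound giving $\Ebb[e^{\lambda X_i}] \leq \cosh(\lambda) \leq e^{\lambda^2/2}$, optimization at $\lambda = a/k$, and symmetrization plus a union bound) is exactly the standard derivation found in that reference. No gaps.
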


Let $\Ecal(e,t)$ be the event that $\left|w_e(t) - \Ebb [ g(e|R_{\yvec(t)}) ] \right| \leq \frac{2 \epsilon M}{n}$ for element $e \in \Omega$ and time $t \in \Tcal$. And, let $\Ecal$ be the event $\cap_{e \in \Omega} \cap_{t \in \Tcal} \Ecal(e,t)$, namely, $\left|w_e(t) - \Ebb [ g(e|R_{\yvec(t)}) ] \right| \leq \frac{2 \epsilon M}{n}$ for every element $e \in \Omega$ and every time $t \in \Tcal$. Next, we show that $\Ecal$ is a high probability event.

\begin{lemma} \label{high P event}
	$\pr [\Ecal] \geq 1 - \epsilon $.
\end{lemma}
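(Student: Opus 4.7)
The plan is to apply the Chernoff bound of Lemma~\ref{chernoff bound} pointwise in $(e,t)$ and then take a union bound over the at most $n \cdot \delta^{-1}$ pairs. The setup requires two ingredients: an a priori bound on $|g(e|R_{\yvec(t)})|$ so the Chernoff hypothesis can be met after rescaling, and the arithmetic verification that the chosen sample size $r$ is exactly tuned to make the union bound total $\epsilon$.

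First I would establish the range bound: for every $S \subseteq \Omega$ and every $e \in \Omega$, submodularity gives $g(e|\Omega - e) \leq g(e|S) \leq g(e|\varnothing)$ (the inequality is trivial when $e \in S$, since then $g(e|S)=0$, and both endpoints lie within $[-M,M]$ by definition of $M$). Consequently $|g(e|R_{\yvec(t)})| \leq M$ with probability one, so each of the $r$ i.i.d.\ samples averaged to form $w_e(t)$ lies in $[-M,M]$.

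Next, fix $e \in \Omega$ and $t \in \Tcal$. For each sample $i = 1,\ldots,r$, let $X_i^{(e,t)}$ be the difference between the $i$-th realization of $g(e|R_{\yvec(t)})$ and its expectation, rescaled by $1/(2M)$. Then $\Ebb[X_i^{(e,t)}] = 0$ and $|X_i^{(e,t)}| \leq 1$, so Lemma~\ref{chernoff bound} applies. Writing $S = \sum_{i=1}^r X_i^{(e,t)}$, the deviation $|w_e(t) - \Ebb[g(e|R_{\yvec(t)})]| > 2\epsilon M/n$ translates exactly into $|S| > a$ with $a = \epsilon r/n$. Thus
\begin{equation*}
	\pr[\neg \Ecal(e,t)] \leq 2\exp\!\left(-\frac{a^2}{2r}\right) = 2\exp\!\left(-\frac{\epsilon^2 r}{2n^2}\right).
\end{equation*}
Plugging in $r \geq 2 n^2 \epsilon^{-2} \ln(2 n \epsilon^{-1} \delta^{-1})$ collapses the right-hand side to $2 \cdot \frac{\epsilon \delta}{2n} = \frac{\epsilon \delta}{n}$.

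Finally, I would union-bound over all $e \in \Omega$ and all $t \in \Tcal$. Since $|\Omega| = n$ and $|\Tcal| \leq \delta^{-1}$,
\begin{equation*}
	\pr[\neg \Ecal] \leq \sum_{e \in \Omega}\sum_{t \in \Tcal} \pr[\neg \Ecal(e,t)] \leq n \cdot \delta^{-1} \cdot \frac{\epsilon \delta}{n} = \epsilon,
\end{equation*}
which yields the claim. I do not expect any genuine obstacle: the only subtlety is the initial step of noting that $M$ also bounds $|g(e|S)|$ for arbitrary $S$ (not just for $S = \varnothing$ or $S = \Omega - e$), which requires invoking submodularity once. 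The sample-size constant in the algorithm was clearly reverse-engineered to make the final union-bound arithmetic come out to exactly $\epsilon$, so the calculation should drop out cleanly.
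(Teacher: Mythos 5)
Your proposal is correct and follows essentially the same route as the paper: rescale the sampling errors by $1/(2M)$ using the submodularity-derived range bound, apply the Chernoff bound with $a=\epsilon r/n$ to get $\pr[\neg\Ecal(e,t)]\leq \epsilon\delta/n$, and union-bound over the $n\cdot\delta^{-1}$ pairs. The only (harmless) addition is your explicit justification that $M$ bounds $|g(e|S)|$ for arbitrary $S$, which the paper asserts more briefly.
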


\begin{proof}
	Consider an arbitrary element $e \in \Omega$ and time $t \in \Tcal$, and let us denote by $R_i$ the $i$-th independent sample of $R_{\yvec(t)}$ used for calculating $w_e(t)$. We define random variables $X_i = \frac{1}{2M} \left\{ g(e|R_i) - \Ebb [ g(e|R_{\yvec(t)}) ] \right\} $ for every $i = 1,2,\ldots,r$.
	
	By the linearity of expectation, we have $\Ebb [X_i] = 0$. Since function $g$ is submodular, it holds that $g(e|R_i)$ and $\Ebb [ g(e|R_{\yvec(t)}) ]$ are both located in the interval $[-M,M]$. Thus, $X_i \in [-1,1]$ holds for every $i = 1,2,\ldots,r$. 
	
	By Lemma~\ref{chernoff bound}, it holds that
	\begin{align*}
		\pr \left[ \overline{ \Ecal(e,t) } \right] &= \pr \left[ \left|w_e(t) - \Ebb [ g(e|R_{\yvec(t)}) ] \right| > \frac{2 \epsilon M}{n} \right] \\
		&= \pr \left[ \left| \sum_{i=1}^{r} X_i \right| > \frac{\epsilon r}{n} \right] \\
		&\leq 2e^{-\frac{1}{2r} \left( \frac{\epsilon r}{n} \right)^2 }\\
		&= \frac{\epsilon \delta}{n}.
	\end{align*}

	By union bound, we have
	\begin{align*}
		\pr \left[ \overline{\Ecal} \right] &= \pr \left[ \cup_{e \in \Omega} \cup_{t \in \Tcal} \overline{ \Ecal(e,t) } \right] \\
		&\leq \sum_{e \in \Omega} \sum_{t \in \Tcal} \pr \left[ \overline{ \Ecal(e,t) } \right] \\
		&\leq \epsilon.
	\end{align*}
	Thus, $\pr [ \Ecal ] = 1 - \pr \left[ \overline{\Ecal} \right] \geq 1 - \epsilon$, which concludes the proof.
\end{proof}

\subsection{A Technical Lemma}
In this subsection, we prove a technical lemma which characterize the behavior of multilinear extension within a small neighbor.

\begin{lemma} \label{technical lemma}
	Given two vectors $\yvec,\yvec' \in \gset$ such that $\left| y'_e - y_e \right| \leq \delta \leq 1$ and a non-negative submodular function $\fun$ whose multilinear extension is $F$. Then, it holds that
	\begin{equation*}
		| F(\yvec') - F(\yvec) - \innprod{ \nabla F(\yvec) }{ \yvec' - \yvec } | \leq n^2 \delta^2 M,
	\end{equation*}
	where $n = |\Omega|$ and $M = \max \left\{ \max_{e\in \Omega} f(e|\varnothing), -\min_{e\in \Omega} f(e|\Omega - e) \right\} > 0$.
\end{lemma}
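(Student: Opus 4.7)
The plan is to derive the estimate from a second-order Taylor expansion of $F$ along the segment joining $\yvec$ and $\yvec'$. The structural facts I would exploit are that $F$ is multilinear, which forces $\partial^2 F/\partial x_e^2 \equiv 0$ for every $e$, together with submodularity of $f$, which confines every discrete marginal $f(e'|S)$ to the interval $[-M,M]$.

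Concretely, I would set $\yvec(s) := \yvec + s(\yvec' - \yvec)$ for $s \in [0,1]$ and $\phi(s) := F(\yvec(s))$. Since $F$ is a polynomial, $\phi$ is smooth, and Taylor's theorem with integral remainder gives
\begin{equation*}
F(\yvec') - F(\yvec) - \innprod{\nabla F(\yvec)}{\yvec' - \yvec} = \phi(1) - \phi(0) - \phi'(0) = \int_0^1 (1-s)\phi''(s)\diff s.
\end{equation*}
The chain rule yields
\begin{equation*}
\phi''(s) = \sum_{e, e' \in \Omega} \frac{\partial^2 F}{\partial x_e \partial x_{e'}}\bigl(\yvec(s)\bigr)\,(y'_e - y_e)(y'_{e'} - y_{e'}),
\end{equation*}
and multilinearity of $F$ kills the diagonal $e=e'$ terms, leaving only the $n(n-1)$ off-diagonal contributions.

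For each surviving mixed partial I would invoke the identity $\frac{\partial^2 F(\xvec)}{\partial x_e \partial x_{e'}} = \Ebb_{R_{\xvec}}[f(e'|R_{\xvec} + e - e') - f(e'|R_{\xvec} - e - e')]$ already recorded in the preliminaries. The key estimate is that submodularity forces $f(e'|\varnothing) \geq f(e'|S) \geq f(e'|\Omega - e')$ for every $S \subseteq \Omega \setminus \{e'\}$, so by the very definition of $M$ every marginal $f(e'|S)$ lies in $[-M, M]$; consequently each mixed partial is bounded in absolute value by $2M$, uniformly in $\xvec$. Combining these ingredients with $|y'_e - y_e| \leq \delta$ gives $|\phi''(s)| \leq 2M n(n-1)\delta^2 \leq 2Mn^2\delta^2$, and then $\int_0^1 (1-s)\diff s = \tfrac{1}{2}$ folds the integral into the claimed bound $n^2\delta^2 M$. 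The only genuinely non-trivial step is the uniform marginal bound $|f(e'|S)| \leq M$; everything else is routine calculus on the multilinear polynomial $F$, and I would remark that non-negativity of $f$ plays no role in the argument — only submodularity is used.
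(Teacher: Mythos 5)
Your proposal is correct and follows essentially the same route as the paper: a second-order Taylor expansion of $F$ along the segment from $\yvec$ to $\yvec'$, with each mixed partial $\partial^2 F/\partial x_e\partial x_{e'}$ bounded in absolute value by $2M$ via the expectation identity and the submodularity-based confinement of all marginals to $[-M,M]$. The only differences are cosmetic — you use the integral form of the remainder where the paper uses the Lagrange form, and you note that the diagonal terms vanish by multilinearity (the paper simply bounds all $n^2$ terms by $2M$) — and your observations that the marginal bound is the one non-trivial step and that non-negativity of $f$ is not actually used are both accurate.
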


\begin{proof}
	According to Taylor's theorem, we have
	\begin{equation*}
		F(\yvec') - F(\yvec) = \innprod{\nabla F(\yvec)}{\yvec' - \yvec} + \frac 12 \sum_{e,e' \in \Omega} \frac{\partial^2 F( \yvec + t(\yvec'-\yvec) )}{\partial x_e \partial x_e' } \cdot (y'_{e} - y_{e})(y'_{e'} - y_{e'}),
	\end{equation*}
	where $t \in [0,1]$.
	
	For every $e,e' \in \Omega$ and every $\xvec \in \gset$, it holds that
	\begin{align*}
		\frac{\partial^2 F( \xvec )}{\partial x_e \partial x_e' } &= \Ebb_{ R_{\xvec} } [ f(e' | R_{\xvec} + e - e' ) - f(e' | R_{\xvec} - e - e' ) ].
	\end{align*}
	Meanwhile, by the definition of $M$, we have $f(e' | R_{\xvec} + e - e' ) , f(e' | R_{\xvec} - e - e' ) \in [-M,M]$. It follows that
	\begin{align*}
		\left| \frac{\partial^2 F( \xvec )}{\partial x_e \partial x_e' } \right| &= \left| \Ebb_{ R_{\xvec} } [ f(e' | R_{\xvec} + e - e' ) - f(e' | R_{\xvec} - e - e' ) ] \right| \\
		&\leq  \Ebb_{ R_{\xvec} } [ | f(e' | R_{\xvec} + e - e' ) | + | f(e' | R_{\xvec} - e - e' ) | ] \\
		&\leq 2M.
	\end{align*}

	Thus, we obtain
	\begin{align*}
		\left| F(\yvec') - F(\yvec) - \innprod{\nabla F(\yvec)}{\yvec' - \yvec} \right| &= \left| \frac 12 \sum_{e,e' \in \Omega} \frac{\partial^2 F( \yvec + t(\yvec'-\yvec) )}{\partial x_e \partial x_e' } \cdot (y'_{e} - y_{e})(y'_{e'} - y_{e'}) \right| \\
		&\leq \frac 12 \sum_{e,e' \in \Omega} \left| \frac{\partial^2 F( \yvec + t(\yvec'-\yvec) )}{\partial x_e \partial x_e' } \right| \cdot \left|(y'_{e} - y_{e})(y'_{e'} - y_{e'}) \right| \\
		&\leq n^2 \delta^2 M,
	\end{align*}
	which concludes the proof.
\end{proof}

\subsection{Performance Guarantee}
In this subsection, we analyze the performance guarantee of Algorithm~\ref{mcg distorted}. Firstly, we lower bound the increase of the multilinear extension $G$ of submodular function $g$ in each iteration.

\begin{lemma} \label{G bound}
	If the event $\Ecal$ happens, then, for every time $t \in \Tcal$, it holds that
	\begin{equation*}
		G( \yvec (t+\delta) ) - G( \yvec (t) ) \geq \delta \innprod{\wvec(t)}{\zvec(t)} - 3\epsilon \delta M,
	\end{equation*}
	where $M = \max \left\{ \max_{e\in \Omega} g(e|\varnothing), -\min_{e\in \Omega} g(e|\Omega - e) \right\} > 0$.
\end{lemma}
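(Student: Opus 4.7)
The plan is to relate the increase $G(\yvec(t+\delta)) - G(\yvec(t))$ to a first-order approximation using the technical lemma, then convert the gradient into marginal contributions, and finally absorb the estimation error given by event $\Ecal$.

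First, I would observe that the update rule gives $\yvec(t+\delta) - \yvec(t) = \delta \cdot \zvec(t) \circ (\ind_\Omega - \yvec(t))$, whose coordinates all lie in $[0,\delta]$ since $z_e(t) \in [0,1]$ and $1 - y_e(t) \in [0,1]$. So Lemma~\ref{technical lemma} applies to $G$ with step size $\delta$ (the lemma's proof only uses the definition of $M$, not non-negativity of $g$), yielding
\begin{equation*}
G(\yvec(t+\delta)) - G(\yvec(t)) \geq \innprod{\nabla G(\yvec(t))}{\yvec(t+\delta) - \yvec(t)} - n^2 \delta^2 M.
\end{equation*}

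Next, I would rewrite the inner product using the multilinear-extension identity $(1 - y_e(t)) \frac{\partial G(\yvec(t))}{\partial x_e} = \Ebb_{R_{\yvec(t)}}[g(e \mid R_{\yvec(t)})]$ recalled in the preliminaries. Since the $e$-th coordinate of $\yvec(t+\delta) - \yvec(t)$ is precisely $\delta\, z_e(t)(1 - y_e(t))$, the $(1 - y_e(t))$ factors absorb neatly and
\begin{equation*}
\innprod{\nabla G(\yvec(t))}{\yvec(t+\delta) - \yvec(t)} = \delta \sum_{e \in \Omega} z_e(t) \cdot \Ebb_{R_{\yvec(t)}}[g(e \mid R_{\yvec(t)})].
\end{equation*}

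Now I would invoke event $\Ecal$, which gives $|w_e(t) - \Ebb[g(e \mid R_{\yvec(t)})]| \leq 2\epsilon M / n$ for every $e$. Using $z_e(t) \in [0,1]$ and $\sum_{e \in \Omega} z_e(t) \leq n$ (which follows from $\zvec(t) \in \Pcal \subseteq \gset$, noting that in the full analysis one would verify $\sum z_e \leq n$ trivially), the total deviation between $\sum_e z_e(t) \Ebb[g(e|R_{\yvec(t)})]$ and $\innprod{\wvec(t)}{\zvec(t)}$ is at most $n \cdot (2\epsilon M / n) = 2\epsilon M$. Hence
\begin{equation*}
G(\yvec(t+\delta)) - G(\yvec(t)) \geq \delta \innprod{\wvec(t)}{\zvec(t)} - 2\epsilon \delta M - n^2 \delta^2 M.
\end{equation*}

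The final step, and the only slightly delicate bookkeeping, is to bound the Taylor remainder $n^2 \delta^2 M$ by $\epsilon \delta M$. This is where the choice $\delta \leq \epsilon n^{-2}$ recorded after Algorithm~\ref{mcg distorted} is used: it gives $n^2 \delta \leq \epsilon$, so $n^2 \delta^2 M \leq \epsilon \delta M$, and combining the two error terms yields the claimed $3\epsilon \delta M$ slack. The main conceptual obstacle is simply making sure the three error sources (Taylor remainder, estimator error, and the conversion between $\partial G/\partial x_e$ and $\Ebb[g(e|R_{\yvec(t)})]$) are each cleanly bounded and summed; no estimation is needed for the third source because the $(1 - y_e(t))$ terms cancel exactly against the coordinates of $\yvec(t+\delta) - \yvec(t)$.
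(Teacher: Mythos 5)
Your proposal is correct and follows essentially the same route as the paper: apply the second-order Taylor bound from the technical lemma, convert the gradient term to $\delta \sum_{e} z_e(t)\,\Ebb[g(e\mid R_{\yvec(t)})]$ via the identity $(1-y_e)\partial G/\partial x_e = \Ebb[g(e\mid R_{\yvec(t)})]$, use event $\Ecal$ with $z_e(t)\le 1$ to lose at most $2\epsilon\delta M$, and absorb the remainder $n^2\delta^2 M\le\epsilon\delta M$ using $\delta\le\epsilon n^{-2}$. The error bookkeeping matches the paper's exactly.
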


\begin{proof}
	By the settings of Algorithm~\ref{mcg distorted}, we have $\yvec(t+\delta) = \yvec(t) + \delta \cdot \zvec(t) \circ ( \ind_{\Omega} - \yvec(t) )$. According to Lemma~\ref{y(t) bound} and $\zvec(t) \in \Pcal \subseteq \gset$, it holds that $\left| y_e (t+\delta) - y_e(t) \right| = \left| \delta z_e(t) ( 1 - y_e(t) ) \right| \leq \delta$ for every element $e \in \Omega$. 
	
	Thus, we get
	\begin{align*}
		G( \yvec (t+\delta) ) - G( \yvec (t) ) &\geq \innprod{\nabla G( \yvec ) }{ \yvec (t+\delta) - \yvec(t) } - n^2 \delta^2 M \\
		&= \innprod{\nabla G( \yvec ) }{ \delta \cdot \zvec(t) \circ ( \ind_{\Omega} - \yvec(t) ) } - n^2 \delta^2 M \\
		&= \delta \cdot \sum_{e \in \Omega} z_e(t) \cdot \Ebb[ g(e| R_{\yvec(t)} ) ] - n^2 \delta^2 M,
	\end{align*}
	where the first inequality follows by Lemma~\ref{technical lemma}, the first equality follows by $\yvec(t+\delta) = \yvec(t) + \delta \cdot \zvec(t) \circ ( \ind_{\Omega} - \yvec(t) )$, and the second equality follows by the property of partial derivatives of multilinear extension and the definition of operation $\circ$ and inner product.
	
	When the event $\Ecal$ happens, it holds that $\Ebb[ g(e| R_{\yvec(t)} ) ] \geq w_e(t) - \frac{2\epsilon M}{n}$ for every element $e \in \Omega$ and every time $t \in \Tcal$. Then, it follows that
	\begin{align*}
		G( \yvec (t+\delta) ) - G( \yvec (t) ) &\geq \delta \cdot \sum_{e \in \Omega} z_e(t) \cdot \left[ w_e(t) - \frac{2\epsilon M}{n} \right] - n^2 \delta^2 M \\
		&\geq \delta \innprod{\wvec(t)}{\zvec(t)} - 2\epsilon \delta M - n^2 \delta^2 M \\
		&\geq \delta \innprod{\wvec(t)}{\zvec(t)} - 3\epsilon \delta M,
	\end{align*}
	where the last inequality follows by $\delta \leq \min \{ \sfrac 12 , \epsilon n^{-2} \}$.
	
	This completes the proof.
\end{proof}

Similarly, we upper bound the increase of the multilinear extension $L$ of modular function $\ell$ in each iteration.

\begin{lemma} \label{L bound}
	For every time $t \in \Tcal$, it holds that
	\begin{equation*}
		L( \yvec (t+\delta) ) - L( \yvec (t) ) \leq \delta \innprod{\lvec}{\zvec(t)}.
	\end{equation*}
\end{lemma}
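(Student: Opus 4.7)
The plan is to exploit the linearity of the multilinear extension $L$ of a modular function and the non-negativity of the coefficients $\ell_e$ to obtain the upper bound directly, with no need for second-order expansions or estimator analysis.

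First, I would recall from the preliminaries that for a normalized modular function $\ell \colon 2^\Omega \to \Rbb$, its multilinear extension satisfies $L(\xvec) = \innprod{\lvec}{\xvec}$ for every $\xvec \in \gset$. Consequently, $L$ is linear, so
\begin{equation*}
	L(\yvec(t+\delta)) - L(\yvec(t)) = \innprod{\lvec}{\yvec(t+\delta) - \yvec(t)}.
\end{equation*}
Next, I would substitute the update rule of Algorithm~\ref{mcg distorted}, namely $\yvec(t+\delta) - \yvec(t) = \delta \cdot \zvec(t) \circ (\ind_\Omega - \yvec(t))$, to rewrite the right-hand side as $\delta \sum_{e \in \Omega} \ell_e \cdot z_e(t) \cdot (1 - y_e(t))$.

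Finally, I would invoke the standing hypothesis that $\ell \colon 2^\Omega \to \Rbb_+$ is non-negative (so $\ell_e \geq 0$ for every $e$), together with the fact that $\zvec(t) \in \Pcal \subseteq \gset$ gives $z_e(t) \geq 0$, and Lemma~\ref{y(t) bound} gives $1 - y_e(t) \leq 1$. Each summand therefore satisfies $\ell_e \cdot z_e(t) \cdot (1 - y_e(t)) \leq \ell_e \cdot z_e(t)$, and summing yields
\begin{equation*}
	L(\yvec(t+\delta)) - L(\yvec(t)) \leq \delta \sum_{e \in \Omega} \ell_e \cdot z_e(t) = \delta \innprod{\lvec}{\zvec(t)},
\end{equation*}
as required. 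There is no real obstacle here; the result is a one-line calculation whose only non-trivial ingredient is the non-negativity assumption on $\lvec$, which is essential (without it, the factor $1 - y_e(t) \leq 1$ could not be used to drop the coordinate and obtain the claimed inequality).
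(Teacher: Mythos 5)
Your proof is correct and follows essentially the same route as the paper: both use the linearity $L(\xvec) = \innprod{\lvec}{\xvec}$, substitute the update rule $\yvec(t+\delta) - \yvec(t) = \delta \cdot \zvec(t) \circ (\ind_\Omega - \yvec(t))$, and drop the non-negative term involving $\yvec(t)$ using the non-negativity of $\lvec$, $\zvec(t)$ and $\yvec(t)$. The only cosmetic difference is that you argue coordinate-wise while the paper keeps the inner-product notation.
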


\begin{proof}
	For every time $t \in \Tcal$, we have
	\begin{align*}
		L( \yvec (t+\delta) ) - L( \yvec (t) ) &= \innprod{\lvec}{ \yvec (t+\delta) - \yvec(t) } \\
		&= \delta \innprod{\lvec}{\zvec(t)} - \delta \innprod{\lvec}{\zvec(t) \circ \yvec(t) } \\
		&\leq \delta \innprod{\lvec}{\zvec(t)},
	\end{align*}
	where the first equality follows by the multilinear extension of a normalized modular function, the second equality follows by $\yvec(t+\delta) = \yvec(t) + \delta \cdot \zvec(t) \circ ( \ind_{\Omega} - \yvec(t) )$, and the inequality follows by the non-negativity of $\delta$, $\lvec$, $\zvec(t)$ and $\yvec(t)$.
	
	The proof is completed.
\end{proof}

We define a distorted objective function as $\Phi(t) := (1 - \delta)^{ \frac{1-t}{\delta} } G( \yvec(t) ) - L( \yvec(t) )$. Notice that $\Phi(t)$ varies the relative importance between $g$ and $\ell$ as the algorithm proceeds. Next, we lower bound the increase of the distorted objective function $\Phi(t)$ in each iteration.

\begin{lemma} \label{phi bound}
	If the event $\Ecal$ happens, then, for every time $t \in \Tcal$, it holds that
	\begin{equation*}
		\Phi(t + \delta) - \Phi(t) \geq \delta \left[ e^{-1} g(OPT) - \ell(OPT) \right] - 5\epsilon \delta M,
	\end{equation*}
	where $M = \max \left\{ \max_{e\in \Omega} g(e|\varnothing), -\min_{e\in \Omega} g(e|\Omega - e) \right\} > 0$ and $OPT \in \argmax \{ g(A) - \ell(A) : A \subseteq 2^{\Omega}, \ind_A \in \Pcal \}$.
\end{lemma}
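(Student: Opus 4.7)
The plan is to expand $\Phi(t+\delta) - \Phi(t)$ directly from the definition and then apply Lemmas~\ref{G bound} and~\ref{L bound} together with the optimality of $\zvec(t)$ in the linear program solved at time $t$. Writing $\alpha := (1-\delta)^{(1-t)/\delta - 1}$ so that $(1-\delta)^{(1-t)/\delta} = \alpha(1-\delta)$, I will rearrange to obtain
\begin{equation*}
	\Phi(t+\delta) - \Phi(t) = \alpha [G(\yvec(t+\delta)) - G(\yvec(t))] + \alpha \delta G(\yvec(t)) - [L(\yvec(t+\delta)) - L(\yvec(t))].
\end{equation*}
Applying Lemma~\ref{G bound} (available because event $\Ecal$ holds) and Lemma~\ref{L bound} gives a lower bound of $\delta \innprod{\alpha\wvec(t) - \lvec}{\zvec(t)} + \alpha\delta G(\yvec(t)) - 3\alpha\epsilon\delta M$. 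Since $\zvec(t)$ maximizes $\innprod{\vvec}{\alpha\wvec(t) - \lvec}$ over $\Pcal$ and $\ind_{OPT} \in \Pcal$, I can replace $\zvec(t)$ by $\ind_{OPT}$ as a lower bound, producing $\alpha\sum_{e \in OPT} w_e(t) - \ell(OPT)$ in place of the inner product.

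The heart of the argument is then to show $\alpha \sum_{e \in OPT} w_e(t) \geq e^{-1} g(OPT) - \alpha G(\yvec(t)) - 2\epsilon M$. I would do this in three steps. First, under $\Ecal$ each $w_e(t)$ differs from $\Ebb[g(e \mid R_{\yvec(t)})]$ by at most $2\epsilon M/n$, and summing over $e \in OPT$ costs at most $2\epsilon M$. Second, submodularity yields the telescoping bound $\sum_{e \in OPT} g(e \mid R) \geq g(OPT \cup R) - g(R)$ for any set $R$, so taking expectation gives $\sum_{e \in OPT}\Ebb[g(e\mid R_{\yvec(t)})] \geq \Ebb[g(OPT \cup R_{\yvec(t)})] - G(\yvec(t))$. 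Third, by Lemma~\ref{y(t) bound} every coordinate of $\yvec(t)$ is at most $1 - (1-\delta)^{t/\delta}$, so the standard non-monotone inequality for non-negative submodular functions (cf.\ Lemma~2.2 of \cite{feldman2011unified}) yields $\Ebb[g(OPT \cup R_{\yvec(t)})] \geq (1-\delta)^{t/\delta} g(OPT)$.

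Multiplying the last two inequalities by $\alpha$ and using $\alpha(1-\delta)^{t/\delta} = (1-\delta)^{1/\delta - 1} \geq e^{-1}$, which follows from the elementary fact $(1-\delta)\ln(1-\delta) \geq -\delta$ on $(0,1)$, converts the $(1-\delta)^{t/\delta} g(OPT)$ factor into $e^{-1} g(OPT)$. Substituting back, the $\alpha G(\yvec(t))$ term exactly cancels the $\alpha\delta G(\yvec(t))$ introduced in the expansion. Since $\alpha \leq 1$ for every $t \in \Tcal$, the error terms combine as $2\alpha\epsilon\delta M + 3\alpha\epsilon\delta M \leq 5\epsilon\delta M$, producing the claimed bound.

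The main subtlety will be step three above: ensuring that the uniform bound on the coordinates of $\yvec(t)$ is exactly matched to the exponent in the distortion coefficient $\alpha$ so that the product telescopes cleanly to $(1-\delta)^{1/\delta - 1}$. This coupling is the reason for choosing $(1-\delta)^{(1-t)/\delta - 1}$ rather than $(1+\delta)^{(t-1)/\delta}$ as in \cite{feldman2020guess}; a mismatched exponent would either spoil the cancellation of $G(\yvec(t))$ or prevent the residual $(1-\delta)^{1/\delta - 1}$ from meeting the target $e^{-1}$, and the non-monotonicity of $g$ gives no slack to absorb the loss.
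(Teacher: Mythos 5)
Your proposal is correct and follows essentially the same route as the paper: the same expansion of $\Phi(t+\delta)-\Phi(t)$, the same application of Lemmas~\ref{G bound} and~\ref{L bound}, the same replacement of $\zvec(t)$ by $\ind_{OPT}$ via the LP optimality, and the same cancellation of the $G(\yvec(t))$ terms with the combined $5\epsilon\delta M$ error. The only (harmless) difference is in the sub-step bounding $G(\yvec(t)\vee\ind_{OPT})\geq(1-\delta)^{t/\delta}g(OPT)$: the paper truncates the Lov\'{a}sz-extension integral using Lemmas~\ref{lb multi} and~\ref{y(t) bound}, while you apply the probabilistic sampling lemma (the paper's Lemma~\ref{buch lemma}) to $A\mapsto g(OPT\cup A)$ --- both are standard and give the identical bound.
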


\begin{proof}
	For every time $t \in \Tcal$, we have
	\begin{align*}
		\Phi&(t + \delta) - \Phi(t) \\ 
		&= (1 - \delta)^{ \frac{1-t}{\delta} - 1 } \left[ G( \yvec(t+\delta) ) - G( \yvec(t) ) \right] - \left[ L( \yvec(t+\delta) ) - L( \yvec(t) )  \right] + \delta (1 - \delta)^{ \frac{1-t}{\delta} - 1 } G( \yvec(t) ) \\
		&\geq \delta \innprod{ \zvec(t) }{ (1 - \delta)^{ \frac{1-t}{\delta} - 1 } \wvec(t) - \lvec } + \delta (1 - \delta)^{ \frac{1-t}{\delta} - 1 } G( \yvec(t) ) - 3 \epsilon \delta (1 - \delta)^{ \frac{1-t}{\delta} - 1 } M \\
		&\geq \delta \innprod{ \zvec(t) }{ (1 - \delta)^{ \frac{1-t}{\delta} - 1 } \wvec(t) - \lvec } + \delta (1 - \delta)^{ \frac{1-t}{\delta} - 1 } G( \yvec(t) ) - 3 \epsilon \delta M, 		
	\end{align*}
	where the equality follows by the definition of $\Phi(t)$, the first inequality follows by Lemma~\ref{G bound} and Lemma~\ref{L bound}, and the second inequality follows by $0 \leq (1 - \delta)^{ \frac{1-t}{\delta} - 1 } \leq 1$.
	
	Next, we lower bound the term $\innprod{ \zvec(t) }{ (1 - \delta)^{ \frac{1-t}{\delta} - 1 } \wvec(t) - \lvec }$. According to Algorithm~\ref{mcg distorted}, it holds that
	\begin{align*}
		\innprod{ \zvec(t) }{ (1 - \delta)^{ \frac{1-t}{\delta} - 1 } \wvec(t) - \lvec } &\geq \innprod{ \ind_{OPT} }{ (1 - \delta)^{ \frac{1-t}{\delta} - 1 } \wvec(t) - \lvec } \\
		&= (1 - \delta)^{ \frac{1-t}{\delta} - 1 } \innprod{\ind_{OPT}}{\wvec(t)} - \innprod{\ind_{OPT}}{\lvec} \\
		&= (1 - \delta)^{ \frac{1-t}{\delta} - 1 } \sum_{e \in OPT} w_e(t) - \ell (OPT).
	\end{align*}

	Meanwhile, when the event $\Ecal$ happens, $w_e(t) \geq \Ebb[ g(e| R_{\yvec(t)} ) ] - \frac{2\epsilon M}{n}$ holds for every element $e \in \Omega$ and every time $t \in \Tcal$. Thus, we have
	\begin{align*}
		\sum_{e \in OPT} w_e(t) &\geq \sum_{e \in OPT} \left[ \Ebb[ g(e| R_{\yvec(t)} ) ] - \frac{2\epsilon M}{n} \right] \\
		&\geq \Ebb \left[ g( R_{\yvec(t)} \cup OPT ) - g( R_{\yvec(t)} ) \right] - 2\epsilon M \\
		&= G( \yvec(t) \vee \ind_{OPT} ) - G( \yvec(t) ) - 2\epsilon M,
	\end{align*}
	where the second inequality follows by the submodularity of function $g$ and $|OPT| \leq n$, and the equality follows by the definition of multilinear extension.
	
	By Lemma~\ref{lb multi}, we can get
	\begin{align*}
		G( \yvec(t) \vee \ind_{OPT} ) &\geq \hat{g} ( \yvec(t) \vee \ind_{OPT} ) \\
		&= \int_{0}^{1} g( T_{\lambda} ( \yvec(t) \vee \ind_{OPT} ) ) \diff \lambda \\
		&\geq \int_{1 - (1-\delta)^{t/\delta}}^{1} g( T_{\lambda} ( \yvec(t) \vee \ind_{OPT} ) ) \diff \lambda \\
		&= (1-\delta)^{ \frac{t}{\delta} } g(OPT),
	\end{align*}
	where the first equality follows by the definition of the Lov\'{a}sz extension, the second equality holds since $T_{\lambda} ( \yvec(t) \vee \ind_{OPT} ) = OPT$ for any $\lambda \in [1 - (1-\delta)^{t/\delta}, 1]$, and the inequality follows by the non-negativity of function $g$.
	
	It follows that
	\begin{align*}
		\langle \zvec(t) &, (1 - \delta)^{ \frac{1-t}{\delta} - 1 } \wvec(t) - \lvec \rangle \\
		&\geq (1 - \delta)^{ \frac{1-t}{\delta} - 1 } \left[ G( \yvec(t) \vee \ind_{OPT} ) - G( \yvec(t) ) - 2\epsilon M \right] - \ell (OPT) \\
		&\geq (1 - \delta)^{ \frac{1}{\delta} - 1 } g(OPT) - \ell (OPT) - (1 - \delta)^{ \frac{1-t}{\delta} - 1 } G( \yvec(t) ) - 2\epsilon (1 - \delta)^{ \frac{1-t}{\delta} - 1 } M \\
		&\geq (1 - \delta)^{ \frac{1}{\delta} - 1 } g(OPT) - \ell (OPT) - (1 - \delta)^{ \frac{1-t}{\delta} - 1 } G( \yvec(t) ) - 2\epsilon M.
	\end{align*}
	Thus, we have
	\begin{align*}
		\Phi(t + \delta) - \Phi(t) &\geq \delta \left[ (1 - \delta)^{ \frac{1}{\delta} - 1 } g(OPT) - \ell (OPT) \right] - 5 \epsilon \delta M \\
		&\geq \delta \left[ e^{-1} g(OPT) - \ell (OPT) \right] - 5 \epsilon \delta M
	\end{align*}
	where the last inequality holds due to $\ln(1 - \delta) \geq - \frac{\delta}{1 - \delta}$ for any $\delta \in (0,1)$.
	
	The proof is completed.
\end{proof}

Now, we can analyze the performance guarantee of Algorithm~\ref{mcg distorted}.

\begin{lemma} \label{quality y(1)}
	If the event $\Ecal$ happens, then it holds that
	\begin{equation*}
		G( \yvec(1) ) - L( \yvec(1) ) \geq e^{-1} g(OPT) - \ell (OPT) - 5 \epsilon M,
	\end{equation*}
	where $M = \max \left\{ \max_{e\in \Omega} g(e|\varnothing), -\min_{e\in \Omega} g(e|\Omega - e) \right\} > 0$ and $OPT \in \argmax \{ g(A) - \ell(A) : A \subseteq 2^{\Omega}, \ind_A \in \Pcal \}$.
\end{lemma}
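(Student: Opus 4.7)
The plan is to prove the inequality by telescoping the per-iteration progress bound of Lemma~\ref{phi bound} across the $1/\delta$ steps in $\Tcal$, and then reading off the boundary values of the distorted potential $\Phi(t) := (1-\delta)^{(1-t)/\delta} G(\yvec(t)) - L(\yvec(t))$ at $t = 0$ and $t = 1$.

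First I would record the two endpoint values of $\Phi$. At $t = 1$ the exponent in the distortion factor vanishes, so $\Phi(1) = G(\yvec(1)) - L(\yvec(1))$, which is exactly the quantity we want to lower bound. At $t = 0$, since $\yvec(0) = \ind_\varnothing$, the modular term $L(\ind_\varnothing) = 0$, while the submodular term is $(1-\delta)^{1/\delta} g(\varnothing) \geq 0$ because $g$ takes only non-negative values; therefore $\Phi(0) \geq 0$.

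Next, by the choice $\delta = \lceil 2 + n^2/\epsilon \rceil^{-1}$ in Algorithm~\ref{mcg distorted}, the index set $\Tcal = \{0, \delta, 2\delta, \ldots, 1-\delta\}$ has cardinality exactly $1/\delta$. Under the standing hypothesis that $\Ecal$ occurs, Lemma~\ref{phi bound} applies at every $t \in \Tcal$. Summing its per-step inequality over all such $t$ collapses the left-hand side into $\Phi(1) - \Phi(0)$ and multiplies the right-hand side by $1/\delta$, giving
$$
\Phi(1) - \Phi(0) \;\geq\; e^{-1} g(OPT) - \ell(OPT) - 5\epsilon M.
$$
Combining this with $\Phi(1) = G(\yvec(1)) - L(\yvec(1))$ and $\Phi(0) \geq 0$ immediately yields the claimed bound.

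There is essentially no obstacle left at this stage; the substantive work has already been absorbed into Lemma~\ref{phi bound}, whose coefficient $(1-\delta)^{(1-t)/\delta - 1}$ in the linear-optimization step of Algorithm~\ref{mcg distorted} was calibrated precisely so that telescoping produces the time-independent factor $e^{-1}$ in front of $g(OPT)$. The only two points worth flagging explicitly are that $1/\delta$ is an integer, ensuring the telescope closes cleanly at $t = 1$, and that discarding $\Phi(0)$ is sound, which rests only on the non-negativity of $g$ and the normalization of $\ell$.
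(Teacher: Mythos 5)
Your proof is correct and follows exactly the paper's argument: identify $\Phi(1) = G(\yvec(1)) - L(\yvec(1))$ and $\Phi(0) = (1-\delta)^{1/\delta} g(\varnothing) \geq 0$, telescope Lemma~\ref{phi bound} over the $1/\delta$ steps in $\Tcal$, and drop the non-negative $\Phi(0)$ term. No differences worth noting.
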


\begin{proof}
	By the definition of function $\Phi(t)$, we have $\Phi(1) = G( \yvec(1) ) - L( \yvec(1) )$ and $\Phi(0) = (1 - \delta)^{\frac 1\delta} g(\varnothing) - \ell(\varnothing) = (1 - \delta)^{\frac 1\delta} g(\varnothing) \geq 0$. It follows that
	\begin{align*}
		G( \yvec(1) ) - L( \yvec(1) ) &= \Phi(1) \\
		&= \Phi(0) + \sum_{i=0}^{\frac 1\delta - 1} \left[ \Phi( i\delta + \delta ) - \Phi( i\delta ) \right] \\
		&\geq \Phi(0) + \sum_{i=0}^{\frac 1\delta - 1} \left[ \delta e^{-1} g(OPT) - \delta \ell (OPT) - 5 \epsilon \delta M \right] \\
		&\geq e^{-1} g(OPT) - \ell (OPT) - 5 \epsilon M,
	\end{align*}
	where the first inequality follows by Lemma~\ref{phi bound}, and the second inequality holds due to $\Phi(0) \geq 0$.
\end{proof}

In each iteration, Algorithm~\ref{mcg distorted} requires $2nr$ value oracle queries. Thus, during the $\frac 1\delta$ iterations, Algorithm~\ref{mcg distorted} performs $2nr \delta^{-1} = O ( \frac{n^5}{\epsilon^3} \ln \frac{n^3}{\epsilon^2} )$ value oracle queries in total. Based on Lemma~\ref{high P event} and Lemma~\ref{quality y(1)}, we get the following theorem.

\begin{theorem}
	When Algorithm~\ref{mcg distorted} terminates, it produces a vector $\xvec \in \Pcal$ such that with high probability 
	\begin{equation*}
		G( \xvec ) - L( \xvec ) \geq e^{-1} g(OPT) - \ell (OPT) - 5 \epsilon M,
	\end{equation*}
	where $M = \max \left\{ \max_{e\in \Omega} g(e|\varnothing), -\min_{e\in \Omega} g(e|\Omega - e) \right\} > 0$ and $OPT \in \argmax \{ g(A) - \ell(A) : A \subseteq 2^{\Omega}, \ind_A \in \Pcal \}$. And, during the iteration, the total value oracle queries are $O ( \frac{n^5}{\epsilon^3} \ln \frac{n^3}{\epsilon^2} )$.
\end{theorem}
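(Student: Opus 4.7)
The theorem is essentially a packaging of the lemmas already established in the subsection, so my plan is to stitch them together rather than prove anything new. Take $\xvec := \yvec(1)$, which is the output of Algorithm~\ref{mcg distorted}. The earlier corollary (following Lemma~\ref{y(t) bound}) already guarantees $\yvec(1) \in \Pcal$, so feasibility is immediate and needs no further argument.

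For the quality bound, I would invoke Lemma~\ref{high P event}, which says $\pr[\Ecal] \geq 1 - \epsilon$, together with Lemma~\ref{quality y(1)}, which says that on the event $\Ecal$ we have $G(\yvec(1)) - L(\yvec(1)) \geq e^{-1} g(OPT) - \ell(OPT) - 5\epsilon M$. Combining these two facts yields that the claimed inequality holds with probability at least $1 - \epsilon$, which is what is meant by ``with high probability'' in the statement (since $\epsilon$ can be taken arbitrarily small).

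For the oracle complexity, I would just count: each iteration performs $r$ independent samples of $R_{\yvec(t)}$ for each of the $n$ elements in order to estimate $\wvec(t)$, where each sample costs one value oracle query, giving $nr$ queries; the expression $\innprod{\vvec}{(1-\delta)^{(1-t)/\delta - 1}\wvec(t) - \lvec}$ is linear in $\vvec$, so the linear optimization step over $\Pcal$ costs no additional value oracle queries (only polynomially many calls to the linear oracle for $\Pcal$). Multiplying by the $\delta^{-1}$ iterations and substituting $r = \lceil 2n^2 \epsilon^{-2} \ln(2n\epsilon^{-1}\delta^{-1}) \rceil$ and $\delta^{-1} = \lceil 2 + n^2/\epsilon \rceil$ gives the total count
\begin{equation*}
nr\delta^{-1} = O\!\left( \frac{n^{5}}{\epsilon^{3}} \ln \frac{n^{3}}{\epsilon^{2}} \right),
\end{equation*}
matching the statement (a factor of $2$ appears in the excerpt, which does not affect the $O$-notation).

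There is no real obstacle here; the only thing to be careful about is making sure the probability bound from Lemma~\ref{high P event} (which is a guarantee on the randomness of the sampling in $\wvec(t)$) is cleanly transferred to a high-probability guarantee on the output quality, and that the arithmetic $nr\delta^{-1}$ is simplified correctly using $\delta = \Theta(\epsilon/n^{2})$ and $\ln(n/(\epsilon\delta)) = \Theta(\ln(n^{3}/\epsilon^{2}))$. Everything else has been done in the preceding lemmas.
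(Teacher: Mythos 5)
Your proposal is correct and matches the paper's own argument exactly: the paper likewise obtains the theorem by combining the feasibility corollary, Lemma~\ref{high P event}, and Lemma~\ref{quality y(1)}, and counts $2nr\delta^{-1} = O(\frac{n^5}{\epsilon^3}\ln\frac{n^3}{\epsilon^2})$ value oracle queries (the factor of $2$, from evaluating each marginal $g(e|R_{\yvec(t)})$ via two oracle calls, is immaterial as you note). Nothing is missing.
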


\subsection{Rounding}
When we restrict the polytope constraint to be a matroid polytope, there exists rounding techniques such as pipage rounding (\cite{vondrak2013symmetry}), which can produce an integral solution without lossing anything in the objective. Specifically, pipage rounding can output a random independent set $S \in \Ical$ of matroid $\Mcal = (\Omega, \Ical)$ such that
\begin{equation*}
	\Ebb [ g(S) - \ell (S) ] \geq G( \yvec(1) ) - L( \yvec(1) ) \geq e^{-1} g(OPT) - \ell (OPT) - 5 \epsilon M,
\end{equation*}
where $M = \max \left\{ \max_{e\in \Omega} g(e|\varnothing), -\min_{e\in \Omega} g(e|\Omega - e) \right\} > 0$ and $OPT \in \argmax \{ g(A) - \ell(A) : A \in \Ical \}$.

\section{Cardinality Constraint} \label{sec: cardinality constraint}
In this section, we present two algorithms for solving the regularized non-monotone maximization problem under a cadinality constraint, i.e.,
\begin{equation*} 
	\begin{alignedat}{1}
		\max_{A \subseteq \Omega} \quad & g(A) - \ell(A) \\ 
		\st \quad & |A| \leq k ,
	\end{alignedat}
\end{equation*}
where $\gfun$ is a non-monotone submodular function, $\ell \colon 2^\Omega \to \Rbb_+$ is a normalized modular function, and $k$ is the cardinality constraint. When $k = 1$, we can solve this problem by simply examining all elements in the ground set $\Omega$. Thus, we assume that $k \geq 2$. 

The key technique we use is designing a so-called \emph{distorted objective function}, which was introduced by \cite{harshaw2019submodular}. We introduce two important functions $\Phi$ and $\Psi$, which is crucial in our algorithms. Let $k$ be the cardinality constraint, for any $i = 0,1,\ldots,k$ and any set $T \subseteq \Omega$, we define
\begin{equation*}
	\Phi_i(T) := \ai g(T) - \ell(T).
\end{equation*}
Additionally, for any iteration $i = 0,1,\ldots,k-1$ of our algorithms, any set $T \subseteq \Omega$, and any element $e \in \Omega$, we define
\begin{equation*}
	\Psi_i(T,e) := \max \left\{ 0, \bi g(e|T) - \ell_e \right\}.
\end{equation*}

\subsection{\drg}
In this subsection, we present an algorithm which is based on the random greedy algorithm (\cite{buchbinder2014submodular}) and the distorted objective proposed by \cite{harshaw2019submodular}. Thus, we call this algorithm \drg.

\begin{algorithm}[htbp]
	\caption{\drg}
	\label{drg}
	\SetAlgoLined
	\KwIn{function $g$ and $\ell$, cardinality $k$}
	\KwOut{approximate solution $S_k$}
	Initialize $S_0 \leftarrow \varnothing$\;
	\For{$i = 0$ \KwTo $k-1$}{
		$M_i \leftarrow \argmax \left\{ \sum_{e \in B} \left[ \bi g(e|S_i) - \ell_e \right] : B \subseteq \Omega, |B| \leq k \right\}$ \;
		\textbf{with} probability $( 1 - |M_i|/k )$ \textbf{do} $S_{i+1} \leftarrow S_i$ \;
		\textbf{otherwise}	Let $e_i$ be a uniformly random element of $M_i$, and set $S_{i+1} \leftarrow S_i + e_i$ \;
	}
	\Return $S_k$
\end{algorithm}

Firstly, we consider the increase of the distorted
objective $\Phi_i$ in each iteration.

\begin{lemma} \label{increase Phi_i}
	In each iteration ($i=0,1,\ldots,k-1$) of Algorithm~\ref{drg}, it holds that
	\begin{equation*}
		\Ebb_{S_{i+1}} [ \Phi_{i+1}(S_{i+1}) ] -  \Ebb_{S_i} [ \Phi_i(S_i) ] \geq \frac 1k \bi \Ebb_{S_i} [ g(OPT \cup S_i) ] - \frac 1k \ell(OPT),
	\end{equation*}
	where $OPT \in \argmax \{ g(A) - \ell(A) : A \subseteq \Omega, |A| \leq k \}$.
\end{lemma}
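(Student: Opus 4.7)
The plan is to establish the inequality conditional on $S_i$ and then take an outer expectation. First I would decompose the one-step change as
\begin{equation*}
\Phi_{i+1}(S_{i+1}) - \Phi_i(S_i) = \bi \bigl[g(S_{i+1}) - g(S_i)\bigr] + (\bi - \ai)\, g(S_i) - \bigl[\ell(S_{i+1}) - \ell(S_i)\bigr],
\end{equation*}
and use the one-line identity $\bi - \ai = \frac{1}{k}\bi$ to rewrite the middle term as $\frac{1}{k}\bi\, g(S_i)$; this is where the $\frac{1}{k}\bi$ prefactor on the right-hand side of the lemma originates.

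Next I would take the conditional expectation $\Ebb[\,\cdot\mid S_i]$ using the random rule in Algorithm~\ref{drg}: with probability $1-|M_i|/k$ we have $S_{i+1}=S_i$, and otherwise $e_i$ is uniform on $M_i$, so $\Ebb[g(S_{i+1})-g(S_i)\mid S_i] = \frac{1}{k}\sum_{e\in M_i} g(e|S_i)$ and likewise $\Ebb[\ell(S_{i+1})-\ell(S_i)\mid S_i] = \frac{1}{k}\sum_{e\in M_i}\ell_e$. This gives
\begin{equation*}
\Ebb[\Phi_{i+1}(S_{i+1}) - \Phi_i(S_i)\mid S_i] = \frac{1}{k}\sum_{e\in M_i}\bigl[\bi\, g(e|S_i) - \ell_e\bigr] + \frac{1}{k}\bi\, g(S_i).
\end{equation*}

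The heart of the argument is to lower bound the first sum by a quantity involving $OPT$. Since $M_i$ maximizes $\sum_{e\in B}[\bi g(e|S_i) - \ell_e]$ over $B\subseteq \Omega$ with $|B|\le k$, and $OPT$ is itself such a competitor, I obtain
\begin{equation*}
\sum_{e\in M_i}\bigl[\bi\, g(e|S_i) - \ell_e\bigr] \;\ge\; \bi\sum_{e\in OPT} g(e|S_i) - \ell(OPT).
\end{equation*}
A standard telescoping use of submodularity then delivers $\sum_{e\in OPT} g(e|S_i) \ge g(OPT\cup S_i) - g(S_i)$. Substituting, the $-\frac{1}{k}\bi\, g(S_i)$ piece that emerges cancels exactly with the $+\frac{1}{k}\bi\, g(S_i)$ term isolated in the second step, leaving $\frac{1}{k}\bi\, g(OPT\cup S_i) - \frac{1}{k}\ell(OPT)$. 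Taking the outer expectation over $S_i$ yields the claim.

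I expect the main obstacle to be pure bookkeeping: one must verify that the two $\frac{1}{k}\bi\, g(S_i)$ contributions—the one from the algebraic identity $\bi - \ai = \frac{1}{k}\bi$ and the one, of opposite sign, from the submodular lower bound $\sum_{e\in OPT} g(e|S_i) \ge g(OPT\cup S_i) - g(S_i)$—cancel exactly, leaving no stray $g(S_i)$ term. Note that the thresholded marginal $\Psi_i(S_i,e)$ is not strictly needed for this single-step lemma (it will become important only when aggregating across iterations), because using $B = OPT$ directly as a competitor to $M_i$ already suffices.
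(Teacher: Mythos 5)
Your proposal is correct and follows essentially the same route as the paper: both arrive at the identity $\Ebb[\Phi_{i+1}(S_{i+1}) - \Phi_i(S_i)\mid S_i] = \frac{1}{k}\sum_{e\in M_i}[\bi g(e|S_i) - \ell_e] + \frac{1}{k}\bi g(S_i)$ (the paper by a casewise computation over the two outcomes of the randomized step, you by first isolating the coefficient change $\bi - \ai = \frac{1}{k}\bi$), and then both use $OPT$ as a competitor for $M_i$ followed by the submodular telescoping bound, with the $\frac{1}{k}\bi g(S_i)$ terms cancelling exactly as you anticipate.
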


\begin{proof}
	Since $S_{i+1} = S_i$ or $S_{i+1} = S_i + e_i$ when $S_i$ is given, we have
	\begin{align*}
		\Ebb&_{S_{i+1}} \left[ \Phi_{i+1}(S_{i+1}) - \Phi_i(S_i) | S_i \right] \\
		&= \pr [ S_{i+1} = S_i | S_i ] \cdot [ \Phi_{i+1}(S_i) - \Phi_i(S_i) ] + \sum_{e \in M_i} \pr [ S_{i+1} = S_i + e | S_i ] \cdot [ \Phi_{i+1}(S_i + e) - \Phi_i(S_i) ] .
	\end{align*}

	According to the settings of Algorithm~\ref{drg}, we get that $\pr [ S_{i+1} = S_i | S_i ] = 1 - \frac{|M_i|}{k} $ and $\pr [ S_{i+1} = S_i + e | S_i ] = \frac 1k$ for any $e \in M_i$. Meanwhile, by the definition of function $\Phi_i$, it holds that
	\begin{equation*}
		\Phi_{i+1}(S_i) - \Phi_i(S_i) = \frac 1k \bi g(S_i),
	\end{equation*}
	and
	\begin{equation*}
		\Phi_{i+1}(S_i + e) - \Phi_i(S_i) = \bi g(e|S_i) - \ell_e + \frac 1k \bi g(S_i).
	\end{equation*}

	Thus, we have
	\begin{align*}
		\Ebb&_{S_{i+1}} \left[ \Phi_{i+1}(S_{i+1}) - \Phi_i(S_i) | S_i \right] \\ 
		&= \left( 1 - \frac{|M_i|}{k} \right) \cdot \frac 1k \bi g(S_i) \\
		&\quad + \frac 1k \sum_{e \in M_i} \left[ \bi g(e|S_i) - \ell_e + \frac 1k \bi g(S_i) \right] \\
		&= \frac 1k \sum_{e \in M_i} \left[ \bi g(e|S_i) - \ell_e \right] + \frac 1k \bi g(S_i) \\
		&\geq \frac 1k \sum_{e \in OPT} \left[ \bi g(e|S_i) - \ell_e \right] + \frac 1k \bi g(S_i) \\
		&\geq \frac 1k \bi [ g(OPT \cup S_i) - g(S_i) ] - \frac 1k \ell(OPT) + \frac 1k \bi g(S_i) \\
		&= \frac 1k \bi g(OPT \cup S_i) - \frac 1k \ell(OPT),
	\end{align*}
	where the first inequality follows by the choice of $M_i$ and $|OPT| \leq k$, and the second inequality follows by the submodularity of function $g$.
	
	It follows that
	\begin{align*}
		\Ebb_{S_{i+1}} [ \Phi_{i+1}(S_{i+1}) ] -  \Ebb_{S_i} [ \Phi_i(S_i) ] &= \Ebb_{S_i} \left[ \Ebb_{S_{i+1}} [ \Phi_{i+1}(S_{i+1}) - \Phi_i(S_i) | S_i ] \right] \\
		&\geq \frac 1k \bi \Ebb_{S_i} [ g(OPT \cup S_i) ] - \frac 1k \ell(OPT),
	\end{align*}
	which completes the proof.
\end{proof}

Next, we bound the term $\Ebb_{S_i} [ g(OPT \cup S_i) ]$ from below with respect to $g(OPT)$. In order to achieve this, we need the following lemma.

\begin{lemma}[Lemma~2.2 in \cite{buchbinder2014submodular}] \label{buch lemma}
	Let $f \colon 2^\Omega \to \Rbb$ be a submodular function. Denote by $A(p)$ a random subset of $A \subseteq \Omega$ where each element appears with probability at most $p$ (not necessarily independently). Then, $\Ebb_{A(p)} [ f( A(p) ) ] \geq (1 - p) f(\varnothing)$.
\end{lemma}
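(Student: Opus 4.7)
The plan is to use the multilinear extension $F$ of $f$ together with induction on $|A|$, handling the independent case first as the cleanest instance and then bootstrapping to the general correlated case. For the independent case, where each element of $A$ is in $A(p)$ independently with probability $p$, I have $\Ebb[f(A(p))] = F(p \ind_A)$. Submodularity implies that the mixed second partial derivatives $\partial^2 F / \partial x_e \partial x_{e'}$ are non-positive, as is visible from the explicit formula reviewed earlier in the paper. Hence $F$ is concave along the line segment from $\mathbf{0}$ to $\ind_A$, so
\begin{equation*}
F(p \ind_A) \;\geq\; (1-p)\, F(\mathbf{0}) + p\, F(\ind_A) \;=\; (1-p) f(\varnothing) + p f(A) \;\geq\; (1-p) f(\varnothing),
\end{equation*}
where the final step uses $f(A) \geq 0$. (The lemma as stated is implicitly assuming non-negativity of $f$; otherwise easy counterexamples such as $f(B) = -|B|$ would violate the claim.)

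For the general correlated case, where only the marginal bound $\Pr[e \in A(p)] \leq p$ is assumed for each $e \in A$, I would induct on $|A|$. The base case $|A| = 0$ is immediate since $A(p) = \varnothing$ almost surely. For the inductive step, fix some $e_0 \in A$ and decompose
\begin{equation*}
\Ebb[f(A(p))] = \Pr[e_0 \notin A(p)] \cdot \Ebb[f(A(p)) \mid e_0 \notin A(p)] + \Pr[e_0 \in A(p)] \cdot \Ebb[f(A(p)) \mid e_0 \in A(p)].
\end{equation*}
Conditioned on $e_0 \notin A(p)$, $A(p)$ is a random subset of $A \setminus \{e_0\}$; conditioned on $e_0 \in A(p)$, I would use the submodularity inequality $f(B \cup \{e_0\}) \geq f(B) + f(\{e_0\}) - f(\varnothing)$ (rearranged to convert back to a statement about random subsets of $A \setminus \{e_0\}$). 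Combining the two conditional contributions with the bound $\Pr[e_0 \in A(p)] \leq p$ should recover the target inequality $\Ebb[f(A(p))] \geq (1-p) f(\varnothing)$.

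The step I expect to be the main obstacle is controlling the conditional marginals in the induction: conditioning on $e_0 \notin A(p)$ can in general push the marginals of the remaining elements above $p$, so the inductive hypothesis is not directly applicable to $A \setminus \{e_0\}$ under the conditional distribution. Circumventing this will likely require one of three strategies, in increasing order of technical weight: a more careful decomposition of the joint distribution into mixtures whose component marginals remain at most $p$; an explicit coupling of $A(p)$ with an auxiliary independent-inclusion random set to which the first step applies; or exploiting the linearity of $\mu \mapsto \Ebb_{B \sim \mu}[f(B)]$ to reduce to extreme points of the polytope of joint distributions with marginals at most $p$, where the extremal structure forces a clean case analysis. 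Any of these should dovetail with the concavity argument established for the independent case to close the proof.
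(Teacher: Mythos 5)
Your argument for the independent case is fine: the multilinear extension $F$ has vanishing diagonal second partials and non-positive mixed second partials, so $t \mapsto F(t\,\ind_A)$ is concave and $F(p\,\ind_A) \geq (1-p)f(\varnothing) + p f(A) \geq (1-p)f(\varnothing)$; and you are right that non-negativity of $f$ is implicitly required (the lemma is stated for non-negative $f$ in \cite{buchbinder2014submodular}, and in this paper it is only applied to $h(A) = g(OPT \cup A) \geq 0$). But the independent case is not the case that is used: the sets $S_i$ produced by the random greedy iterations are highly correlated, so the whole content of the lemma lies in the ``not necessarily independently'' clause, and that part of your proposal is a genuine gap. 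You correctly diagnose that the induction fails because conditioning on $e_0 \notin A(p)$ can inflate the conditional marginals of the other elements past $p$, but you then only list three candidate repairs without carrying any of them out. Worse, the inequality you plan to use in the branch $e_0 \in A(p)$, namely $f(B \cup \{e_0\}) \geq f(B) + f(\{e_0\}) - f(\varnothing)$, points the wrong way: submodularity applied to $B$ and $\{e_0\}$ gives $f(B \cup \{e_0\}) \leq f(B) + f(\{e_0\}) - f(\varnothing)$, i.e.\ an upper bound on $f(B \cup \{e_0\})$, whereas you need a lower bound on $\Ebb[f(A(p)) \mid e_0 \in A(p)]$. So the sketched inductive step would fail even if the marginal issue were resolved.

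The standard proof (the one behind Lemma~2.2 of \cite{buchbinder2014submodular}, going back to \cite{feige2011maximizing}) dispenses with independence and induction in one stroke via the Lov\'{a}sz extension, whose machinery this paper already sets up. For a submodular $f$, the Lov\'{a}sz extension $\hat f$ coincides with the convex closure of $f$, i.e.\ $\hat f(\xvec)$ is the \emph{minimum} of $\Ebb[f(S)]$ over all distributions of $S$ with marginal vector $\xvec$. Hence, letting $\xvec := \Ebb[\ind_{A(p)}]$ be the (arbitrary, possibly correlated) marginals, one gets $\Ebb[f(A(p))] \geq \hat f(\xvec)$; and since $x_e \leq p$ for every $e$, the threshold set $T_\lambda(\xvec)$ is empty for $\lambda > p$, so
\begin{equation*}
\hat f(\xvec) = \int_0^1 f(T_\lambda(\xvec))\,\diff \lambda \geq \int_p^1 f(\varnothing)\,\diff \lambda = (1-p)\,f(\varnothing),
\end{equation*}
using $f \geq 0$ to discard the integral over $[0,p]$. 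Replacing your conditional-decomposition sketch with this convex-closure argument is the concrete fix; as it stands, the proposal proves only a special case of the statement.
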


Now, we can lower bound $\Ebb_{S_i} [ g(OPT \cup S_i) ]$ in terms of $g(OPT)$.

\begin{lemma} \label{lb g(Si)} 
	For every $i = 0,1,\ldots,k-1$, it holds that $\Ebb_{S_i} [ g(OPT \cup S_i) ] \geq \left( 1 -\frac 1k \right)^i g(OPT)$.
\end{lemma}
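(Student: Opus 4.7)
The plan is to reduce the claim to Lemma~\ref{buch lemma} by applying it to a shifted auxiliary function. Define $f \colon 2^\Omega \to \Rbb_+$ by $f(T) := g(OPT \cup T)$. Since $g$ is non-negative and submodular, so is $f$, and crucially $f(\varnothing) = g(OPT)$. With this choice, $g(OPT \cup S_i) = f(S_i)$, so it suffices to produce a lower bound of $(1-1/k)^i f(\varnothing)$ for $\Ebb[f(S_i)]$.

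To invoke Lemma~\ref{buch lemma} with $A = \Omega$ and $p_i := 1 - (1 - 1/k)^i$, I will establish the following auxiliary probability bound:
\begin{equation*}
\pr[e \in S_i] \leq 1 - \left( 1 - \tfrac{1}{k} \right)^i \quad \text{for every } e \in \Omega \text{ and every } i = 0, 1, \ldots, k-1.
\end{equation*}
Once this is in hand, Lemma~\ref{buch lemma} immediately gives $\Ebb_{S_i}[f(S_i)] \geq (1 - p_i) f(\varnothing) = (1 - 1/k)^i g(OPT)$, which is precisely what we want.

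I would prove the probability bound by induction on $i$. The base case $i = 0$ is trivial because $S_0 = \varnothing$. For the inductive step, the key observation is multiplicative: conditioning on $e \notin S_i$, the probability that $e$ enters $S_{i+1}$ during iteration $i$ of Algorithm~\ref{drg} is at most $1/k$. Indeed, if $e \notin M_i$ then $e$ is certainly not added; if $e \in M_i$, the algorithm enters the ``otherwise'' branch with probability $|M_i|/k$ and, conditioned on that, picks $e$ uniformly from $M_i$, for a combined probability of $(|M_i|/k) \cdot (1/|M_i|) = 1/k$. Consequently $\pr[e \notin S_{i+1} \mid e \notin S_i] \geq 1 - 1/k$, and the chain rule together with the inductive hypothesis yields $\pr[e \notin S_{i+1}] \geq (1 - 1/k) \cdot (1 - 1/k)^i = (1 - 1/k)^{i+1}$.

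The proof is largely routine; the only subtlety is making sure the $1/k$ bound is applied conditionally on $e \notin S_i$ so that the probabilities compound multiplicatively (giving the $(1-1/k)^i$ factor) rather than additively (which would only yield $i/k$). I do not anticipate any serious obstacle beyond this careful conditioning.
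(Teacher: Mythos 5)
Your proposal is correct and follows essentially the same route as the paper: both bound $\pr[e \notin S_{i+1} \mid e \notin S_i] \geq 1 - 1/k$ via the case split on whether $e \in M_i$, compound the conditional probabilities to get $\pr[e \in S_i] \leq 1 - (1-1/k)^i$, and then apply Lemma~\ref{buch lemma} to the auxiliary submodular function $T \mapsto g(OPT \cup T)$. The only cosmetic difference is that you phrase the probability bound as an explicit induction while the paper writes out the telescoping product of conditionals directly.
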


\begin{proof}
	Suppose that $e$ is an arbitrary element in ground set $\Omega$. When the event $e \notin S_i$ happens, then $e_i \neq e$ implies $e \notin S_{i+1}$. Thus, we have $\pr [ e \notin S_{i+1} | e \notin S_i ] \geq \pr [ e_i \neq e | e \notin S_i ]$.
	
	If $e \notin M_i$, then it holds that $\pr [ e_i \neq e | e \notin S_i ] = 1$. If $e \in M_i$, then it holds that $\pr [ e_i \neq e | e \notin S_i ] = 1 - \frac 1k$. As a result, for any $e \in \Omega$ and any $i = 0,1,\ldots,k-1$, it holds that $\pr [ e \notin S_{i+1} | e \notin S_i ] \geq 1 - \frac 1k$.
	
	It follows that, for every $i = 1,2,\ldots,k-1$,
	\begin{align*}
		\pr [ e \notin S_i ] &= \pr [ e \notin S_0,e \notin S_1,e \notin S_2,\ldots,e \notin S_{i-1},e \notin S_i ] \\
		&= \pr [ e \notin S_0 ] \cdot \pr [ e \notin S_1 | e \notin S_0 ] \cdot \pr [ e \notin S_2 | e \notin S_0,e \notin S_1 ] \cdots \pr [ e \notin S_i | e \notin S_0,e \notin S_1,\ldots,e \notin S_{i-1} ] \\
		&= \pr [ e \notin S_0 ] \cdot \pr [ e \notin S_1 | e \notin S_0 ] \cdot \pr [ e \notin S_2 | e \notin S_1 ] \cdots \pr [ e \notin S_i | e \notin S_{i-1} ] \\
		&= \pr [ e \notin S_1 | e \notin S_0 ] \cdot \pr [ e \notin S_2 | e \notin S_1 ] \cdots \pr [ e \notin S_i | e \notin S_{i-1} ] \\
		&\geq \left( 1 - \frac 1k \right)^i,
	\end{align*}
	where the third equality holds since $e \notin S_j$ implies that $e \notin S_r$ for every $r = 0,1,\ldots,j-1$, and the fourth equality follows by $S_0 = \varnothing$. Thus, we get that $\pr [ e \in S_i ] = 1 - \pr [ e \notin S_i ] \leq 1 - \left( 1 - \frac 1k \right)^i $ for every element $e \in \Omega$ and every $i = 1,2,\ldots,k-1$.
	
	Let $h \colon 2^\Omega \to \Rbb_+$ be the function $h(A) := g(OPT \cup A)$ for every $A \subseteq \Omega$. Note that $h$ is still a submodular function. Thus, by Lemma~\ref{buch lemma}, we have $\Ebb_{S_i} [ h(S_i) ] \geq \left( 1 - \frac 1k \right)^i h(\varnothing)$, which indicates that $\Ebb_{S_i} [ g(OPT \cup S_i) ] \geq \left( 1 -\frac 1k \right)^i g(OPT)$ for every $i = 1,2,\ldots,k-1$. Since $\Ebb_{S_0} [ g(OPT \cup S_0) ] = g(OPT)$, the proof is completed.
\end{proof}

With the help of abovementioned lemmas, we can analyze the performance guarantee of Algorihtm~\ref{drg}.

\begin{theorem} \label{drg guarantee}
	When Algorithm~\ref{drg} terminates, it returns a feasible set $S_k$ with
	\begin{equation*}
		\Ebb_{S_k} [ g(S_k) - \ell(S_k) ] \geq \left( 1 -\frac 1k \right)^{k-1} g(OPT) - \ell(OPT) \geq e^{-1} g(OPT) - \ell(OPT),
	\end{equation*}
	where $OPT \in \argmax \{ g(A) - \ell(A) : A \subseteq \Omega, |A| \leq k \}$. And, during the iteration, the total value oracle queries are $O(kn)$.
\end{theorem}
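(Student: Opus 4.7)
The plan is to combine the two prior lemmas via a telescoping argument on the distorted objective $\Phi_i$, which is designed precisely so that the boundary terms collapse to the quantities we care about. Notice that $\Phi_0(S_0) = (1-\frac{1}{k})^k g(\varnothing) - \ell(\varnothing) = 0$, while $\Phi_k(S_k) = (1-\frac{1}{k})^0 g(S_k) - \ell(S_k) = g(S_k) - \ell(S_k)$. So summing the per-step increments gives exactly the expected objective value of the returned set, up to the lower bound on the increments we have already established.

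Concretely, I would write
\begin{equation*}
\Ebb_{S_k}[g(S_k) - \ell(S_k)] = \Ebb[\Phi_k(S_k)] - \Ebb[\Phi_0(S_0)] = \sum_{i=0}^{k-1} \bigl( \Ebb[\Phi_{i+1}(S_{i+1})] - \Ebb[\Phi_i(S_i)] \bigr),
\end{equation*}
then apply Lemma~\ref{increase Phi_i} to lower bound each summand by $\frac{1}{k}\bi \Ebb_{S_i}[g(OPT \cup S_i)] - \frac{1}{k}\ell(OPT)$, and apply Lemma~\ref{lb g(Si)} to replace $\Ebb_{S_i}[g(OPT \cup S_i)]$ by $(1-\frac{1}{k})^i g(OPT)$. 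The exponents combine as $(1-\frac{1}{k})^{k-(i+1)} \cdot (1-\frac{1}{k})^i = (1-\frac{1}{k})^{k-1}$, which is independent of $i$, so the sum collapses to
\begin{equation*}
\sum_{i=0}^{k-1} \left[ \frac{1}{k}\left(1-\frac{1}{k}\right)^{k-1} g(OPT) - \frac{1}{k}\ell(OPT) \right] = \left(1-\frac{1}{k}\right)^{k-1} g(OPT) - \ell(OPT).
\end{equation*}
To obtain the second inequality in the theorem statement, I would invoke the standard fact that $(1-\frac{1}{k})^{k-1} \geq e^{-1}$ for every integer $k \geq 1$, which follows since $(1-1/k)^{k-1}$ decreases monotonically to $e^{-1}$.

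For the oracle complexity, in each of the $k$ iterations the algorithm needs the marginals $g(e|S_i)$ for all $e \in \Omega$ in order to form the set $M_i$ of at most $k$ elements maximizing $\sum_{e \in B}[\bi g(e|S_i) - \ell_e]$ over $|B|\le k$. Since $\ell$ is modular and given, this selection just picks the top $k$ elements with positive distorted marginals, requiring $O(n)$ value oracle queries and an $O(n \log n)$ sort. Summing over $k$ iterations yields the claimed $O(nk)$ query bound. The only subtlety worth checking carefully is that the telescoping identity is valid on the expectation level, which it is because linearity of expectation commutes with the finite sum, and that the exponent cancellation is exact — the rest of the argument is a direct substitution once the two supporting lemmas are in hand, so I do not anticipate any real obstacle.
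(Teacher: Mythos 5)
Your proposal matches the paper's proof essentially step for step: telescope the distorted objective, plug in Lemma~\ref{increase Phi_i} and Lemma~\ref{lb g(Si)}, observe the exponent cancellation, and finish with $(1-1/k)^{k-1}\geq e^{-1}$ and the per-iteration query count. The only small slip is your claim that $\Phi_0(S_0)=0$: the paper assumes $\ell$ is normalized but not that $g(\varnothing)=0$, so one only has $\Phi_0(S_0)=(1-1/k)^k g(\varnothing)\geq 0$, which still yields the desired inequality since dropping a non-negative term only strengthens the lower bound.
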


\begin{proof}
	According to Lemma~\ref{increase Phi_i} and Lemma~\ref{lb g(Si)}, we have
	\begin{equation*}
		\Ebb_{S_{i+1}} [ \Phi_{i+1}(S_{i+1}) ] -  \Ebb_{S_i} [ \Phi_i(S_i) ] \geq \frac 1k \left( 1 -\frac 1k \right)^{k-1} g(OPT) - \frac 1k \ell(OPT),
	\end{equation*}
	for every $i = 0,1,\ldots,k-1$.
	
	By the definition of function $\Phi_i$, we get
	\begin{equation*}
		\Phi_k(S_k) = g(S_k) - \ell(S_k),
	\end{equation*}
 	and
 	\begin{equation*}
 		\Phi_0(S_0) = \left( 1 -\frac 1k \right)^k g(\varnothing) - \ell(\varnothing) = \left( 1 -\frac 1k \right)^k g(\varnothing) \geq 0.
 	\end{equation*}
 	
 	It follows that
 	\begin{align*}
 		\Ebb_{S_k} [ g(S_k) - \ell(S_k) ] &= \Ebb_{S_k} [ \Phi_k(S_k) ] = \Ebb_{S_0} [ \Phi_k(S_0) ] + \sum_{i=0}^{k-1} \left\{ \Ebb_{S_{i+1}} [ \Phi_{i+1}(S_{i+1}) ] -  \Ebb_{S_i} [ \Phi_i(S_i) ] \right\} \\
 		&\geq \sum_{i=0}^{k-1} \left[ \frac 1k \left( 1 -\frac 1k \right)^{k-1} g(OPT) - \frac 1k \ell(OPT) \right] \\
 		&= \left( 1 -\frac 1k \right)^{k-1} g(OPT) - \ell(OPT) \\
 		&\geq e^{-1} g(OPT) - \ell(OPT),
 	\end{align*}
 	where the last inequality holds due to $\ln ( 1 - \frac 1k ) \geq - \frac{1}{k-1}$ for every $k \geq 2$.
 	
 	As for the number of value oracle queries, we notice that $O(n)$ value oracle queries are needed in each iteration. Thus, during $k$ iterations, Algorithm~\ref{drg} requires $O(kn)$ value oracle queries in total. 
\end{proof}

\subsection{\drsg}
In this subsection, we will prove the following theorem.

\begin{theorem}
	There exists a randomized algorithm that given a non-monotone submodular funtion $\gfun$, a normalized modular function $\ell \colon 2^\Omega \to \Rbb_+$, and parameters $k \geq 2$ and $\epsilon \in (0, e^{-1})$, returns a feasible solution $S \subseteq \Omega$ with
	\begin{equation*}
		\Ebb_S [g(S) - \ell(S)] \geq (e^{-1} - \epsilon) g(OPT) - \ell(OPT),
	\end{equation*}
	where $OPT \in \argmax \{ g(A) - \ell(A) : A \subseteq \Omega, |A| \leq k \}$. And, the algorithm performs $O( \frac{n}{\epsilon^2} \ln \frac 1\epsilon )$ value oracle queries.
\end{theorem}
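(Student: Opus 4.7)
The plan is to accelerate \drg\ by replacing its exhaustive sweep over $\Omega$ with optimization over a small uniform random sample, following the random-sampling paradigm of \cite{buchbinder2017comparing}. Concretely, the new algorithm \drsg\ proceeds for $k$ iterations as Algorithm~\ref{drg} does, except that in iteration $i$ it first draws a uniformly random subset $B_i \subseteq \Omega$ of size $s$ and then sets
\[
M_i \leftarrow \argmax\Bigl\{ \textstyle\sum_{e \in B}\bigl[\bi g(e|S_i) - \ell_e\bigr] : B \subseteq B_i,\ |B| \leq k \Bigr\}.
\]
The remaining random-acceptance rule (skip with probability $1 - |M_i|/k$, otherwise add a uniformly random element of $M_i$) is inherited unchanged from \drg. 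Choosing $s = \Theta\bigl((n/k)\epsilon^{-2}\ln(1/\epsilon)\bigr)$ produces $ks = O((n/\epsilon^{2})\ln(1/\epsilon))$ oracle queries in total, matching the stated bound.

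The correctness argument will mirror the proof of Theorem~\ref{drg guarantee}. Observing that Lemma~\ref{lb g(Si)} depends only on the inclusion probability $\pr[e_i = e \mid e\notin S_i] \leq 1/k$, and that this bound still holds because $|M_i|\leq k$ regardless of whether $M_i$ is chosen from $\Omega$ or from $B_i$, the lemma carries over verbatim. The only genuinely new step is a sampled analogue of Lemma~\ref{increase Phi_i}, namely
\[
\Ebb\bigl[\Phi_{i+1}(S_{i+1})\bigr] - \Ebb\bigl[\Phi_i(S_i)\bigr] \geq \frac{1-\epsilon}{k}\bi \Ebb\bigl[g(OPT \cup S_i)\bigr] - \frac{1-\epsilon}{k}\ell(OPT),
\]
where the additional $(1-\epsilon)$ factor quantifies the loss incurred because $B_i$ may miss part of $OPT$. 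Telescoping this bound over $i = 0, \ldots, k-1$ together with $\Phi_0(\varnothing) \geq 0$ will yield
\[
\Ebb[g(S_k) - \ell(S_k)] \geq (1-\epsilon)\bigl(1-\tfrac{1}{k}\bigr)^{k-1}g(OPT) - (1-\epsilon)\ell(OPT) \geq (e^{-1}-\epsilon)g(OPT) - \ell(OPT),
\]
where the last step uses $(1-\epsilon) e^{-1} \geq e^{-1} - \epsilon$ together with $g(OPT),\ell(OPT) \geq 0$.

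The main technical obstacle is the sampled analogue of Lemma~\ref{increase Phi_i}, which reduces to showing that, conditional on $S_i$,
\[
\Ebb_{B_i}\!\Bigl[\textstyle\sum_{e \in M_i}\bigl[\bi g(e|S_i) - \ell_e\bigr]\Bigr] \geq (1-\epsilon)\textstyle\sum_{e \in OPT}\bigl[\bi g(e|S_i) - \ell_e\bigr].
\]
The signed nature of the distorted marginals is the delicate point: elements of $OPT$ whose distorted marginal is non-positive contribute non-positively on the right and would never be chosen into $M_i$ on the left, so they can be dropped from both sides at no cost. The remaining positive-marginal subset of $OPT$ has size at most $k$ and plays the role of a set of ``targets''; a hypergeometric concentration bound on the overlap $|B_i \cap T|$, where $T \subseteq \Omega$ collects the top-$k$ positive-marginal elements of $\Omega$, together with the fact that $M_i$ is the optimal size-$\leq k$ subset of $B_i$, should deliver the desired $(1-\epsilon)$ loss once $s$ is large enough. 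Calibrating $s = \Theta((n/k)\epsilon^{-2}\ln(1/\epsilon))$ to balance concentration precision against the target query budget is the crux of the argument.
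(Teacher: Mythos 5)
Your selection rule differs from the paper's in a way that breaks the key step. The paper's Algorithm~\ref{drsg} does not take the best $\leq k$-subset of a random sample: it draws a sample $M_i$ of size $\lceil pn\rceil$ with $p=\frac{8}{k\epsilon^2}\ln\frac{2}{\epsilon}$, picks a \emph{single} element uniformly at random among the top $s=\frac{k}{n}\lceil pn\rceil$ elements of the sample (ranked by distorted marginal gain), and adds it only if its distorted gain is positive. The analysis then invokes Lemmas~4.3--4.4 of \cite{buchbinder2017comparing}: writing $X_j$ for the indicator that the chosen element has global rank $j$, one has $\Ebb[\sum_{j=1}^{k}X_j\mid S_i]\geq 1-\epsilon$ and $\Ebb[X_j\mid S_i]$ non-increasing in $j$, so Chebyshev's sum inequality yields $\Ebb[\Psi_i(S_i,e_i)\mid S_i]\geq\frac{1-\epsilon}{k}\sum_{e\in OPT}\Psi_i(S_i,e)$. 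This is a statement about the distribution of the \emph{rank} of one selected element, not about the sample capturing a $(1-\epsilon)$ fraction of the top-$k$ mass.

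Your proposed key inequality
\begin{equation*}
\Ebb_{B_i}\Bigl[\textstyle\sum_{e\in M_i}\bigl[\bi g(e|S_i)-\ell_e\bigr]\Bigr]\geq(1-\epsilon)\textstyle\sum_{e\in OPT}\bigl[\bi g(e|S_i)-\ell_e\bigr]
\end{equation*}
is false for $s=\Theta\bigl((n/k)\epsilon^{-2}\ln(1/\epsilon)\bigr)$. Counterexample: suppose a single element $v_1$ carries essentially all of the positive distorted marginal mass, say $\bi g(v_1|S_i)-\ell_{v_1}=1$ with all other distorted marginals negligible, and $OPT=\{v_1\}$. The right-hand side is then about $1-\epsilon$, while the left-hand side is at most $\pr[v_1\in B_i]+o(1)=s/n+o(1)$, which is $o(1)$ whenever $k\gg\epsilon^{-2}\ln(1/\epsilon)$. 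The hypergeometric concentration you invoke only controls the cardinality $|B_i\cap T|$, whose expectation is $\Theta(\epsilon^{-2}\ln(1/\epsilon))$, i.e.\ a vanishing fraction of $T$; it says nothing about capturing the \emph{weight} of $T$. So your per-iteration bound, and hence the telescoping, fails. The parts of your argument that do carry over (the analogue of Lemma~\ref{lb g(Si)} via $\pr[e_i=e\mid e\notin S_i]\leq 1/k$, the telescoping against $\Phi_0(\varnothing)\geq 0$, and the final step using $(1-\epsilon)e^{-1}\geq e^{-1}-\epsilon$ and $\ell\geq 0$) match the paper; the missing ingredient is precisely the rank-based single-element selection combined with the Chebyshev sum inequality. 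A minor additional point: you must handle small $k$ separately (the paper simply runs \drg\ when $\epsilon\leq\delta$ with $8\delta^{-2}\ln(2/\delta)=k$, since then $O(nk)=O(n\epsilon^{-2}\ln(1/\epsilon))$), as otherwise your sample size can exceed $n$.
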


Since function $\alpha(x) := \frac{8}{x^2} \ln \frac 2x$ is strictly decreasing when $x \in (0,2)$, then there exists a unique $x \in (0,2)$ such that $\alpha(x) = k$. In this subsection, we denote by $\delta$ the unique solution of this equation, i.e, $\frac{8}{\delta^2} \ln \frac 2\delta = k$. When $\epsilon \in (0,\delta]$, the number of value oracle queries which Algorithm~\ref{drg} requires is $O(kn) = O(\frac{8n}{\epsilon^2} \ln \frac 2\epsilon) = O(\frac{n}{\epsilon^2} \ln \frac 1\epsilon)$ due to $k = \frac{8}{\delta^2} \ln \frac 2\delta \leq \frac{8}{\epsilon^2} \ln \frac 2\epsilon$. Thus, we only need to consider the case in which $\epsilon \in (\delta,e^{-1})$. 

We propose an algorithm to deal with this situation, i.e, $\epsilon \in (\delta,e^{-1})$. Our algorithm, which is presented as Algrithm~\ref{drsg}, is based on the Random Sampling algorithm (\cite{buchbinder2017comparing}) and the distorted objective. Notice that when $\epsilon \in (\delta,e^{-1})$, it holds that $p \in (0,1]$ and $1 \leq s \leq \lceil pn \rceil$.

\begin{algorithm}[htbp]
	\caption{\drsg}
	\label{drsg}
	\SetAlgoLined
	\KwIn{function $g$ and $\ell$, cardinality $k$, error $\epsilon \in (\delta, e^{-1})$}
	\KwOut{approximate solution $S_k$}
	Initialize $S_0 \leftarrow \varnothing$, $p \leftarrow \frac{8}{k \epsilon^2} \ln \frac{2}{\epsilon}$, $s \leftarrow \frac kn \lceil pn \rceil$\;
	\For{$i = 0$ \KwTo $k-1$}{
		Let $M_i$ be a uniformly random set containing $\lceil pn \rceil$ elements of $\Omega$ \; 
		Let $d_i$ be a uniformly random value from the range $(0,s]$ \;  
		Let $e_i$ be the element of $M_i$ with the $\lceil d_i \rceil$-th largest distorted marginal contribution to $S_i$ \; 
		\tcp{The distorted marginal gain with respect to $S_i$ is equal to $\bi g(e|S_i) - \ell_{e}$ for every element $e \in \Omega$.}
		\eIf{ $\bi g(e_i|S_i) - \ell_{e_i} > 0 $ }{
			$S_{i+1} \leftarrow S_i + e_i$ \;
		}{
			$S_{i+1} \leftarrow S_i$ \;
		}
	}
	\Return $S_k$ \;
\end{algorithm}

Suppose that $S_i$ is given, we sort all the elements of $\Omega$ in order of non-increasing distorted marginal gain. We assume that 
\begin{equation*}
	\bi g(v_1|S_i) - \ell_{v_1} \geq \bi g(v_2|S_i) - \ell_{v_2} \geq \cdots \geq \bi g(v_n|S_i) - \ell_{v_n},
\end{equation*}
where $\Omega = \{ v_1,v_2,\ldots,v_n \}$. Moreover, we define $n$ random variables $X_j (j = 1,2,\ldots,n)$ as
\begin{equation*}
	X_j = \left\{ 
	\begin{aligned}
		1, \quad & e_i = v_j; \\
		0, \quad & \text{otherwise}.
	\end{aligned}
	\right.
\end{equation*}
Then, we can get the following two lemmas for the same reason as is shown in the proof of Lemma~4.3 and Lemma~4.4 in \cite{buchbinder2017comparing}.

\begin{lemma}[Lemma~4.3 in \cite{buchbinder2017comparing}] \label{sum of indicator}
	For every $i = 0,1,\ldots,k-1$, it holds that $\Ebb_{e_i} [ \sum_{j=1}^{k} X_j | S_i ] \geq 1 - \epsilon$.
\end{lemma}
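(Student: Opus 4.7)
My plan is to translate the claim into a clean probabilistic statement, condition on $M_i$, and reduce everything to a concentration inequality for a hypergeometric random variable. Since $X_j$ indicates $\{e_i = v_j\}$ and $v_1,\dots,v_k$ are by definition the $k$ elements of $\Omega$ with the largest distorted marginal gains with respect to $S_i$, the sum $\sum_{j=1}^{k} X_j$ is the indicator of the event $\{e_i \in T\}$ where $T := \{v_1,\dots,v_k\}$. Hence the lemma reduces to proving $\pr[e_i \in T \mid S_i] \geq 1 - \epsilon$.

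Next I would condition on the random subset $M_i$ and set $Y := |M_i \cap T|$. Because every element of $T$ has distorted marginal gain at least as large as every element of $\Omega \setminus T$, the top $Y$ elements of $M_i$ in the distorted ordering are exactly those in $M_i \cap T$. Since $e_i$ is the element of $M_i$ with the $\lceil d_i \rceil$-th largest distorted marginal and $d_i$ is uniform on $(0, s]$, I would obtain the clean identity
\begin{equation*}
    \pr[e_i \in T \mid M_i] = \pr[\lceil d_i \rceil \leq Y \mid M_i] = \min\{Y/s,\, 1\}.
\end{equation*}

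The heart of the proof is then a concentration argument on $Y$. Since $M_i$ is a uniform random subset of size $\lceil pn \rceil$ and $|T| = k$, the variable $Y$ is hypergeometric with mean $\Ebb[Y] = \lceil pn \rceil \cdot k/n = s$. I would apply a Chernoff-type lower-tail bound to get $\pr[Y \leq (1 - \epsilon/2)\,s \mid S_i] \leq \exp(-\epsilon^2 s/8)$. Because the algorithm chooses $p = \frac{8}{k\epsilon^2}\ln\frac{2}{\epsilon}$, we have $s \geq kp = \frac{8}{\epsilon^2}\ln\frac{2}{\epsilon}$, which makes this failure probability at most $\epsilon/2$. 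Combining with the previous step,
\begin{equation*}
    \Ebb\!\left[ \min\{Y/s,\,1\} \,\big|\, S_i \right] \;\geq\; (1-\epsilon/2)\cdot\pr[Y > (1-\epsilon/2)\,s \mid S_i] \;\geq\; (1-\epsilon/2)^2 \;\geq\; 1-\epsilon,
\end{equation*}
which is exactly what is claimed after taking a further expectation over the randomness that defines $e_i$.

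The main obstacle I anticipate is the concentration step: $Y$ is hypergeometric rather than binomial, so one cannot directly invoke the standard Chernoff bound for independent indicators. The cleanest remedy is to use the classical fact (Hoeffding, or Serfling's inequality) that sampling without replacement is no more dispersed in the lower tail than the corresponding independent-Bernoulli model, which restores the desired exponential decay. Once this is in place, the rest is just matching the constants in the definitions of $p$ and $s$ to the accuracy $1 - \epsilon$ required by the lemma.
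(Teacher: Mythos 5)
Your proof is correct, and it is essentially the argument behind Lemma~4.3 of \cite{buchbinder2017comparing}, which the paper does not reproduce but simply cites: reduce $\sum_{j=1}^{k}X_j$ to the indicator of $e_i$ landing in the top-$k$ set, observe that conditioned on $M_i$ this probability is $\min\{Y/s,1\}$ with $Y=|M_i\cap T|$ hypergeometric of mean $s$, and apply a Chernoff-type lower-tail bound (valid without replacement by Hoeffding's reduction) with the algorithm's choice of $p$ to get $(1-\epsilon/2)^2\geq 1-\epsilon$. The only point worth making explicit is a consistent tie-breaking rule so that the ordering within $M_i$ agrees with the global ordering $v_1,\dots,v_n$; with that, the identity $\pr[e_i\in T\mid M_i]=\min\{Y/s,1\}$ is exact.
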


\begin{lemma}[Lemma~4.4 in \cite{buchbinder2017comparing}]
	For every $i = 0,1,\ldots,k-1$, it holds that $\Ebb_{e_i} [ X_j | S_i ]$ is a non-increasing function of $j$.
\end{lemma}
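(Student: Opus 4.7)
The plan is to observe that $X_j$ is a Bernoulli indicator when conditioned on $S_i$, so $\Ebb_{e_i}[X_j \mid S_i] = \pr[e_i = v_j \mid S_i]$, and hence it suffices to verify the pairwise inequality $\pr[e_i = v_j \mid S_i] \geq \pr[e_i = v_{j+1} \mid S_i]$ for every $1 \leq j \leq n-1$. I would then prove that pairwise bound by a swap-coupling on the uniform draw of $M_i$: for each outcome realising $e_i = v_{j+1}$, I want to exhibit a matching outcome of at least the same probability that realises $e_i = v_j$.

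I would split the sample space according to $M_i \cap \{v_j, v_{j+1}\}$. The case $M_i \cap \{v_j, v_{j+1}\} = \varnothing$ contributes zero to both probabilities and can be dropped. For the asymmetric outcomes, fix $B \subseteq \Omega \setminus \{v_j, v_{j+1}\}$ with $|B| = \lceil pn \rceil - 1$ and pair $M_i = B \cup \{v_j\}$ with $M_i = B \cup \{v_{j+1}\}$; these two sets have equal probability since $M_i$ is uniform over size-$\lceil pn \rceil$ subsets of $\Omega$. Because $B$ contains neither $v_j$ nor $v_{j+1}$, the elements of $B$ of index $<j$ coincide with those of index $<j+1$, so the rank of $v_j$ inside $B \cup \{v_j\}$ equals the rank of $v_{j+1}$ inside $B \cup \{v_{j+1}\}$. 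Hence, conditionally on the independent draw of $d_i$, these paired outcomes contribute identically to $\pr[e_i = v_j \mid S_i]$ and to $\pr[e_i = v_{j+1} \mid S_i]$, and they cancel in the difference.

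The decisive case is $\{v_j, v_{j+1}\} \subseteq M_i$. For every $B' \subseteq \Omega \setminus \{v_j, v_{j+1}\}$ of size $\lceil pn \rceil - 2$, inside $M_i = B' \cup \{v_j, v_{j+1}\}$ the rank of $v_{j+1}$ is exactly one more than the rank of $v_j$, because $v_j$ sits immediately above $v_{j+1}$ in the global sort and every other element of $M_i$ is either above both or below both. Writing $r$ for the rank of $v_j$, we have $e_i = v_j$ iff $\lceil d_i \rceil = r$ and $e_i = v_{j+1}$ iff $\lceil d_i \rceil = r+1$. Since $d_i$ is uniform on $(0,s]$, the mass on the bucket $(r-1, r]$ is at least the mass on $(r, r+1]$, with equality when both lie fully in $(0,s]$ and a strict gain when $r+1$ exceeds $\lceil s \rceil$. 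Summing this contribution over $B'$ and combining with the previous two cases yields the desired inequality.

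The main obstacle I anticipate is the rank-equality bookkeeping in Case~2 together with the tie-breaking convention: the argument implicitly assumes that ``the $\lceil d_i \rceil$-th largest distorted marginal contribution'' is unambiguously defined, so one must fix a tie-breaking rule (for instance, by the original index in $\Omega$) and carry it consistently through the swap. Everything else is mechanical; once these conventions are in place, the coupling reproduces the swap argument behind Lemma~4.4 of \cite{buchbinder2017comparing}.
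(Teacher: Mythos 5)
Your coupling argument is correct: the three-way split on $M_i \cap \{v_j, v_{j+1}\}$, the measure-preserving swap $B \cup \{v_j\} \leftrightarrow B \cup \{v_{j+1}\}$ with the rank-equality observation, and the monotonicity of $\pr[\lceil d_i\rceil = r]$ in $r$ for $d_i$ uniform on $(0,s]$ together give exactly the pairwise inequality $\pr[e_i = v_j \mid S_i] \geq \pr[e_i = v_{j+1} \mid S_i]$, and your caveat about fixing a tie-breaking convention consistent with the sort $v_1,\ldots,v_n$ is the right thing to flag. The paper itself offers no proof of this lemma — it simply defers to Lemma~4.4 of \cite{buchbinder2017comparing} — and your argument is essentially a self-contained reconstruction of that cited swap argument, so nothing further is needed.
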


Next, we consider the increase of the distorted objective $\Phi_i$ in each iteration.

\begin{lemma} \label{marginal gain phi_k}
	In each iteration ($i=0,1,\ldots,k-1$) of Algorithm~\ref{drsg}, it holds that
	\begin{equation*}
		\Phi_{i+1}(S_{i+1}) - \Phi_i(S_i) = \Psi_i(S_i, e_i) + \frac 1k \bi g(S_i).
	\end{equation*}
\end{lemma}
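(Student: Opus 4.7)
The plan is to proceed by a direct case analysis on whether the element $e_i$ selected in iteration $i$ is actually added to the solution, i.e., on whether the test $\bi g(e_i|S_i) - \ell_{e_i} > 0$ succeeds. In both cases, I will simply expand $\Phi_{i+1}(S_{i+1}) - \Phi_i(S_i)$ using the definition $\Phi_i(T) = \ai g(T) - \ell(T)$, and rely on the algebraic identity $\bi - \ai = \frac 1k \bi$ to handle the change in the coefficient in front of $g$.

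In the first case, $S_{i+1} = S_i + e_i$ and the distorted marginal gain $\bi g(e_i|S_i) - \ell_{e_i}$ is strictly positive, so by definition $\Psi_i(S_i,e_i) = \bi g(e_i|S_i) - \ell_{e_i}$. Expanding $\Phi_{i+1}(S_i + e_i) - \Phi_i(S_i)$ and grouping terms, the gain splits into $\bi g(e_i|S_i) - \ell_{e_i}$ plus $[\bi - \ai] g(S_i) = \frac 1k \bi g(S_i)$, which is exactly $\Psi_i(S_i,e_i) + \frac 1k \bi g(S_i)$.

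In the second case, $S_{i+1} = S_i$ because $\bi g(e_i|S_i) - \ell_{e_i} \leq 0$, so $\Psi_i(S_i,e_i) = 0$. Here $\Phi_{i+1}(S_i) - \Phi_i(S_i) = [\bi - \ai] g(S_i) = \frac 1k \bi g(S_i)$, which again equals $\Psi_i(S_i,e_i) + \frac 1k \bi g(S_i)$ since the $\Psi_i$ term vanishes.

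Since the two cases are exhaustive and yield the same expression, the identity holds deterministically (conditioned on $S_i$ and $e_i$). There is no genuine obstacle here; the lemma is essentially bookkeeping. The only subtlety worth highlighting is that the definition of $\Psi_i$ uses a $\max\{0,\cdot\}$ precisely so that its value matches the actual realized distorted marginal gain in both branches of the algorithm's conditional, which is what makes the identity hold with equality rather than merely an inequality.
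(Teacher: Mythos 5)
Your proof is correct and matches the paper's own argument essentially verbatim: both proceed by case analysis on whether $e_i$ passes the test $\bi g(e_i|S_i) - \ell_{e_i} > 0$, expand $\Phi_{i+1}(S_{i+1}) - \Phi_i(S_i)$ from the definition, and use the identity $\bi - \ai = \frac{1}{k}\bi$ together with the observation that the $\max\{0,\cdot\}$ in $\Psi_i$ makes its value coincide with the realized distorted gain in each branch. No issues.
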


\begin{proof}
	We consider two cases.
	\begin{enumerate}[(i)]
		\item If $S_{i+1} = S_i$, then $\Psi_i(S_i, e_i) = 0$. By the definition of $\Phi_i$, we have
		\begin{align*}
			\Phi_{i+1}(S_{i+1}) - \Phi_i(S_i) &= \left[ \bi g(S_i) - \ell(S_i) \right] - \left[ \ai g(S_i) - \ell(S_i) \right] \\
			&= \frac 1k \bi g(S_i) \\
			&= \Psi_i(S_i, e_i) + \frac 1k \bi g(S_i).
		\end{align*}
		
		\item If $S_{i+1} = S_i + e_i$, then $\Psi_i(S_i, e_i) = \bi g(e_i|S_i) - \ell_{e_i} > 0$. It follows that
		\begin{align*}
			\Phi_{i+1}&(S_{i+1}) - \Phi_i(S_i) \\
			&= \left[ \bi g(S_i + e_i) - \ell(S_i + e_i) \right] - \left[ \ai g(S_i) - \ell(S_i) \right] \\
			&= \bi g(e_i|S_i) - \ell_{e_i} + \frac 1k \bi g(S_i) \\
			&= \Psi_i(S_i, e_i) + \frac 1k \bi g(S_i).
		\end{align*}
	\end{enumerate}
	The proof is completed.
\end{proof}

We then lower bound the expected increase of the distorted objective $\Phi_i$ in each iteration.

\begin{lemma} \label{expected increase phi}
	For every $i = 0,1,\cdots,k-1$, it holds that
	\begin{align*}
		\Ebb_{S_{i+1}} &[ \Phi_{i+1} (S_{i+1}) ] - \Ebb_{S_i} [ \Phi_i (S_i) ] \geq \frac{1-\epsilon}{k} \left\{ \bi \Ebb_{S_i} [ g (OPT \cup S_i) ] - \ell(OPT) \right\},
	\end{align*}
	where $OPT \in \argmax \{ g(A) - \ell(A) : A \subseteq \Omega, |A| \leq k \}$.
\end{lemma}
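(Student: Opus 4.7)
The plan is to reduce the claim to a per-realization conditional inequality and then to extract the $\tfrac{1-\epsilon}{k}$ factor by a rearrangement argument that aligns the algorithm's sampling distribution with the distorted marginal ordering. By the tower property, it suffices to prove, for every realization of $S_i$,
\begin{equation*}
\Ebb_{e_i}\!\left[\Phi_{i+1}(S_{i+1}) - \Phi_i(S_i) \mid S_i\right] \;\geq\; \frac{1-\epsilon}{k}\bigl\{\bi\, g(OPT \cup S_i) - \ell(OPT)\bigr\},
\end{equation*}
since taking the outer expectation over $S_i$ yields the stated bound. Applying Lemma~\ref{marginal gain phi_k} rewrites the left-hand side as $\Ebb_{e_i}[\Psi_i(S_i,e_i)\mid S_i] + \tfrac{1}{k}\bi\, g(S_i)$, so the real work reduces to lower bounding $\Ebb_{e_i}[\Psi_i(S_i,e_i)\mid S_i]$.

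For this main step, I will use the ordering $v_1,\ldots,v_n$ of $\Omega$ by non-increasing distorted marginal gain already introduced above the statement. Since $\Psi_i(S_i,v_j) = \max\{0,\, \bi g(v_j\mid S_i) - \ell_{v_j}\}$ and $\max\{0,\cdot\}$ preserves monotonicity, the sequence $\Psi_i(S_i,v_j)$ is itself non-increasing in $j$. Writing
\begin{equation*}
\Ebb_{e_i}[\Psi_i(S_i,e_i)\mid S_i] \;=\; \sum_{j=1}^{n}\Ebb_{e_i}[X_j\mid S_i]\,\Psi_i(S_i,v_j) \;\geq\; \sum_{j=1}^{k}\Ebb_{e_i}[X_j\mid S_i]\,\Psi_i(S_i,v_j),
\end{equation*}
I will apply Chebyshev's sum inequality to the two non-increasing sequences $\{\Ebb_{e_i}[X_j\mid S_i]\}_{j=1}^{k}$ and $\{\Psi_i(S_i,v_j)\}_{j=1}^{k}$; the first is non-increasing by Lemma~4.4 of \cite{buchbinder2017comparing}. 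Combined with $\sum_{j=1}^{k}\Ebb_{e_i}[X_j\mid S_i]\geq 1-\epsilon$ from Lemma~\ref{sum of indicator}, this gives
\begin{equation*}
\Ebb_{e_i}[\Psi_i(S_i,e_i)\mid S_i] \;\geq\; \frac{1-\epsilon}{k}\sum_{j=1}^{k}\Psi_i(S_i,v_j) \;\geq\; \frac{1-\epsilon}{k}\sum_{e\in OPT}\Psi_i(S_i,e),
\end{equation*}
where the last step uses that $\{v_1,\ldots,v_k\}$ is a top-$k$ set for $\Psi_i(S_i,\cdot)$ and $|OPT|\leq k$.

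To close, I will tie $\sum_{e\in OPT}\Psi_i(S_i,e)$ back to $g(OPT\cup S_i)$: dropping the outer $\max$ gives $\sum_{e\in OPT}\Psi_i(S_i,e)\geq \bi\sum_{e\in OPT} g(e\mid S_i) - \ell(OPT)$, and submodularity of $g$ yields $\sum_{e\in OPT} g(e\mid S_i)\geq g(OPT\cup S_i) - g(S_i)$. Substituting back into the conditional inequality and combining with the leftover $\tfrac{1}{k}\bi g(S_i)$ term leaves a surplus of $\tfrac{\epsilon}{k}\bi g(S_i)\geq 0$ (using $g\colon 2^\Omega\to\Rbb_+$), which I discard to recover the target $\frac{1-\epsilon}{k}\{\bi g(OPT\cup S_i) - \ell(OPT)\}$. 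A final outer expectation over $S_i$ completes the proof.

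I expect the main obstacle to be the Chebyshev rearrangement. The algorithm selects $e_i$ via a two-stage random mechanism (first sampling $M_i$, then picking its $\lceil d_i\rceil$-th ranked element), so it is not a priori obvious how $\Ebb_{e_i}[\Psi_i(S_i,e_i)\mid S_i]$ should compare to a sum over the fixed but unknown set $OPT$. The crucial alignment is that the algorithm's ranking coincides with the ordering by non-increasing distorted marginal gain, so both $\Ebb_{e_i}[X_j\mid S_i]$ and $\Psi_i(S_i,v_j)$ are non-increasing in $j$; once this is observed, Chebyshev's sum inequality together with the aggregate bound of Lemma~\ref{sum of indicator} converts the random-sampling guarantees into the $\frac{1-\epsilon}{k}$ factor against any feasible $OPT$.
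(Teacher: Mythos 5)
Your proposal is correct and follows essentially the same route as the paper's proof: conditioning on $S_i$, applying Lemma~\ref{marginal gain phi_k}, using the non-increasing ordering of both $\Ebb_{e_i}[X_j\mid S_i]$ and $\Psi_i(S_i,v_j)$ to invoke Chebyshev's sum inequality together with Lemma~\ref{sum of indicator}, comparing the top-$k$ sum to $\sum_{e\in OPT}\Psi_i(S_i,e)$, and absorbing the surplus $\tfrac{\epsilon}{k}\bi g(S_i)\geq 0$. No gaps.
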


\begin{proof}
	By Lemma~\ref{marginal gain phi_k}, it holds that
	\begin{equation*}
		\Ebb_{e_i} \left[ \Phi_{i+1} (S_{i+1}) - \Phi_i (S_i) | S_i \right] = \Ebb_{e_i} [ \Psi_i(S_i,e_i) | S_i ] + \frac 1k \bi g(S_i).
	\end{equation*}

	Based on the definition of $X_j$, we immedietly get $\Psi_i(S_i, e_i) = \sum_{j=1}^n X_j \cdot \Psi_i(S_i, v_j)$. Since $\Psi_i(S_i, v_j)$ is non-negative for any $j = 1,2,\ldots,n$, it follows that $\Psi_i(S_i, e_i) \geq \sum_{j=1}^k X_j \cdot \Psi_i(S_i, v_j)$. By the linearity of expectation, we have \[ \Ebb_{e_i} [\Psi_i(S_i, e_i) | S_i] \geq \sum_{j=1}^k \Ebb_{e_i} [X_j | S_i] \cdot \Psi_i(S_i, v_j). \]
	
	Since $\Ebb_{e_i} [X_j | S_i]$ and $\Psi_i(S_i, v_j)$ are both non-increasing functions of $j$, we have
	
	\begin{align*}
		\Ebb_{e_i} [\Psi_i(S_i, e_i) | S_i] &\geq \sum_{j=1}^k \Ebb_{e_i} [X_j | S_i] \cdot \frac 1k \sum_{j=1}^{k}\Psi_i(S_i, v_j) \\
		&\geq \frac{1-\epsilon}{k} \sum_{e \in OPT} \Psi_i(S_i, e) \\
		&= \frac{1-\epsilon}{k} \sum_{e \in OPT} \max \left\{ 0, \bi g(e|S_i) - \ell_{e} \right\} \\
		&\geq \frac{1-\epsilon}{k} \sum_{e \in OPT}  \left[ \bi g(e|S_i) - \ell_{e} \right] \\
		&\geq \frac{1-\epsilon}{k} \left\{ \bi [ g(OPT \cup S_i) - g(S_i) ] - \ell(OPT) \right\} ,
	\end{align*}
	where the first inequality follows by Chebyshev's sum inequality\footnote{If $a_1 \geq a_2 \geq \cdots \geq a_n$ and $b_1 \geq b_2 \geq \cdots \geq b_n$, then it holds that $\frac 1n \sum_{k=1}^{n} a_k \cdot b_k \geq \left( \frac 1n \sum_{k=1}^{n} a_k \right) \left( \frac 1n \sum_{k=1}^{n} b_k \right)$.}, the second inequality follows by Lemma~\ref{sum of indicator}, and the last inequality follows by the submodularity of function $g$.
	
	Thus, we obtain
	\begin{align*}
		\Ebb_{e_i} &\left[ \Phi_{i+1} (S_{i+1}) - \Phi_i (S_i) | S_i \right] \\ 
		&\geq \frac{1-\epsilon}{k} \left\{ \bi [ g(OPT \cup S_i) - g(S_i) ] - \ell(OPT) \right\} + \frac 1k \bi g(S_i) \\
		&= \frac{1-\epsilon}{k} \left[ \bi g(OPT \cup S_i) - \ell(OPT) \right] + \frac{\epsilon}{k}  \bi g(S_i) \\
		&\geq \frac{1-\epsilon}{k} \left[ \bi g(OPT \cup S_i) - \ell(OPT) \right],
	\end{align*}
	where the last inequality follows by the non-negativity of function $g$.
	
	It follows that
	\begin{align*}
		\Ebb_{S_{i+1}} &[ \Phi_{i+1} (S_{i+1}) ] - \Ebb_{S_i} [ \Phi_i (S_i) ] \\ 
		&= \Ebb_{S_i} \left[ \Ebb_{e_i} \left[ \Phi_{i+1} (S_{i+1}) - \Phi_i (S_i) | S_i \right] \right] \\
		&\geq \frac{1-\epsilon}{k} \left\{ \bi \Ebb_{S_i} [ g(OPT \cup S_i) ] - \ell(OPT) \right\},
	\end{align*}
	which conludes the proof.
\end{proof}

Again, we need to bound the term $\Ebb_{S_i} [ g(OPT \cup S_i) ]$ from below with respect to $g(OPT)$.

\begin{lemma} \label{lb g(Si) sample} 
	For every $i = 0,1,\ldots,k-1$, it holds that $\Ebb_{S_i} [ g(OPT \cup S_i) ] \geq \left( 1 -\frac 1k \right)^i g(OPT)$.
\end{lemma}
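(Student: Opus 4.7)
The plan is to mimic the proof of Lemma~\ref{lb g(Si)} for the \drg\ algorithm: the essential ingredient there was simply a per-iteration bound of $1/k$ on the probability that a given element enters the solution. Once the corresponding bound is established for Algorithm~\ref{drsg}, the identical telescoping argument, followed by an application of Lemma~\ref{buch lemma} to the non-negative submodular function $h(A) := g(OPT \cup A)$, will yield $\Ebb_{S_i}[g(OPT \cup S_i)] \geq (1 - 1/k)^i g(OPT)$.

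The key step is therefore to show that for every element $e \in \Omega$ and every iteration $i$, $\pr[e_i = e \,|\, S_i] \leq 1/k$. This uses the two independent sources of randomness in Algorithm~\ref{drsg}. Since $M_i$ is drawn uniformly from the $\lceil pn \rceil$-subsets of $\Omega$, we have $\pr[e \in M_i \,|\, S_i] = \lceil pn \rceil / n$. Conditioned on $e \in M_i$, the element $e$ has some deterministic rank $r$ in the distorted-marginal ordering of $M_i$, and $e_i = e$ precisely when $\lceil d_i \rceil = r$; because $d_i$ is uniform on $(0, s]$ with $s = \frac{k}{n}\lceil pn \rceil$, this happens with probability at most $1/s$. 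Multiplying the two factors gives $\pr[e_i = e \,|\, S_i] \leq \frac{\lceil pn \rceil}{n} \cdot \frac{1}{s} = \frac{1}{k}$.

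Since $e$ can only enter $S_{i+1}$ through $e_i$ when $e \notin S_i$, the bound $\pr[e \notin S_{i+1} \,|\, e \notin S_i] \geq 1 - 1/k$ follows. A telescoping product over $i$ iterations (identical to the derivation in Lemma~\ref{lb g(Si)}, using $S_0 = \varnothing$) yields $\pr[e \notin S_i] \geq (1 - 1/k)^i$, so every element appears in $S_i$ with probability at most $1 - (1 - 1/k)^i$. Applying Lemma~\ref{buch lemma} to $h$ with $p = 1 - (1-1/k)^i$ then completes the argument (the case $i=0$ holds trivially because $S_0 = \varnothing$).

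The main obstacle is the $1/k$ bound on $\pr[e_i = e \,|\, S_i]$, because of the edge cases when $s$ is not an integer: $\lceil d_i \rceil$ attains its maximum value $\lceil s \rceil$ with probability $s - \lfloor s \rfloor$ rather than $1/s$, and one also needs $\lceil s \rceil \leq \lceil pn \rceil$ so that the chosen rank always corresponds to a genuine element of $M_i$. Both points are easily dispatched: the first only decreases $\pr[\lceil d_i \rceil = r]$ and hence is consistent with the upper bound, while the second follows from $k \leq n$, which gives $s \leq \lceil pn \rceil$ and therefore $\lceil s \rceil \leq \lceil pn \rceil$.
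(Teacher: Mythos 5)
Your proposal is correct and follows essentially the same route as the paper: bound $\pr[e_i = e \mid S_i]$ by $\frac{\lceil pn\rceil}{n}\cdot\frac{1}{s} = \frac{1}{k}$ using the two sources of randomness, telescope to get $\pr[e\in S_i]\le 1-(1-1/k)^i$, and apply Lemma~\ref{buch lemma} to $h(A):=g(OPT\cup A)$. Your explicit treatment of the non-integer-$s$ edge case and the check that $\lceil s\rceil\le\lceil pn\rceil$ is slightly more careful than the paper's, but the argument is the same.
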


\begin{proof}
	Suppose that $e$ is an arbitrary element in ground set $\Omega$. When the event $e \notin S_i$ happens, then $e_i \neq e$ implies $e \notin S_{i+1}$. Thus, we have 
	\begin{align*}
		\pr [ e \notin S_{i+1} | e \notin S_i ] &\geq \pr [ e_i \neq e | e \notin S_i ] \\
		&= \pr[ e \notin M_i ] \cdot \pr[ e_i \neq e | e \notin S_i, e \notin M_i ] + \pr[ e \in M_i ] \cdot \pr[ e_i \neq e | e \notin S_i, e \in M_i ].
	\end{align*}

	According to the settings of Algorithm~\ref{drsg}, it holds that
	\begin{equation*}
		\pr [ e \in M_i ] = \left. \binom{n-1}{\lceil pn \rceil - 1} \middle/ \binom{n}{\lceil pn \rceil} \right. = \frac{\lceil pn \rceil}{n},
	\end{equation*}
	and
	\begin{equation*}
		\pr[ e_i \neq e | e \notin S_i, e \notin M_i ] = 1.
	\end{equation*}

	Now, we consider the conditional probability $\pr[ e_i \neq e | e \notin S_i, e \in M_i ]$. If $s$ is an integer, we have  $\pr[ e_i \neq e | e \notin S_i, e \in M_i ] = \frac 1s$. If $s$ is not an integer, we have  $\pr[ e_i \neq e | e \notin S_i, e \in M_i ] = \frac 1s \text{ or } \frac{s - \lfloor s \rfloor}{s}$. Thus, it always holds that $\pr[ e_i \neq e | e \notin S_i, e \in M_i ] \leq \frac 1s$.
	
	We obtain
	\begin{align*}
		\pr [ e \notin S_{i+1} | e \notin S_i ] &\geq \left( 1 - \frac{\lceil pn \rceil}{n} \right) + \frac{\lceil pn \rceil}{n} \cdot \left( 1 - \frac 1s \right) \\
		&= 1 - \frac{\lceil pn \rceil}{sn} \\
		&= 1 - \frac 1k,
	\end{align*}
	where the last equality holds due to $s = \frac kn \lceil pn \rceil$.
	
	It follows that, for every $i = 1,2,\ldots,k-1$,
	\begin{align*}
		\pr [ e \notin S_i ] &= \pr [ e \notin S_0,e \notin S_1,e \notin S_2,\ldots,e \notin S_{i-1},e \notin S_i ] \\
		&= \pr [ e \notin S_0 ] \cdot \pr [ e \notin S_1 | e \notin S_0 ] \cdot \pr [ e \notin S_2 | e \notin S_0,e \notin S_1 ] \cdots \pr [ e \notin S_i | e \notin S_0,e \notin S_1,\ldots,e \notin S_{i-1} ] \\
		&= \pr [ e \notin S_0 ] \cdot \pr [ e \notin S_1 | e \notin S_0 ] \cdot \pr [ e \notin S_2 | e \notin S_1 ] \cdots \pr [ e \notin S_i | e \notin S_{i-1} ] \\
		&= \pr [ e \notin S_1 | e \notin S_0 ] \cdot \pr [ e \notin S_2 | e \notin S_1 ] \cdots \pr [ e \notin S_i | e \notin S_{i-1} ] \\
		&\geq \left( 1 - \frac 1k \right)^i,
	\end{align*}
	where the third equality holds since $e \notin S_j$ implies that $e \notin S_r$ for every $r = 0,1,\ldots,j-1$, and the fourth equality follows by $S_0 = \varnothing$. Thus, we get that $\pr [ e \in S_i ] = 1 - \pr [ e \notin S_i ] \leq 1 - \left( 1 - \frac 1k \right)^i $ for every element $e \in \Omega$ and every $i = 1,2,\ldots,k-1$.
	
	Let $h \colon 2^\Omega \to \Rbb_+$ be the function $h(A) := g(OPT \cup A)$ for every $A \subseteq \Omega$. Note that $h$ is still a submodular function. Thus, by Lemma~\ref{buch lemma}, we have $\Ebb_{S_i} [ h(S_i) ] \geq \left( 1 - \frac 1k \right)^i h(\varnothing)$, which indicates that $\Ebb_{S_i} [ g(OPT \cup S_i) ] \geq \left( 1 -\frac 1k \right)^i f(OPT)$ for every $i = 1,2,\ldots,k-1$. Since $\Ebb_{S_0} [ g(OPT \cup S_0) ] = g(OPT)$, the proof is completed.
\end{proof}

Finally, we analyze the performance guarantee of Algorihtm~\ref{drsg}.

\begin{theorem} \label{drsg guarantee}
	When Algorithm~\ref{drsg} terminates, it returns a feasible set $S_k$ with
	\begin{equation*}
		\Ebb_{S_k} [ g(S_k) - \ell(S_k) ] \geq (1-\epsilon) \left[ \left( 1 -\frac 1k \right)^{k-1} g(OPT) - \ell(OPT) \right] \geq (e^{-1} - \epsilon) g(OPT) - \ell(OPT),
	\end{equation*}
	where $OPT \in \argmax \{ g(A) - \ell(A) : A \subseteq \Omega, |A| \leq k \}$. And, during the iteration, the total value oracle queries are $O(\frac{n}{\epsilon^2} \ln \frac 1\epsilon)$.
\end{theorem}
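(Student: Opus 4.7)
The plan is to mirror the telescoping argument from Theorem~\ref{drg guarantee}, but plug in the sharper per-step increase supplied by Lemma~\ref{expected increase phi}. First I would combine that lemma with Lemma~\ref{lb g(Si) sample}, observing that
\[ \bi \cdot \left(1-\frac{1}{k}\right)^i = \left(1-\frac{1}{k}\right)^{k-1}, \]
so the expected per-iteration increase of the distorted objective is at least $\frac{1-\epsilon}{k}\bigl[(1-1/k)^{k-1} g(OPT) - \ell(OPT)\bigr]$ for every $i$, with no $i$-dependence remaining on the right-hand side.

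Next I would telescope this bound over $i = 0,1,\ldots,k-1$. The boundary values are $\Phi_k(S_k) = g(S_k) - \ell(S_k)$ and $\Phi_0(S_0) = (1-1/k)^k g(\varnothing) - \ell(\varnothing) \geq 0$, the latter because $g$ is non-negative and $\ell$ is normalized. Summing the $k$ expected increments then gives
\[ \Ebb_{S_k}[g(S_k) - \ell(S_k)] \geq (1-\epsilon)\left[\left(1-\frac{1}{k}\right)^{k-1} g(OPT) - \ell(OPT)\right], \]
which is the first claimed inequality. For the second inequality I would use $(1-1/k)^{k-1} \geq e^{-1}$ (from $\ln(1-1/k) \geq -1/(k-1)$ for $k \geq 2$) together with the two trivialities $(1-\epsilon) e^{-1} \geq e^{-1} - \epsilon$ and $(1-\epsilon)\ell(OPT) \leq \ell(OPT)$, where the latter relies on $\ell \geq 0$.

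Finally, for the oracle complexity I would count the per-iteration cost. Each iteration evaluates the distorted marginal gain $\bi g(e|S_i) - \ell_e$ for every element of $M_i$, which costs $O(|M_i|) = O(pn)$ queries; selecting the $\lceil d_i \rceil$-th largest value is oracle-free. Summed over $k$ iterations and substituting $p = \frac{8}{k\epsilon^2}\ln\frac{2}{\epsilon}$, this yields a total of $O(kpn) = O(\frac{n}{\epsilon^2}\ln\frac{1}{\epsilon})$ queries, matching the bound already established for the complementary regime $\epsilon \in (0,\delta]$ via Algorithm~\ref{drg}. I do not anticipate any serious obstacle, since the heavy lifting has already been done in Lemmas~\ref{expected increase phi} and~\ref{lb g(Si) sample} (Chebyshev's sum inequality for the former, the $1-1/k$ non-inclusion bound via Lemma~\ref{buch lemma} for the latter); the only point worth double-checking is that the two slacks being discarded — the $\frac{\epsilon}{k}\bi g(S_i)$ term inside Lemma~\ref{expected increase phi} and the factor $(1-\epsilon)$ in front of $\ell(OPT)$ — are both dropped in the direction that weakens the inequality, so that the stated bound is still valid.
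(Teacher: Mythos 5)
Your proposal is correct and follows essentially the same route as the paper: combine Lemma~\ref{expected increase phi} with Lemma~\ref{lb g(Si) sample} (via the identity $\bi \cdot (1-\frac1k)^i = (1-\frac1k)^{k-1}$), telescope using $\Phi_0(S_0)\geq 0$ and $\Phi_k(S_k)=g(S_k)-\ell(S_k)$, and then pass from $(1-\epsilon)[e^{-1}g(OPT)-\ell(OPT)]$ to $(e^{-1}-\epsilon)g(OPT)-\ell(OPT)$ using $e^{-1}\leq 1$ and $\ell\geq 0$. The oracle-query count of $O(k\lceil pn\rceil)=O(\frac{n}{\epsilon^2}\ln\frac1\epsilon)$ also matches the paper's accounting.
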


\begin{proof}
	According to Lemma~\ref{expected increase phi} and Lemma~\ref{lb g(Si) sample}, we have
	\begin{equation*}
		\Ebb_{S_{i+1}} [ \Phi_{i+1}(S_{i+1}) ] -  \Ebb_{S_i} [ \Phi_i(S_i) ] \geq \frac{1-\epsilon}{k} \left[ \left( 1 -\frac 1k \right)^{k-1} g(OPT) - \ell(OPT) \right] ,
	\end{equation*}
	for every $i = 0,1,\ldots,k-1$.
	
	Since $\Phi_k(S_k) = g(S_k) - \ell(S_k)$ and $\Phi_0(S_0) = \left( 1 -\frac 1k \right)^k g(\varnothing) \geq 0$, we have
	\begin{align*}
		\Ebb_{S_k} [ g(S_k) - \ell(S_k) ] &= \Ebb_{S_k} [ \Phi_k(S_k) ] \\ 
		&= \Ebb_{S_0} [ \Phi_k(S_0) ] + \sum_{i=0}^{k-1} \left\{ \Ebb_{S_{i+1}} [ \Phi_{i+1}(S_{i+1}) ] -  \Ebb_{S_i} [ \Phi_i(S_i) ] \right\} \\
		&\geq \frac{1-\epsilon}{k} \sum_{i=0}^{k-1} \left[ \left( 1 -\frac 1k \right)^{k-1} g(OPT) - \ell(OPT) \right] \\
		&= (1-\epsilon) \left[ \left( 1 -\frac 1k \right)^{k-1} g(OPT) - \ell(OPT) \right] \\
		&\geq (1-\epsilon) [ e^{-1} g(OPT) - \ell(OPT) ] \\
		&= (e^{-1} - e^{-1}\epsilon) g(OPT) - \ell(OPT) + \epsilon \ell(OPT) \\
		&\geq (e^{-1} - \epsilon) g(OPT) - \ell(OPT)
	\end{align*}
	where the penultimate inequality holds due to $\ln ( 1 - \frac 1k ) \geq - \frac{1}{k-1}$ for every $k \geq 2$, and the last inequality follows by $e^{-1} \leq 1$ and the non-negativity of function $\ell$.
	
	As for the number of value oracle queries, we notice that $O(\lceil pn \rceil)$ value oracle queries are needed in each iteration. Thus, during $k$ iterations, Algorithm~\ref{drsg} requires $O(k \lceil pn \rceil) = O(\frac{n}{\epsilon^2} \ln \frac 1\epsilon)$ value oracle queries in total. 
\end{proof}

\section{Unconstrained Problem} \label{sec: unconstrained}
In this section, we propose an algorithm for solving the regularized non-monotone maximization problem under no constraint, i.e.,
\begin{equation*} 
	\max_{A \subseteq \Omega} \quad g(A) - \ell(A),
\end{equation*}
where $\gfun$ is a non-monotone submodular function, $\ell \colon 2^\Omega \to \Rbb_+$ is a normalized modular function.

Our algorithm, which is presented as Algorithm~\ref{unconstrained drg}, is also based on the distorted objective. Again, we introduce two auxiliary functions $\Phi$ and $\Psi$. For any $i = 0,1,\ldots,n$ and any set $T \subseteq \Omega$, we define
\begin{equation*}
	\Phi_i(T) := \ain g(T) - \ell(T).
\end{equation*}
Additionally, for any iteration $i = 0,1,\ldots,n-1$ of our algorithms, any set $T \subseteq \Omega$, and any element $e \in \Omega$, we define
\begin{equation*}
	\Psi_i(T,e) := \max \left\{ 0, \bin g(e|T) - \ell_e \right\}.
\end{equation*}

\begin{algorithm}[htbp]
	\caption{\udrg}
	\label{unconstrained drg}
	\SetAlgoLined
	\KwIn{function $g$ and $\ell$}
	\KwOut{approximate solution $S_n$}
	Initialize $S_0 \leftarrow \varnothing$\;
	\For{$i = 0$ \KwTo $n-1$}{
		Let $e_i$ be a uniformly random element of $\Omega$ \;
		\eIf{ $\bin g(e_i|S_i) - \ell_{e_i} > 0$ }{
			$S_{i+1} \leftarrow S_i + e_i$ \;
		}{
			$S_{i+1} \leftarrow S_i$ \;
		}
	}
	\Return $S_n$ \;
\end{algorithm}

Though the unconstrained setting is a special case of the maximization problem under a cardinality constraint (i.e., $k=n$), Algorithm~\ref{unconstrained drg} is much simpler than Algorithm~\ref{drg} and Algorithm~3. Moreover, we will show that Algorithm~\ref{unconstrained drg} has the same performance guarantee as Algorithm~\ref{drg} and Algorithm~3 do.

Firstly, we consider the increase of the distorted objective $\Phi_i$ in each iteration. In the same way of proving Lemma~\ref{marginal gain phi_k}, we can obtain the following lemma.

\begin{lemma} \label{marginal gain phi_i}
	In each iteration ($i=0,1,\ldots,n-1$) of Algorithm~\ref{unconstrained drg}, it holds that
	\begin{equation*}
		\Phi_{i+1}(S_{i+1}) - \Phi_i(S_i) = \Psi_i(S_i, e_i) + \frac 1n \bin g(S_i).
	\end{equation*}
\end{lemma}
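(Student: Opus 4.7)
The plan is to mirror the proof of Lemma~\ref{marginal gain phi_k} almost verbatim, performing a case split on the if-else branch in Algorithm~\ref{unconstrained drg}. Both sides of the claimed identity depend only on $S_i$ and $e_i$, so no expectation is needed here; the statement is a pointwise equality that holds for each realization of $e_i$.

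The one algebraic ingredient I would isolate up front is the identity $\bin - \ain = \tfrac{1}{n}\bin$, which follows immediately from $\ain = \left(1-\tfrac{1}{n}\right)^{n-i} = \left(1-\tfrac{1}{n}\right)\cdot\bin$. With this in hand, the two cases reduce to near-trivial bookkeeping. When $\bin g(e_i|S_i) - \ell_{e_i} \leq 0$, the algorithm sets $S_{i+1} = S_i$ and the definition of $\Psi_i$ forces $\Psi_i(S_i,e_i) = 0$; then $\Phi_{i+1}(S_i) - \Phi_i(S_i) = (\bin-\ain)g(S_i) = \tfrac{1}{n}\bin g(S_i)$, which matches the right-hand side. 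When instead $\bin g(e_i|S_i) - \ell_{e_i} > 0$, we have $S_{i+1} = S_i+e_i$ and $\Psi_i(S_i,e_i) = \bin g(e_i|S_i) - \ell_{e_i}$; expanding $\Phi_{i+1}(S_i+e_i) - \Phi_i(S_i)$ using modularity of $\ell$ and inserting and subtracting $\bin g(S_i)$ splits the difference into $\bin g(e_i|S_i) - \ell_{e_i}$ plus $(\bin-\ain)g(S_i) = \tfrac{1}{n}\bin g(S_i)$, which is exactly the required identity.

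There is no serious obstacle: this lemma is just the $k \to n$, singleton-candidate specialization of Lemma~\ref{marginal gain phi_k}, and the only thing to watch is keeping the exponents of $\left(1-\tfrac{1}{n}\right)$ straight across the two coefficients $\ain$ and $\bin$.
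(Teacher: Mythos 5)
Your proposal is correct and follows exactly the route the paper intends: the paper proves this lemma by simply invoking ``the same way of proving Lemma~\ref{marginal gain phi_k},'' which is precisely the two-case split on the if-else branch plus the identity $\bin - \ain = \frac{1}{n}\bin$ that you spell out. The only unstated detail (in both your write-up and the paper's) is that the ``add'' branch can only fire when $e_i \notin S_i$, since otherwise $g(e_i|S_i)=0$ and $\ell_{e_i}\ge 0$ force the distorted gain to be nonpositive; this guarantees $\ell(S_i+e_i)=\ell(S_i)+\ell_{e_i}$ in your second case.
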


Next, we consider the lower bound of the term $\Psi_i(S_i, e_i)$.

\begin{lemma} \label{expected lb psi}
	In each iteration ($i=0,1,\ldots,n-1$) of Algorithm~\ref{unconstrained drg}, it holds that
	\begin{equation*}
		\Ebb_{(S_i, e_i)} [ \Psi_i(S_i, e_i) ] \geq \frac 1n \left( 1 - \frac 1n \right)^{n-1} g(OPT) - \frac 1n \ell(OPT) - \frac 1n \bin \Ebb_{S_i} [g(S_i)] ,
	\end{equation*}
	where $OPT \in \argmax \{ g(A) - \ell(A) : A \subseteq \Omega \}$
\end{lemma}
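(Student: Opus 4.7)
The plan is to condition on $S_i$, take expectation over the uniform choice of $e_i$, and then peel off the two summations separately, mirroring the structure of the analysis of \drg\ but simplified because there is no sampled set $M_i$ involved.

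First, since $e_i$ is uniform over $\Omega$, I would write
\begin{equation*}
\Ebb_{e_i}[\Psi_i(S_i,e_i)\mid S_i] \;=\; \frac{1}{n}\sum_{e\in\Omega}\Psi_i(S_i,e)
\;\geq\; \frac{1}{n}\sum_{e\in OPT}\Psi_i(S_i,e),
\end{equation*}
where the inequality uses $\Psi_i(S_i,e)\geq 0$ for every $e\in\Omega$. Dropping the outer $\max\{0,\cdot\}$ gives
\begin{equation*}
\sum_{e\in OPT}\Psi_i(S_i,e) \;\geq\; \sum_{e\in OPT}\bigl[\bin g(e\mid S_i)-\ell_e\bigr]
\;\geq\; \bin\bigl[g(OPT\cup S_i)-g(S_i)\bigr]-\ell(OPT),
\end{equation*}
the last inequality being the standard submodularity lower bound on sums of marginals. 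Taking the expectation over $S_i$ yields
\begin{equation*}
\Ebb_{(S_i,e_i)}[\Psi_i(S_i,e_i)] \;\geq\; \frac{\bin}{n}\,\Ebb_{S_i}[g(OPT\cup S_i)] -\frac{\bin}{n}\,\Ebb_{S_i}[g(S_i)] -\frac{\ell(OPT)}{n}.
\end{equation*}

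Second, I need to show $\Ebb_{S_i}[g(OPT\cup S_i)]\geq (1-1/n)^i g(OPT)$, which together with $\bin=(1-1/n)^{n-1-i}$ will make the first term collapse to $\frac{1}{n}(1-1/n)^{n-1}g(OPT)$, exactly matching the claim. For this, fix $e\in\Omega$ and observe that if $e\notin S_j$, then $e\notin S_{j+1}$ whenever $e_j\neq e$, an event of probability $1-1/n$ under the uniform choice of $e_j$. Chaining this across iterations as in the proof of Lemma~\ref{lb g(Si)} gives $\pr[e\notin S_i]\geq (1-1/n)^i$, i.e.\ $\pr[e\in S_i]\leq 1-(1-1/n)^i$. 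Applying Lemma~\ref{buch lemma} to the submodular function $h(A):=g(OPT\cup A)$ with $p=1-(1-1/n)^i$ yields the required inequality.

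Combining the two displays gives
\begin{equation*}
\Ebb_{(S_i,e_i)}[\Psi_i(S_i,e_i)] \;\geq\; \frac{1}{n}\Bigl(1-\frac1n\Bigr)^{\!n-1}g(OPT) -\frac{1}{n}\bin\,\Ebb_{S_i}[g(S_i)] -\frac{1}{n}\ell(OPT),
\end{equation*}
which is the statement of the lemma. The only subtle point is the probabilistic bound on $\pr[e\in S_i]$; but since $e_i$ in \udrg\ is sampled uniformly over $\Omega$ without any conditioning on a random pool, this step is actually cleaner than the corresponding step for \drg\ and \drsg, and no genuine new obstacle arises.
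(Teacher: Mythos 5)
Your proposal is correct and follows essentially the same route as the paper's own proof: conditioning on $S_i$, using the uniformity of $e_i$ to restrict the sum of $\Psi_i$ to $OPT$, dropping the $\max\{0,\cdot\}$, invoking submodularity, and then lower-bounding $\Ebb_{S_i}[g(OPT\cup S_i)]$ by $(1-1/n)^i g(OPT)$ via the element-exclusion argument combined with Lemma~\ref{buch lemma}. The paper likewise defers that last bound to the argument of Lemma~\ref{lb g(Si)}, so there is no substantive difference.
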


\begin{proof}
	We notice that
	\begin{align*}
		\Ebb_{e_i} [ \Psi_i(S_i, e_i) | S_i ] &= \sum_{e \in \Omega} \pr[ e_i = e | S_i ] \cdot \Psi_i(S_i, e) \\
		&= \frac 1n \sum_{e \in \Omega} \Psi_i(S_i, e) \\
		&\geq \frac 1n \sum_{e \in OPT} \Psi_i(S_i, e) \\
		&= \frac 1n \sum_{e \in OPT} \max \left\{ 0, \bin g(e|S_i) - \ell_{e} \right\} \\
		&\geq \frac 1n \sum_{e \in OPT} \left[ \bin g(e|S_i) - \ell_{e} \right] \\
		&\geq  \frac 1n \bin [ g(OPT \cup S_i) - g(S_i) ] - \frac 1n \ell(OPT) ,
	\end{align*}
	where the first inequality follows by the non-negativity of $\Psi_i(S_i, e)$ for every $e \in \Omega$, and the last inequality follows by the submodularity of function $g$.
	
	It follows that
	\begin{align*}
		\Ebb&_{(S_i, e_i)} [ \Psi_i(S_i, e_i) ] \\
		&= \Ebb_{S_i} \left[ \Ebb_{e_i} [ \Psi_i(S_i, e_i) | S_i ] \right] \\
		&\geq \frac 1n \bin \Ebb_{S_i} [ g(OPT \cup S_i) ] - \frac 1n \ell(OPT) - \frac 1n \bin \Ebb_{S_i} [ g(S_i) ].
	\end{align*}

	Following similar argument in the proof of Lemma~\ref{lb g(Si)}, we can obtain that $\Ebb_{S_i} [ g(OPT \cup S_i) ] \geq \left( 1 - \frac 1n \right)^i g(OPT)$ holds for every $i=0,1,\ldots,n-1$. Thus, we have 
	\begin{equation*}
		\Ebb_{(S_i, e_i)} [ \Psi_i(S_i, e_i) ] \geq \frac 1n \left( 1 - \frac 1n \right)^{n-1} g(OPT) - \frac 1n \ell(OPT) - \frac 1n \bin \Ebb_{S_i} [g(S_i)] ,
	\end{equation*}
	which concludes the proof.
\end{proof}

Following similar argument in the proof of Lemma~\ref{drg guarantee}, we can obtain the performance guarantee of Algorithm~\ref{unconstrained drg} based on Lemma~\ref{marginal gain phi_i} and Lemma~\ref{expected lb psi}.

\begin{theorem} 
	When Algorithm~\ref{unconstrained drg} terminates, it returns a feasible set $S_n$ with
	\begin{equation*}
		\Ebb_{S_n} [ g(S_n) - \ell(S_n) ] \geq \left( 1 -\frac 1n \right)^{n-1} g(OPT) - \ell(OPT) \geq e^{-1} g(OPT) - \ell(OPT),
	\end{equation*}
	where $OPT \in \argmax \{ g(A) - \ell(A) : A \subseteq \Omega \}$. And, during the iteration, the total value oracle queries are $O(n)$.
\end{theorem}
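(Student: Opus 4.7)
The plan is to combine Lemma~\ref{marginal gain phi_i} and Lemma~\ref{expected lb psi} via a telescoping argument, mirroring the proof of Theorem~\ref{drg guarantee}. First I would take the expectation over $S_i$ of the identity in Lemma~\ref{marginal gain phi_i}, obtaining
$$ \Ebb_{S_{i+1}}[\Phi_{i+1}(S_{i+1})] - \Ebb_{S_i}[\Phi_i(S_i)] = \Ebb_{(S_i,e_i)}[\Psi_i(S_i,e_i)] + \tfrac{1}{n}\bin \, \Ebb_{S_i}[g(S_i)]. $$
The key observation is that the term $\tfrac{1}{n}\bin\,\Ebb_{S_i}[g(S_i)]$ appears here with a positive sign and in the lower bound of Lemma~\ref{expected lb psi} with a negative sign, so substituting the latter into the former makes the (hard to control) $g(S_i)$-dependent terms cancel exactly, leaving the clean per-iteration estimate
$$ \Ebb_{S_{i+1}}[\Phi_{i+1}(S_{i+1})] - \Ebb_{S_i}[\Phi_i(S_i)] \geq \tfrac{1}{n}\left(1 - \tfrac{1}{n}\right)^{n-1} g(OPT) - \tfrac{1}{n}\ell(OPT). $$

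Next I would telescope this inequality over $i = 0, 1, \ldots, n-1$. The boundary values are $\Phi_n(S_n) = g(S_n) - \ell(S_n)$, since the leading coefficient $(1 - 1/n)^{n - n} = 1$, and $\Phi_0(S_0) = (1 - 1/n)^n g(\varnothing) - \ell(\varnothing) \geq 0$, using the non-negativity of $g$ together with $\ell(\varnothing) = 0$ from normalization. Summing the $n$ inequalities and rearranging then gives
$$ \Ebb_{S_n}[g(S_n) - \ell(S_n)] \geq \left(1 - \tfrac{1}{n}\right)^{n-1} g(OPT) - \ell(OPT). $$
The second inequality of the theorem, $(1 - 1/n)^{n-1} \geq e^{-1}$, follows from the elementary estimate $\ln(1 - 1/n) \geq -1/(n-1)$, which is valid for $n \geq 2$; the trivial $n = 1$ case can be checked directly.

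Finally, for the query complexity, each iteration of Algorithm~\ref{unconstrained drg} only examines the single marginal contribution of the sampled element $e_i$ to $S_i$, costing $O(1)$ value oracle queries, so the $n$ iterations use $O(n)$ queries in total. I do not anticipate any serious obstacle: the argument is essentially a simplified version of the matroid and cardinality analyses, and the only point requiring care is verifying that the coefficients $\tfrac{1}{n}\bin$ in Lemma~\ref{marginal gain phi_i} and Lemma~\ref{expected lb psi} match exactly so that the $g(S_i)$ terms cancel, which is immediate from the definitions of $\Phi_i$ and $\Psi_i$.
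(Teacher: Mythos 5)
Your proposal is correct and follows exactly the route the paper intends: it combines Lemma~\ref{marginal gain phi_i} with Lemma~\ref{expected lb psi} so that the $\frac{1}{n}\bin\,\Ebb_{S_i}[g(S_i)]$ terms cancel, telescopes the resulting per-iteration bound using $\Phi_n(S_n)=g(S_n)-\ell(S_n)$ and $\Phi_0(S_0)\geq 0$, and finishes with $\ln(1-\tfrac 1n)\geq -\tfrac{1}{n-1}$. The paper omits these details (it only says to mimic the proof of Theorem~\ref{drg guarantee}), and your write-up fills them in correctly, including the $O(n)$ query count.
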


\section{Conclusion} \label{sec:conclusion}
In this paper, we study the regularized non-monotone submodular maximization problem thoroughly. We propose several algorithms for the optimization problem subject to various constraints, including matroid constraint, cardinality constraint and no constraint. We give a systematical and unified analysis of the performance guarantee and the complexity of value oracle query of those algorithms we propose. According to the analysis, our algorithms are both effective and efficient.

\section*{Ackowledgement}
This research is supported by the National Natural Science Foundation of China under Grant Numbers 11991022 and 12071459.


\begin{thebibliography}{}
	
	\bibitem[Ageev et~al., 2001]{ageev20010}
	Ageev, A., Hassin, R., and Sviridenko, M. (2001).
	\newblock A 0.5-approximation algorithm for max dicut with given sizes of
	parts.
	\newblock {\em SIAM Journal on Discrete Mathematics}, 14(2):246--255.
	
	\bibitem[Alon and Spencer, 2004]{alon2004probabilistic}
	Alon, N. and Spencer, J.~H. (2004).
	\newblock {\em The probabilistic method}.
	\newblock John Wiley \& Sons.
	
	\bibitem[Buchbinder et~al., 2014]{buchbinder2014submodular}
	Buchbinder, N., Feldman, M., Naor, J., and Schwartz, R. (2014).
	\newblock Submodular maximization with cardinality constraints.
	\newblock In {\em Proceedings of the twenty-fifth annual ACM-SIAM symposium on
		Discrete algorithms}, pages 1433--1452. SIAM.
	
	\bibitem[Buchbinder et~al., 2017]{buchbinder2017comparing}
	Buchbinder, N., Feldman, M., and Schwartz, R. (2017).
	\newblock Comparing apples and oranges: Query trade-off in submodular
	maximization.
	\newblock {\em Mathematics of Operations Research}, 42(2):308--329.
	
	\bibitem[Buchbinder et~al., 2015]{buchbinder2015tight}
	Buchbinder, N., Feldman, M., Seffi, J., and Schwartz, R. (2015).
	\newblock A tight linear time (1/2)-approximation for unconstrained submodular
	maximization.
	\newblock {\em SIAM Journal on Computing}, 44(5):1384--1402.
	
	\bibitem[Calinescu et~al., 2011]{calinescu2011maximizing}
	Calinescu, G., Chekuri, C., Pal, M., and Vondr{\'a}k, J. (2011).
	\newblock Maximizing a monotone submodular function subject to a matroid
	constraint.
	\newblock {\em SIAM Journal on Computing}, 40(6):1740--1766.
	
	\bibitem[Feige, 1998]{feige1998threshold}
	Feige, U. (1998).
	\newblock A threshold of ln n for approximating set cover.
	\newblock {\em Journal of the ACM (JACM)}, 45(4):634--652.
	
	\bibitem[Feige et~al., 2011]{feige2011maximizing}
	Feige, U., Mirrokni, V.~S., and Vondr{\'a}k, J. (2011).
	\newblock Maximizing non-monotone submodular functions.
	\newblock {\em SIAM Journal on Computing}, 40(4):1133--1153.
	
	\bibitem[Feldman, 2020]{feldman2020guess}
	Feldman, M. (2020).
	\newblock Guess free maximization of submodular and linear sums.
	\newblock {\em Algorithmica}, pages 1--26.
	
	\bibitem[Feldman et~al., 2011]{feldman2011unified}
	Feldman, M., Naor, J., and Schwartz, R. (2011).
	\newblock A unified continuous greedy algorithm for submodular maximization.
	\newblock In {\em 2011 IEEE 52nd Annual Symposium on Foundations of Computer
		Science}, pages 570--579. IEEE.
	
	\bibitem[Frieze and Jerrum, 1997]{frieze1997improved}
	Frieze, A. and Jerrum, M. (1997).
	\newblock Improved approximation algorithms for maxk-cut and max bisection.
	\newblock {\em Algorithmica}, 18(1):67--81.
	
	\bibitem[Gharan and Vondr{\'a}k, 2011]{gharan2011submodular}
	Gharan, S.~O. and Vondr{\'a}k, J. (2011).
	\newblock Submodular maximization by simulated annealing.
	\newblock In {\em Proceedings of the twenty-second annual ACM-SIAM symposium on
		Discrete Algorithms}, pages 1098--1116. SIAM.
	
	\bibitem[Gr{\"o}tschel et~al., 1981]{grotschel1981ellipsoid}
	Gr{\"o}tschel, M., Lov{\'a}sz, L., and Schrijver, A. (1981).
	\newblock The ellipsoid method and its consequences in combinatorial
	optimization.
	\newblock {\em Combinatorica}, 1(2):169--197.
	
	\bibitem[Harshaw et~al., 2019]{harshaw2019submodular}
	Harshaw, C., Feldman, M., Ward, J., and Karbasi, A. (2019).
	\newblock Submodular maximization beyond non-negativity: Guarantees, fast
	algorithms, and applications.
	\newblock {\em arXiv preprint arXiv:1904.09354}.
	
	\bibitem[Iwata et~al., 2001]{iwata2001combinatorial}
	Iwata, S., Fleischer, L., and Fujishige, S. (2001).
	\newblock A combinatorial strongly polynomial algorithm for minimizing
	submodular functions.
	\newblock {\em Journal of the ACM (JACM)}, 48(4):761--777.
	
	\bibitem[Kazemi et~al., 2020]{kazemi2020regularized}
	Kazemi, E., Minaee, S., Feldman, M., and Karbasi, A. (2020).
	\newblock Regularized submodular maximization at scale.
	\newblock {\em arXiv preprint arXiv:2002.03503}.
	
	\bibitem[Kempe et~al., 2003]{kempe2003maximizing}
	Kempe, D., Kleinberg, J., and Tardos, {\'E}. (2003).
	\newblock Maximizing the spread of influence through a social network.
	\newblock In {\em Proceedings of the ninth ACM SIGKDD international conference
		on Knowledge discovery and data mining}, pages 137--146.
	
	\bibitem[Khuller et~al., 1999]{khuller1999budgeted}
	Khuller, S., Moss, A., and Naor, J.~S. (1999).
	\newblock The budgeted maximum coverage problem.
	\newblock {\em Information processing letters}, 70(1):39--45.
	
	\bibitem[Krause and Guestrin, 2007]{krause2007near}
	Krause, A. and Guestrin, C. (2007).
	\newblock Near-optimal observation selection using submodular functions.
	\newblock In {\em AAAI}, volume~7, pages 1650--1654.
	
	\bibitem[Krause and Guestrin, 2012]{krause2012near}
	Krause, A. and Guestrin, C.~E. (2012).
	\newblock Near-optimal nonmyopic value of information in graphical models.
	\newblock {\em arXiv preprint arXiv:1207.1394}.
	
	\bibitem[Krause et~al., 2008]{krause2008near}
	Krause, A., Singh, A., and Guestrin, C. (2008).
	\newblock Near-optimal sensor placements in gaussian processes: Theory,
	efficient algorithms and empirical studies.
	\newblock {\em Journal of Machine Learning Research}, 9(Feb):235--284.
	
	\bibitem[Lin and Bilmes, 2010]{lin2010multi}
	Lin, H. and Bilmes, J. (2010).
	\newblock Multi-document summarization via budgeted maximization of submodular
	functions.
	\newblock In {\em Human Language Technologies: The 2010 Annual Conference of
		the North American Chapter of the Association for Computational Linguistics},
	pages 912--920.
	
	\bibitem[Nemhauser and Wolsey, 1978]{nemhauser1978best}
	Nemhauser, G.~L. and Wolsey, L.~A. (1978).
	\newblock Best algorithms for approximating the maximum of a submodular set
	function.
	\newblock {\em Mathematics of operations research}, 3(3):177--188.
	
	\bibitem[Nemhauser et~al., 1978]{nemhauser1978analysis}
	Nemhauser, G.~L., Wolsey, L.~A., and Fisher, M.~L. (1978).
	\newblock An analysis of approximations for maximizing submodular set
	functions—i.
	\newblock {\em Mathematical programming}, 14(1):265--294.
	
	\bibitem[Schrijver, 2000]{schrijver2000combinatorial}
	Schrijver, A. (2000).
	\newblock A combinatorial algorithm minimizing submodular functions in strongly
	polynomial time.
	\newblock {\em Journal of Combinatorial Theory, Series B}, 80(2):346--355.
	
	\bibitem[Sviridenko et~al., 2017]{sviridenko2017optimal}
	Sviridenko, M., Vondr{\'a}k, J., and Ward, J. (2017).
	\newblock Optimal approximation for submodular and supermodular optimization
	with bounded curvature.
	\newblock {\em Mathematics of Operations Research}, 42(4):1197--1218.
	
	\bibitem[Vondr{\'a}k, 2013]{vondrak2013symmetry}
	Vondr{\'a}k, J. (2013).
	\newblock Symmetry and approximability of submodular maximization problems.
	\newblock {\em SIAM Journal on Computing}, 42(1):265--304.
	
\end{thebibliography}

\end{document}